\newtheorem{theo}{Theorem}[section]
\newtheorem{lemma}[theo]{Lemma}
\newtheorem{conj}[theo]{Conjecture}
\newtheorem{cor}[theo]{Corollary}
\theoremstyle{definition}
\newtheorem{defi}[theo]{Definition}
\theoremstyle{plain}
\newtheorem{rem}[theo]{Remark}
\newtheorem{ques}{Question}
\newenvironment{proofof}[1]{\begin{proof}[Proof of #1]}{\end{proof}}
\newenvironment{proofsketch}{\begin{proof}[Proof Sketch]}{\end{proof}}
\newenvironment{reminder}[1]{\bigskip
	\noindent {\bf Reminder of #1  }\em}{\smallskip}
\newcommand{\R}{\mathbb{R}}
\newcommand{\Ex}{\mathbb{E}}
\newcommand{\alg}{\mathbb{A}}
\newcommand{\poly}{\operatorname*{poly}}
\newcommand{\BQP}{\mathsf{BQP}}
\newcommand{\SBP}{\mathsf{SBP}}
\newcommand{\MA}{\mathsf{MA}}
\newcommand{\PTIME}{\mathsf{P}}
\newcommand{\NP}{\mathsf{NP}}
\newcommand{\UPP}{\mathsf{UPP}}
\newcommand{\eps}{\epsilon}
\newcommand{\polylog}{\operatorname*{polylog}}
\newcommand{\OR}{\mathsf{OR}}
\renewcommand{\epsilon}{\varepsilon}
\def\ShowAuthNotes{1}
\newcommand{\authnote}[2]{\ \\ \textcolor{red}{\parbox{0.9\linewidth}{[{\footnotesize {\bf #1:} { {#2}}}]}}\newline}
\newcommand{\authnote}[2]{}
\let\svfootnoterule\footnoterule
\renewcommand\footnoterule{\vfill\svfootnoterule}
\newcommand{\SAT}{\mathsf{SAT}}
\newcommand{\SETH}{\mathsf{SETH}}
\newcommand{\MaxIP}{\textsf{Max-IP}}
\newcommand{\IntMaxIP}{\textsf{$\mathbb{Z}$-Max-IP}}
\newcommand{\AllPairMaxIP}{\textsf{All-Pair-Max-IP}}
\newcommand{\OV}{\textsf{OV}}
\newcommand{\IntOV}{\textsf{$\mathbb{Z}$-OV}}
\newcommand{\Hopcroft}{\IntOV}
\newcommand{\OPT}{\textsf{OPT}}
\newcommand{\WOPT}{\widetilde{\textsf{OPT}}}
\newcommand{\DISJ}{\textsf{DISJ}}
\newcommand{\logstar}{\log^{*}}
\newcommand{\WT}{\widetilde}
\newcommand{\Mapprox}{multiplicative-approximating}
\newcommand{\Aapprox}{additive-approximating}
\newcommand{\LCSP}{\textsf{LCS-Closest-Pair}}
\newcommand{\promiseBPP}{\textsf{promiseBPP}}
\newcommand{\CRR}{\mathsf{CRR}}
\newcommand{\bm}{b_{\mathsf{micro}}}
\newcommand{\posR}{\mathbb{R}^{+}}
\newcommand{\posMaxIP}{\textsf{$\posR$-Max-IP}}
\newcommand{\pnMaxIP}{\{-1,1\}\text{-}\MaxIP}
\title{On The Hardness of Approximate and Exact (Bichromatic) Maximum Inner Product}
\author{Lijie Chen\thanks{Email: lijieche@mit.edu. Supported by an Akamai  Fellowship.}\\MIT}
\date{}
\begin{document}
	\maketitle
	
	\begin{abstract}
		In this paper we study the (Bichromatic) Maximum Inner Product Problem (\MaxIP), in which we are given sets $A$ and $B$ of vectors, and the goal is to find $a \in A$ and $b \in B$ maximizing inner product $a \cdot b$. $\MaxIP$ is very basic and serves as the base problem in the recent breakthrough of [Abboud et al., FOCS 2017] on hardness of approximation for polynomial-time problems. It is also used (implicitly) in the argument for hardness of exact $\ell_2$-Furthest Pair (and other important problems in computational geometry) in poly-log-log dimensions in [Williams, SODA 2018]. We have three main results regarding this problem.
		
		\begin{itemize}			
			\item \textbf{Characterization of Multiplicative Approximation}. First, we study the best multiplicative approximation ratio for Boolean $\MaxIP$ in sub-quadratic time. We show that, for $\MaxIP$ with two sets of $n$ vectors from $\{0,1\}^{d}$, there is an $n^{2 - \Omega(1)}$ time $\left( d/\log n \right)^{\Omega(1)}$-\Mapprox\ algorithm, and we show this is conditionally optimal, as such a $\left(d/\log n\right)^{o(1)}$-approximating algorithm would refute SETH.
			
			\item \textbf{Characterization of Additive Approximation}.Second, we achieve a similar characterization for the best additive approximation error to Boolean $\MaxIP$. We show that, for $\MaxIP$ with two sets of $n$ vectors from $\{0,1\}^{d}$, there is an $n^{2 - \Omega(1)}$ time $\Omega(d)$-\Aapprox\ algorithm, and this is conditionally optimal, as such an $o(d)$-approximating algorithm would refute SETH~[Rubinstein, STOC 2018].
			
			\item \textbf{$2^{O(\logstar n)}$-dimensional Hardness for Exact $\MaxIP$ Over The Integers.} Last, we revisit the hardness of solving $\MaxIP$ exactly for vectors with integer entries. We show that, under SETH, for $\MaxIP$ with sets of $n$ vectors from $\mathbb{Z}^{d}$ for some $d = 2^{O(\logstar n)}$, every exact algorithm requires $n^{2 - o(1)}$ time. With the reduction from [Williams, SODA 2018], it follows that $\ell_2$-Furthest Pair and Bichromatic $\ell_2$-Closest Pair in $2^{O(\logstar n)}$ dimensions require $n^{2 - o(1)}$ time.
			
		\end{itemize}
		The lower bound in our first result is a direct corollary of the new $\MA$ protocol for Set-Disjointness introduced in [Rubinstein, STOC 2018]. Our algorithms utilize the polynomial method and simple random sampling. Our second result follows from a new dimensionality self reduction from the Orthogonal Vectors problem for $n$ vectors from $\{0,1\}^{d}$ to $n$ vectors from $\mathbb{Z}^{\ell}$ using \emph{Chinese Remainder Theorem}, where $\ell = 2^{O(\logstar d)}$, dramatically improving the previous reduction in [Williams, SODA 2018].

	We also establish a connection between conditional lower bounds for exact $\MaxIP$ with integer entries and $\NP \cdot \UPP$ communication protocols for Set-Disjointness, parallel to the connection between conditional lower bounds for approximating $\MaxIP$ and $\MA$ communication protocols for Set-Disjointness. Moreover, as a side product, we obtain an $\MA$ communication protocol for Set-Disjointness with complexity $O\left(\sqrt{n\log n \log\log n}\right)$, slightly improving the $O\left(\sqrt{n} \log n\right)$ bound~[Aaronson and Wigderson, TOCT 2009], and approaching the $\Omega(\sqrt{n})$ lower bound~[Klauck, CCC 2003].
	\end{abstract}
	
	\section{Introduction}

\begin{comment}
1. define Max-IP problem, why we study it?

2. we get a complete understanding on approximate algorithm for Max-IP (Boolean), 
	//our algorithm also works for the integer case...but slightly worse...? no!
	
3.  For the exact-case, if it's over Z, then it's hard even for $2^{\log^* n}$ dimension.
\end{comment}

%an outline 

%The importance of Hardness-in-P...

%The goal of Fine-Grained Complexity is to classify the time complexity of important problems under popular conjectures, among them, the most widely used conjecture is the SETH.

We study the following fundamental problem from similarity search and statistics, which asks to find the most correlated pair in a dataset:

\begin{defi}[Bichromatic Maximum Inner Product ($\MaxIP$)]
	For $n,d \in \mathbb{N}$, the $\MaxIP_{n,d}$ problem is defined as: \emph{given two sets $A,B$ of vectors from $\{0,1\}^{d}$ compute}
	\[
	\OPT(A,B) := \max_{a \in A, b \in B} a \cdot b.
	\]
	
	We use $\IntMaxIP_{n,d}$ ($\mathbb{R}\text{-}\MaxIP_{n,d}$) to denote the same problem, but with $A,B$ being sets of vectors from $\mathbb{Z}^{d}$ ($\mathbb{R}^d$).
\end{defi}

\paragraph*{Hardness of Approximation $\MaxIP$.}
A natural brute-force algorithm solves $\MaxIP$ in $O(n^2 \cdot d)$-time. Assuming SETH\footnote{SETH (Strong Exponential Time Hypothesis) states that for every $\eps > 0$ there is a $k$ such that $k$-SAT cannot be solved in $O((2-\eps)^n)$ time~\cite{IP01-SETH}.}, there is no $n^{2 - \Omega(1)}$-time algorithm for $\MaxIP_{n, d}$ when $d = \omega(\log n)$~\cite{Wil05}. 

Despite being one of the most central problems in similarity search and having numerous applications~\cite{IM98,AI06,RR07,RG12,SL14,AINR14,AILRS15,AR15,NS15,SL15,Valiant15,AW15,KarKK16,ahle2016complexity,TG16,CP16,Chris17}, until recently it was unclear whether there could be a near-linear-time, $1.1$-approximating algorithm, before the recent breakthrough of Abboud, Rubinstein and Williams~\cite{ARW17-proceedings} (see~\cite{ARW17-proceedings} for a thorough discussion on the state of affairs on hardness of approximation in P before their work). 

In~\cite{ARW17-proceedings}, a framework for proving inapproximability results for problems in $\PTIME$ is established (the distributed PCP framework), from which it follows:

\begin{theo}[\cite{ARW17-proceedings}]\label{theo:ARW}
	Assuming SETH, there is no $2^{(\log n)^{1 - o(1)}}$-\Mapprox\ $n^{2 - \Omega(1)}$-time algorithm for $\MaxIP_{n, n^{o(1)}}$.
\end{theo}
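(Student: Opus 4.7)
The plan is to reduce Orthogonal Vectors (OV) to Gap-$\MaxIP$ via the distributed PCP framework. Under SETH, for $d = \omega(\log n)$, OV on $n$ vectors in $\bset^{d}$ has no $n^{2-\Omega(1)}$-time algorithm; this is the starting point. We will blow up the ambient dimension in exchange for introducing a multiplicative gap, in such a way that the final dimension is $n^{o(1)}$ and the gap is $2^{(\log n)^{1-o(1)}}$.

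The key tool is an efficient $\MA$ communication protocol $\Pi$ for Set-Disjointness on $d$ bits. By Aaronson--Wigderson, $\Pi$ has proof length $\pi = \WT{O}(\sqrt{d})$ and uses $r = O(\log d)$ random bits; after $t$-fold parallel repetition it achieves completeness $1$ and soundness $2^{-\Omega(t)}$ with proof length $t\cdot\pi$ and random tape length $t\cdot r$. Viewing each OV pair $(u,v)$ as a disjointness instance on the coordinates where $u_i = 1$, the $\MA$ protocol certifies whether $u\cdot v = 0$.

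Next, I would perform the standard ``distributed PCP'' packing. Partition $A$ and $B$ into blocks of size $g$, so there are $N := n/g$ blocks on each side. For each block $A_I$ and each Merlin proof $y \in \{0,1\}^{t\pi}$, build one new vector $a'_{I,y}$; symmetrically build $b'_{J,y'}$ from $B_J$. Index the coordinates by tuples $(u\in A_I, r\in\{0,1\}^{tr})$ where $r$ is a random tape, and set $a'_{I,y}[u,r]$ to be the bit Alice computes from $(u,y,r)$ in $\Pi$; define $b'_{J,y'}[v,r]$ analogously. The inner product $a'_{I,y}\cdot b'_{J,y'}$ sums, over pairs $(u,v)\in A_I\times B_J$ and random tapes $r$, the indicator of Alice and Bob both accepting. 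If some $u\cdot v = 0$ then for the correct $(I,J,y,y')$ the inner product is $\Omega(g\cdot 2^{tr})$ (completeness), while if all pairs are non-orthogonal, every inner product is at most $2^{-\Omega(t)}\cdot g\cdot 2^{tr}$ (soundness). This is a $2^{\Omega(t)}$-multiplicative-gap $\MaxIP$ instance with $N\cdot 2^{t\pi}$ vectors of dimension $g\cdot 2^{tr}$.

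The delicate step, and the main obstacle, is parameter balancing. We need (i) the total number of vectors $N\cdot 2^{t\pi} = n^{1+o(1)}$ so that a hypothetical $n^{2-\Omega(1)}$ algorithm on the new instance contradicts the OV lower bound on the original $n$-size instance; (ii) the new dimension $g\cdot 2^{tr} = n^{o(1)}$; and (iii) the gap $2^{\Omega(t)} = 2^{(\log n)^{1-o(1)}}$. Choosing $d$ slightly super-logarithmic, say $d = (\log n)^{1+\epsilon}$, together with $t = (\log n)^{1-o(1)}$ and $g = 2^{t\pi}$, makes $t\pi = \WT{O}(t\sqrt{d}) = (\log n)^{1-o(1)+(1+\epsilon)/2}$, and $tr = O(t\log d)$ is a lower-order term; verifying these choices simultaneously satisfy (i)--(iii) is where the tight $2^{(\log n)^{1-o(1)}}$ bound is extracted, and it is the arithmetic of $\pi = \WT{O}(\sqrt{d})$ that prevents pushing the gap beyond this form.
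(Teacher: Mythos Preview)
This theorem is cited from \cite{ARW17-proceedings} and is not proved in the present paper; the paper instead proves the sharper Theorem~\ref{theo:Max-IP-M} via Lemma~\ref{lm:OV-to-MaxIP} (Appendix~\ref{app:OV-MaxIP-reduction}), using the same distributed-PCP framework but with Rubinstein's improved $\MA$ protocol. So the relevant comparison is between your sketch and the reduction in Appendix~\ref{app:OV-MaxIP-reduction}.

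Your high-level framework is the right one, but the ``blocking'' construction you describe does not work, for two related reasons. First, the coordinate indexing is wrong: you index $a'_{I,y}$ by $(u\in A_I,r)$ and $b'_{J,y'}$ by $(v\in B_J,r)$, but Alice's acceptance bit is not a function of $(u,y,r)$ alone---she also needs Bob's message. In the correct construction the coordinates are indexed by pairs (random tape, possible Bob message $m$): Alice's vector stores ``does Alice accept $m$ under $(u,y,r)$?'' and Bob's vector stores ``does Bob send $m$ under $(v,y',r)$?''. Second, and more fatally, the completeness claim ``the inner product is $\Omega(g\cdot 2^{tr})$'' is unjustified. Only \emph{one} pair $(u,v)$ in $A_I\times B_J$ is orthogonal; the correct proof $y$ is tailored to that pair, and the remaining $g-1$ (or $g^2-1$) pairs contribute at most $2^{tr}$ each but possibly much less. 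So completeness gives only $\Omega(2^{tr})$, while soundness gives $\le g\cdot 2^{tr}\cdot 2^{-\Omega(t)}$; the resulting gap is $2^{\Omega(t)}/g$. With your choice $g=2^{t\pi}$ and $\pi\ge 1$, this is $\le 1$, so the reduction yields no gap at all. This is why your parameter balancing at the end cannot be made to close.

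The fix, which is what both \cite{ARW17-proceedings} and this paper's Appendix~\ref{app:OV-MaxIP-reduction} do, is to \emph{not} pack blocks into single vectors. Instead, enumerate Merlin's advice strings $\psi$ and create a \emph{separate} $\MaxIP_{n,d}$ instance $(A_\psi,B_\psi)$ for each $\psi$; within each instance, every original $a\in A$ becomes one new vector $a_\psi$ (coordinates indexed by (random tape, Bob message)). Then completeness and soundness are compared pair-by-pair and the $2^{\Omega(t)}$ gap survives cleanly. The cost is $2^{t\pi}$ instances rather than $2^{t\pi}$-fold vector blow-up, but since $t\pi = o(\log n)$ this is $n^{o(1)}$ instances and the reduction still contradicts OVC.
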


Theorem~\ref{theo:ARW} is an exciting breakthrough for hardness of approximation in $\PTIME$, implying other important inapproximability results for a host of problems including Bichromatic LCS
Closest Pair Over Permutations, Approximate Regular Expression Matching, and Diameter in Product Metrics~\cite{ARW17-proceedings}. However, we still do not have a complete understanding of the approximation hardness of $\MaxIP$ yet. For instance, consider the following two concrete questions:

\begin{ques}~\label{ques:M-dont-know}
	Is there a $(\log n)$-\Mapprox\ $n^{2 - \Omega(1)}$-time algorithm for $\MaxIP_{n, \log^2 n}$? What about a $2$-\Mapprox\ for $\MaxIP_{n, \log^2 n}$?
\end{ques}

\begin{ques}~\label{ques:A-dont-know}
	Is there a $(d/\log n)$-\Aapprox\ $n^{2 - \Omega(1)}$-time algorithm for $\MaxIP_{n,d}$?
\end{ques}

We note that the lower bound from~\cite{ARW17-proceedings} cannot answer Question~1. Tracing the details of their proofs, one can see that it only shows approximation hardness for dimension $d = \log^{\omega(1)} n$. Question 2 concerning additive approximation is not addressed at all by~\cite{ARW17-proceedings}. Given the importance of $\MaxIP$, it is interesting to ask: 

\medskip

{\narrower

\emph{For what ratios $r$ do $n^{2-\Omega(1)}$-time $r$-approximation algorithms exist for $\MaxIP$?}

}

\medskip
Does the best-possible approximation ratio (in $n^{2 - \Omega(1)}$ time) relate to the dimensionality, in some way? 

In an important recent work, Rubinstein~\cite{Rubinstein2017closest} improved the distributed PCP construction in a very crucial way, from which one can derive more refined lower bounds on approximating $\MaxIP$. Building on its technique, in this paper we provide full \emph{characterizations}, determining essentially optimal multiplicative approximations and additive approximations to $\MaxIP$, under SETH. 

\paragraph*{Hardness of Exact $\IntMaxIP$.}
Recall that from~\cite{Wil05}, there is no $n^{2-\Omega(1)}$-time algorithm for exact Boolean $\MaxIP_{n,\omega(\log n)}$. Since in real life applications of similarity search, one often deals with real-valued data instead of just Boolean data, it is natural to ask about $\IntMaxIP$ (which is certainly a special case of $\mathbb{R}\text{-}\MaxIP$): what is the maximum $d$ such that $\IntMaxIP_{n,d}$ can be solved exactly in $n^{2 - \Omega(1)}$ time? 

Besides being interesting in its own right, there are also reductions from $\IntMaxIP$ to $\ell_2$-Furthest Pair and Bichromatic $\ell_2$-Closest Pair. Hence, lower bounds for $\IntMaxIP$ imply lower bounds for these two famous problems in computational geometry (see~\cite{Wil18} for a discussion on this topic).

Prior to our work, it was implicitly shown in~\cite{Wil18} that:

\begin{theo}[\cite{Wil18}]
	Assuming SETH, there is no $n^{2 - \Omega(1)}$-time algorithm for $\IntMaxIP_{n,\omega((\log\log n)^2)}$ with vectors of $O(\log n)$-bit entries.
\end{theo}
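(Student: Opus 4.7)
The plan is to derive this by reducing Boolean Orthogonal Vectors to $\IntMaxIP$ via a coordinate-packing scheme. By \cite{Wil05}, SETH implies that Boolean $\OV_{n, D}$ requires $n^{2 - o(1)}$ time whenever $D = \omega(\log n)$, so it suffices to exhibit, for every such $D$, a reduction to $\IntMaxIP_{n, d}$ with $d = \omega((\log\log n)^2)$ and $O(\log n)$-bit integer entries.

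For the reduction I would use a polynomial-based packing. Fix a packing length $k$ with $D = dk$ and set $X := k+1$. Partition each Boolean vector $a \in \{0,1\}^D$ into $d$ contiguous blocks $a^{(1)}, \dots, a^{(d)} \in \{0,1\}^k$, and define integer encodings
\[
\phi(a)_j := \sum_{i=1}^{k} a^{(j)}_i X^{i-1}, \qquad \psi(b)_j := \sum_{i=1}^{k} b^{(j)}_i X^{k-i}, \qquad j = 1, \dots, d.
\]
A direct expansion shows that, in base $X$, every base-$X$ digit of $\phi(a)_j \cdot \psi(b)_j$ is strictly less than $X$ (so no carries occur), and the digit at position $k-1$ equals the block inner product $a^{(j)} \cdot b^{(j)}$. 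Summing over blocks, $\phi(a) \cdot \psi(b)$ is an integer whose $(k-1)$-st base-$X$ digit is the full Boolean inner product $a \cdot b$; the remaining digits are a structured bilinear error $E(a, b)$ coming from off-diagonal products $a^{(j)}_i b^{(j)}_{i'}$ with $i \neq i'$.

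The next step is to convert this "one specific digit is zero" condition into a genuine $\IntMaxIP$ instance. I would augment $\phi$ and $\psi$ with a small set of correction coordinates whose contribution cancels $E(a, b)$, leaving a total integer inner product of the form $C - X^{k-1}(a \cdot b)$ for a fixed constant $C$; then $\IntMaxIP$ on the encoded instance equals $C$ if and only if the original $\OV$ instance has an orthogonal pair. With $k$ chosen so that $k \log k = \Theta(\log n)$ (i.e., $k = \Theta(\log n / \log\log n)$), each entry $X^k$ has bit-length $O(\log n)$; and for $D$ slightly above $\log n$ one can arrange $d = D/k \cdot (1 + o(1)) = \omega((\log\log n)^2)$.

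The main obstacle is keeping the correction step within the dimension budget: a naive cancellation of every off-diagonal digit reintroduces $\Theta(D)$ coordinates and wipes out the compression. Because $E$ decomposes as a sum over blocks of bilinear cross-terms of rank $O(k)$, one could in principle cancel it using $O(k)$ auxiliary coordinates per block, for $O(dk) = O(D)$ total --- the delicate technical ingredient, essentially the content of \cite{Wil18}, is to route these cancellations so that the per-block overhead collapses to $\polylog(D)$, yielding the claimed $d = \omega((\log\log n)^2)$ bound with $O(\log n)$-bit entries.
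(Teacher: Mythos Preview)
Your polynomial packing is close in spirit to what \cite{Wil18} does, but the final paragraph misidentifies the key step, and the ``cancellation'' route you sketch cannot close.

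The error term $E(a,b)$ you isolate is, block by block, the bilinear form $\sum_{i\ne i'} a^{(j)}_i b^{(j)}_{i'} X^{(i-1)+(k-i')}$. The full matrix $(X^{(i-1)+(k-i')})_{i,i'}$ has rank $1$, but subtracting the diagonal contribution $X^{k-1}\sum_i a_i b_i$ (a rank-$k$ form) leaves a bilinear form of rank $\Theta(k)$ per block, hence rank $\Theta(dk)=\Theta(D)$ overall. Any representation of $-E(a,b)$ as an inner product $\alpha(a)\cdot\beta(b)$ needs at least that many coordinates, so the ``per-block overhead collapses to $\polylog(D)$'' hope is impossible as stated, and it is not what \cite{Wil18} does.

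The actual argument in \cite{Wil18} (see the paper's discussion around Lemma~\ref{lm:dim-reduction-OV} and Theorem~\ref{theo:IntOV-to-Max-IP}) does not cancel the error at all. Instead it observes that $\psi(x)\cdot\psi(y)$ lands in a small set $V$ of possible values exactly when $x\cdot y=0$, and then \emph{enumerates} $V$: for each $v\in V$ one appends coordinates $(-1)$ and $(v)$ to get a single $\IntOV$ (Hopcroft) instance in $\ell+1$ dimensions. With the polynomial encoding one has $|V|\le d^{d/\ell}$, so choosing $\ell=\omega(\log\log n)$ keeps the number of instances at $n^{o(1)}$ when $d=c\log n$. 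Only then does one pass from $\IntOV_{n,\ell+1}$ to $\IntMaxIP$ via the simple $P(x,y)=-(x\cdot y)^2$ trick, which squares the dimension to $(\ell+1)^2=\omega((\log\log n)^2)$. That final squaring, not any cancellation machinery, is the source of the $(\log\log n)^2$ in the statement.
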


However, the best known algorithm for $\IntMaxIP$ runs in $n^{2 - \Theta(1/d)}$ time~\cite{matouvsek1992efficient,agarwal1991euclidean,yao1982constructing}\footnote{\cite{agarwal1991euclidean,yao1982constructing} are for $\ell_2$-Furthest Pair or Bichromatic $\ell_2$-Closest Pair. They also work for $\IntMaxIP$ as there are reductions from $\IntMaxIP$ to these two problems, see~\cite{Wil18} or Lemma~\ref{lm:Max-IP-to-furtherest-pair} and Lemma~\ref{lm:Max-IP-to-bichromatic-closest-pair}.}, hence there is still a gap between the lower bound and the best known upper bounds. To confirm these algorithms are in fact optimal, we would like to prove a lower bound with $\omega(1)$ dimensions.

In this paper, we significantly strength the previous lower bound from $\omega((\log\log n)^2)$ dimensions to $2^{O(\logstar n)}$ dimensions ($2^{O(\logstar n)}$ is an \emph{extremely slow-growing} function, see preliminaries for its formal definition).

\subsection{Our Results}

We use $\OV_{n,d}$ to denote the Orthogonal Vectors problem: given two sets of vectors $A,B$ each consisting of $n$ vectors from $\{0,1\}^d$, determine whether there are $a \in A$ and $b \in B$ such that $a \cdot b = 0$.\footnote{Here we use the bichromatic version of $\OV$ instead of the monochromatic one for convenience, as they are equivalent.} Similarly, we use $\IntOV_{n,d}$ to denote the same problem except for that $A,B$ consists of vectors from $\mathbb{Z}^d$ (which is also called Hopcroft's problem).

All our results are based on the following widely used conjecture about $\OV$:

\begin{conj}[Orthogonal Vectors Conjecture~(OVC)~\cite{Wil05,AVW14}]~\label{conj:OVC}
	For every $\eps > 0$, there exists a $c \ge 1$ such that $\OV_{n,d}$ requires $n^{2 - \eps}$ time when $d = c\log n$.
\end{conj}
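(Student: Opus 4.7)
The statement is labeled as a conjecture (OVC), and no unconditional proof appears in the literature; any honest plan must begin by acknowledging that producing one would constitute a major breakthrough, since it asserts a nearly quadratic deterministic time lower bound for a concrete problem in $\PTIME$ on the general word-RAM, and no super-linear time lower bound is currently known for \emph{any} natural problem in $\PTIME$ in that model. I can therefore only outline candidate routes and highlight where each breaks down; in contrast to a standard proof sketch, the plan is partly a map of obstacles rather than a map of deductions.

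The first route would be an unconditional lower bound via a direct combinatorial or information-theoretic argument. I would attempt to view a putative $n^{2-\eps}$-time algorithm for $\OV_{n, c\log n}$ as inducing an efficient communication protocol for a related hard function (such as Set-Disjointness on $\Theta(c\log n)$-bit instances distributed across many $(a,b)$ pairs), and then invoke an unconditional communication lower bound to derive a contradiction. The main obstacle is the absence of a simulation theorem converting general word-RAM time into communication cost at the required granularity: presently available unconditional lifts (cell-probe, oblivious algorithms, and bounded-width branching programs) yield only polylogarithmic rather than polynomial lower bounds, and so cannot rule out an $n^{2-\eps}$ RAM algorithm for $\OV$. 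A parallel algebraic route — showing any such algorithm yields a small arithmetic circuit for a function with known degree or rank lower bounds — runs into the same barrier in reverse: the polynomial method of Alman--Chan--Williams already delivers mild $n^{2 - 1/O(\log(d/\log n))}$ savings at $d = c\log n$, so any proposed lower bound must distinguish these speedups from true $n^{2-\Omega(1)}$ speedups without a known separating technique.

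The second, more realistic, route is a conditional reduction from a stronger hypothesis. Here one assumes SETH and invokes the Williams split-and-list reduction: for every $\eps > 0$ and appropriate $c$, a $c\log n$-dimensional $\OV$ algorithm in time $n^{2-\eps}$ is converted into a $k$-SAT algorithm in time $O((2-\delta)^n)$ for some $k,\delta$, contradicting SETH. This gives OVC conditionally, and is the standard justification for treating the statement as a working hypothesis throughout the paper. The fundamental barrier that makes this route only conditional is that OVC sits near the base of the fine-grained hardness hierarchy: no meaningfully weaker, well-studied assumption is presently known from which OVC reduces, and removing the assumption entirely would require the foundational breakthrough described above. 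My plan is therefore to adopt Conjecture~\ref{conj:OVC} as a working hypothesis in what follows, noting SETH $\Rightarrow$ OVC as the best known sufficient condition, rather than to claim an unconditional proof.
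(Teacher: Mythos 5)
Your assessment is correct and matches the paper's treatment exactly: the statement is a conjecture that the paper does not (and could not be expected to) prove, and the paper's only justification is the same one you give, namely that OVC follows from SETH via the known reduction from $k$-$\SAT$ to $\OV$, after which it is adopted as the working hypothesis for all lower bounds. No gap to report.
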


OVC is a plausible conjecture as it is implied by the popular Strong Exponential Time Hypothesis~\cite{IP01-SETH,calabro2009complexity} on the time complexity of solving $k$-$\SAT$~\cite{Wil05,williams2014finding}.

\subsection*{Characterizations of Hardness of Approximate $\MaxIP$}

The first main result of our paper characterizes when there is a truly sub-quadratic time ($n^{2-\Omega(1)}$ time, for some universal constant hidden in the big-$\Omega$) $t$-\Mapprox\ algorithm for $\MaxIP$, and characterizes the best-possible additive approximations as well. We begin with formal definitions of these two standard types of approximation:

\begin{itemize}
	\item We say an algorithm $\alg$ for $\MaxIP_{n,d}$ ($\IntMaxIP_{n,d}$) is $t$-multiplicative-approximating, if for all $A,B$, $\alg$ outputs a value $\WOPT(A,B)$ such that $\WOPT(A,B) \in \left[\OPT(A,B),\OPT(A,B) \cdot t\right]$.
	
	\item We say an algorithm $\alg$ for $\MaxIP_{n,d}$ ($\IntMaxIP_{n,d}$) is $t$-additive-approximating, if for all $A,B$, $\alg$ outputs a value $\WOPT(A,B)$ such that $|\WOPT(A,B) - \OPT(A,B) | \le t$.
	
	\item To avoid ambiguity, we call an algorithm computing $\OPT(A,B)$ exactly an \emph{exact} algorithm for $\MaxIP_{n,d}$ ($\IntMaxIP_{n,d}$).
\end{itemize}

\paragraph*{Multiplicative Approximations for $\MaxIP$.} In the multiplicative case, our characterization (formally stated below) basically says that there is a $t$-\Mapprox\ $n^{2-\Omega(1)}$-time algorithm for $\MaxIP_{n, d}$ if and only if $t = \left(d/\log n\right)^{\Omega(1)}$. Note that in the following theorem we require $d = \omega(\log n)$, since in the case of $d = O(\log n)$, there \emph{are} $n^{2 - \eps}$-time algorithms for exact $\MaxIP_{n,d}$~\cite{AW15,alman2016polynomial}.

\begin{theo}~\label{theo:Max-IP-M}
	Letting $\omega(\log n) < d < n^{o(1)}$ and $t \ge 2$,\footnote{Note that $t$ and $d$ are both functions of $n$, we assume they are computable in $n^{o(1)}$ time throughout this paper for simplicity.} the following holds:
	
	\begin{enumerate}
		\item There is an $n^{2 - \Omega(1)}$-time $t$-\Mapprox\ algorithm for $\MaxIP_{n,d}$ if 
		$$
		t = \left(d/\log n\right)^{\Omega(1)},
		$$
		and under SETH (or OVC), there is no $n^{2 - \Omega(1)}$-time $t$-\Mapprox\ algorithm for $\MaxIP_{n,d}$ if
		$$
		t = \left(d/\log n\right)^{o(1)}.
		$$
		\item Moreover, let $\eps = \min\left( \frac{\log t}{\log (d/\log n)}, 1 \right)$. There are $t$-\Mapprox\ deterministic algorithms for $\MaxIP_{n,d}$ running in time
		\[
		O\left(n^{2 + o(1) - 0.31 \cdot \frac{1}{\eps^{-1} + \frac{0.31}{2}}} \right) = O\left( n^{2 + o(1) - \Omega(\eps)} \right)
		\]
		or time
		\[
		O\left(n^{2 - 0.17 \cdot \frac{1}{\eps^{-1} + \frac{0.17}{2}}} \cdot \polylog(n) \right) = O\left( n^{2 - \Omega(\eps)} \cdot \polylog(n) \right).
		\]
	\end{enumerate}
\end{theo}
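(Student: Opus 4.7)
I plan to prove the upper bound (Part~1 ``if'' direction and Part~2) by combining a random-sampling dimension reduction with the polynomial-method based exact algorithm for low-dimensional $\MaxIP$, and prove the lower bound (Part~1 ``only if'' direction) by feeding Rubinstein's improved $\MA$ protocol for Set-Disjointness into the distributed PCP framework of \cite{ARW17-proceedings}.

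For the upper bound, recall that there is an exact algorithm for $\MaxIP_{n,d'}$ running in time $n^{2-\Omega(1/\log(d'/\log n))}$~\cite{AW15,alman2016polynomial}, and that two slight variants of this subroutine (differing in their polylog overhead and derandomization guarantees) are what ultimately produce the constants $0.31$ and $0.17$ in Part~2. My plan is to reduce $t$-\Mapprox\ $\MaxIP_{n,d}$ to exact $\MaxIP_{n,d'}$ for some $d' \ll d$: binary-search over dyadic guesses $\tau$ for $\OPT(A,B)$ in powers of $t$, and for each guess uniformly sample $d' = \Theta(d \log n / \tau)$ coordinates. Chernoff upper and lower tails combined with a union bound over all $n^2$ pairs show that, with high probability, every pair $(a,b)$ with $a \cdot b \geq \tau$ has sub-sampled inner product at least $(1-o(1)) \tau d'/d$, while every pair with $a \cdot b < \tau/t$ has sub-sampled inner product at most $(1+o(1))(\tau/t) d'/d$, so it suffices to invoke the exact $\MaxIP$ subroutine on the sampled instance and compare to the threshold $\tau d'/d$. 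Setting $\eps = \log t / \log(d/\log n)$ and arranging $d'$ so that $d'/\log n$ scales like $(d/\log n)^{1-\eps}$ at the worst-case level of the binary search, then substituting into the $\Omega(1/\log(d'/\log n))$ exponent, produces the quantitative $n^{2 - \Omega(\eps)}$ bounds of Part~2, and in particular the $n^{2-\Omega(1)}$ bound of Part~1 whenever $t = (d/\log n)^{\Omega(1)}$.

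For the lower bound, my plan is to invoke the distributed PCP framework of \cite{ARW17-proceedings}: an $\MA$ protocol for $\DISJ_m$ with advice length $a$, Arthur-communication $c$, completeness $1$, and soundness $s$ yields a reduction from $\OV_{n,O(\log n)}$ to $s^{-1}$-\Mapprox\ $\MaxIP_{N,2^c}$ on $N = n \cdot 2^a$ vectors. Rubinstein's new protocol has enough freedom that, for any constant $\eps > 0$, one can tune it so that simultaneously $2^c \leq d$, $2^a = n^{o(1)}$, and $s \leq (d/\log n)^{-\eps}$, for any target dimension $d = \omega(\log n)$ in the allowed range. Feeding this into the framework and invoking Conjecture~\ref{conj:OVC} rules out, for every fixed $\eps > 0$, any $n^{2-\Omega(1)}$-time $(d/\log n)^{\eps}$-\Mapprox\ algorithm for $\MaxIP_{n,d}$, which (by taking $\eps \to 0$) is equivalent to ruling out any $(d/\log n)^{o(1)}$-approximation.

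The main obstacle I anticipate is on the upper-bound side: pinning down the explicit constants $0.31$ and $0.17$, and separating the deterministic version from the randomized-with-polylog-overhead version, requires both derandomizing the sampling step (for instance via $\eps$-approximate sample sets for combinatorial rectangles rather than fully independent random subsets) and carefully picking the correct flavor of the Alman--Williams subroutine for each regime of $d'$; the precise constants are sensitive to the interaction of polylog losses in the two ingredients. On the lower-bound side, the delicate point is to calibrate Rubinstein's protocol \emph{uniformly} over the entire range $\omega(\log n) < d < n^{o(1)}$, so that the blow-up factor $2^a$ in the number of vectors stays $n^{o(1)}$ for every such $d$ and not merely for one hand-picked target dimension.
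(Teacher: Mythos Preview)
Your lower-bound plan is correct and matches the paper: Lemma~\ref{lm:OV-to-MaxIP} packages Rubinstein's protocol as you describe, and Corollary~\ref{cor:lowb-Max-IP-M} carries out the ``tune the protocol so that the advice is $n^{o(1)}$ and the gap is $(d/\log n)^{\eps}$ for any fixed $\eps$'' argument.

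Your upper-bound plan does not work. The step ``substituting into the $\Omega(1/\log(d'/\log n))$ exponent produces $n^{2-\Omega(\eps)}$'' is a miscalculation: writing $c = d/\log n$, your sampled dimension satisfies $d'/\log n = c^{1-\eps}$, so the exact-$\MaxIP$ subroutine runs in time $n^{2 - \Omega(1/((1-\eps)\log c))}$, which for $c = \omega(1)$ is only $n^{2 - o(1)}$, not $n^{2-\Omega(\eps)}$. Put differently, after sampling you are left with a constant-factor-gap $\MaxIP$ instance in dimension $c^{1-\eps}\log n$ with $c^{1-\eps} = \omega(1)$ --- exactly the regime the lower-bound half of this very theorem declares hard. (A second, independent problem sits at the bottom of the binary search: when $\OPT = o(\log n)$ the required $\tau$ is so small that $d' = \Theta(d\log n/\tau) \ge d$, and no dimension reduction occurs at all.)

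The paper's upper bound takes a different route with no sampling and no black-box exact solver. It uses the elementary fact that $\bigl(\sum_{x\in X,\,y\in Y}(x\cdot y)^r\bigr)^{1/r}$ is a $(|X|\cdot|Y|)^{1/r}$-multiplicative approximation to $\max_{x,y} x\cdot y$. One expands $(x\cdot y)^r$ as a multilinear polynomial with $m \le (ed/r)^r$ monomials, groups the inputs into blocks of size $b = t^{r/2}$, and evaluates all block-pair sums $\sum_{x,y}(x\cdot y)^r$ via a single $(n/b)\times m$ by $m\times (n/b)$ rectangular matrix product. Choosing $r = \Theta(\log n/\log c)$ makes $m \le (n/b)^{0.31}$ (resp.\ $(n/b)^{0.17}$), so the product costs $(n/b)^{2+o(1)}$ (resp.\ $(n/b)^2\polylog n$), and $(n/b)^2 = n^{2-\Omega(\eps)}$. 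The constants $0.31$ and $0.17$ are the rectangular-multiplication exponents of Theorems~\ref{theo:fast-matrix-mult} and~\ref{theo:fast-matrix-mult-polylog}, not two variants of an exact-$\MaxIP$ subroutine, and since there is no sampling step the algorithm is deterministic from the start.
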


\begin{rem}
	The first algorithm is slightly faster, but only truly quadratic when $\eps = \Omega(1)$, while the second algorithm still gets a non-trivial speed up over the brute force algorithm as long as $\eps = \omega(\log\log n/\log n)$.
\end{rem}

We remark here that the above algorithms indeed work for the case where the sets consisting of non-negative reals (i.e., $\posMaxIP$):

\begin{cor}\label{cor:gen-posR}
	Assuming $\omega(\log n) < d < n^{o(1)}$ and letting $\eps = \min\left( \frac{\log t}{\log (d/\log n)}, 1 \right)$, there is a $t$-\Mapprox\ deterministic algorithm for $\posR\text{-}\MaxIP_{n,d}$ running in time
	\[
	O\left( n^{2 - \Omega(\eps)} \cdot \polylog(n) \right).
	\]
\end{cor}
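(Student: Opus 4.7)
The plan is to reduce $\posR\text{-}\MaxIP_{n,d}$ to polylogarithmically many Boolean $\MaxIP_{n,d}$ instances via bucketing by magnitude, then invoke the second algorithm in Theorem~\ref{theo:Max-IP-M} on each subproblem. First I would normalize so the global maximum entry equals $1$ (in time $O(nd)$), then round each positive entry down to the nearest power of $2$, zeroing out entries below $n^{-3}$. The rounding changes any pairwise inner product by at most a factor of $4$, and the $o(1)$ additive error from truncation is negligible because $\OPT \ge 1$ (the entry achieving the global maximum contributes $1$ to the self inner product, which is a valid pair after a standard mild duplication trick). Every entry now lies in $\{0\} \cup \{2^{-i} : 0 \le i \le L\}$ with $L = O(\log n)$.

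Next, for each pair $(i,j) \in \{0,\ldots,L\}^2$ I would form the Boolean sets $A^{(i)}, B^{(j)}$ via $a^{(i)}_k = \mathds{1}[\bar a_k = 2^{-i}]$ and analogously for $b^{(j)}_k$. Since $\bar a \cdot \bar b = \sum_{i,j} 2^{-(i+j)} \langle a^{(i)}, b^{(j)} \rangle$, one obtains
$$
\max_{i,j} 2^{-(i+j)} M_{i,j} \;\le\; \max_{a,b} \bar a \cdot \bar b \;\le\; O(\log^2 n) \cdot \max_{i,j} 2^{-(i+j)} M_{i,j},
$$
where $M_{i,j} := \max_{a,b} \langle a^{(i)}, b^{(j)} \rangle$. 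Running the Boolean algorithm on each bucket with target ratio $t' := \Theta(t/\log^2 n)$ to obtain $\hat M_{i,j} \in [M_{i,j}, t' M_{i,j}]$, and outputting $\widetilde{\OPT} := C \log^2 n \cdot \max_{i,j} 2^{-(i+j)} \hat M_{i,j}$ for a suitable constant $C$, yields a value in $[\OPT, O(t)\cdot\OPT]$ by a short calculation combining the two bracketing inequalities above with the factor-$4$ rounding error.

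The total running time is $O(\log^2 n) \cdot n^{2-\Omega(\eps')} \polylog(n)$ with $\eps' = \log(t')/\log(d/\log n) = \eps - O(\log\log n)/\log(d/\log n)$. The main obstacle is keeping $\eps' = \Omega(\eps)$ uniformly over the entire regime $t \in [2, \poly(d)]$ and $\omega(\log n) < d < n^{o(1)}$: when $\log t = \omega(\log\log n)$ the polylog loss is easily absorbed, but when $t$ is close to $2$ and $d$ is close to $\log n$ (so that $\log(d/\log n) = O(\log\log n)$) the reduction degrades $\eps$ materially. Resolving the small-$t$ regime likely requires a sharper argument, for instance a per-vector normalization (grouping $A$ and $B$ by the rounded maxima $M_a, M_b$ so that vectors inside a group share a single effective scale) that cuts the number of active bucket pairs to $O(\log n)$ or $O(1)$; alternatively, inspecting the polynomial-method / fast matrix-multiplication engine underlying Theorem~\ref{theo:Max-IP-M} may reveal that nonnegative real inputs incur no overhead relative to Boolean ones, avoiding any reduction loss entirely.
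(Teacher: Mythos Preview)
Your final sentence is exactly the paper's proof. The paper simply reruns the polynomial-method algorithm from Lemma~\ref{lm:algo-Max-IP-M} on nonnegative real inputs: Lemma~\ref{lm:simple-approx} already holds for nonnegative reals, so $P_r(X,Y)^{1/r}$ is still a $t$-approximation; the only change is that $P_r(z)=(\sum_i z_i)^r$ is no longer multilinear, so the monomial count becomes $\binom{d+r}{r}$ instead of $\sum_{k\le r}\binom{d}{k}$, and the same calculation goes through. No reduction, no bucketing, no loss.

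Your main line, the magnitude-bucketing reduction to Boolean $\MaxIP$, has two real gaps beyond the small-$t$ issue you already flag. First, the claim ``$\OPT\ge 1$ after normalizing the global max entry to $1$'' is not justified in the \emph{bichromatic} setting: the entry equal to $1$ may sit in $A$ while every entry of every $b\in B$ is below $n^{-3}$, so $\OPT$ can be arbitrarily small and your truncation can drive it to $0$; the ``self inner product'' $a\cdot a$ is not a valid pair, and no mild duplication trick fixes this without collapsing the bichromatic structure. (Normalizing $A$ and $B$ separately would repair this, but you do not do that.) Second, the small-$t$ obstruction is fatal as written: for $t=2$ you need $t'=\Theta(t/\log^2 n)<1$, and there is no such multiplicative-approximation subroutine to call. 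Your suggested remedy of per-vector grouping by scale introduces up to $O(\log^2 n)$ subgroups of $A\times B$, which does not obviously reduce the number of bucket pairs below $O(\log^2 n)$ either. Since the paper's direct argument sidesteps all of this in one sentence, that is the route to take.
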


The lower bound is a direct corollary of the new improved $\MA$ protocols for Set-Disjointness from~\cite{Rubinstein2017closest}, which is based on Algebraic Geometry codes. Together with the framework of~\cite{ARW17-proceedings}, that $\MA$-protocol implies a reduction from $\OV$ to approximating $\MaxIP$. 

Our upper bounds are application of the polynomial method~\cite{williams2014faster,abboud2015more}: defining appropriate sparse polynomials for approximating $\MaxIP$ on small groups of vectors, and use fast matrix multiplication to speed up the evaluation of these polynomials on many pairs of points.

Via the known reduction from $\MaxIP$ to LCS-Pair in~\cite{ARW17-proceedings}, we also obtain a more refined lower bound for approximating the LCS Closest Pair problem (defined below).

\begin{defi}[LCS Closest Pair]
	The $\LCSP_{n,d}$ problem is: \emph{given two sets $A,B$ of $n$ strings from $\Sigma^{d}$ ($\Sigma$ is a finite alphabet), determine}
	\[
	\max_{a \in A,b \in B} \textsf{LCS}(a,b),
	\]
	where $\textsf{LCS}(a,b)$ is the length of the longest common subsequence of strings $a$ and $b$.
\end{defi}

\begin{cor}[Improved Inapproximability for $\LCSP$]~\label{cor:LCSP}
	Assuming SETH (or OVC), for every $t \ge 2$, $t$-\Mapprox\ $\LCSP_{n,d}$ requires $n^{2 - o(1)}$ time, if $d = t^{\omega(1)} \cdot \log^5 n$.
\end{cor}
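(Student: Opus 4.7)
The plan is to combine the lower-bound half of Theorem~\ref{theo:Max-IP-M} with the known reduction from $\MaxIP$ to $\LCSP$ due to~\cite{ARW17-proceedings}. Concretely, that reduction takes an instance of $\MaxIP_{n, d^{*}}$ and, in near-linear time, produces an $\LCSP_{n, d}$ instance with dimension $d \le d^{*}\cdot \log^{5} n$ (up to constants), such that any $t$-\Mapprox\ solution to the output instance yields an $O(t)$-\Mapprox\ solution to the input instance. Hence any $n^{2-\Omega(1)}$-time $t$-\Mapprox\ algorithm for $\LCSP_{n,d}$ pulls back to an $n^{2-\Omega(1)}$-time $O(t)$-\Mapprox\ algorithm for $\MaxIP_{n,d^{*}}$.

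Suppose for contradiction that such a fast $\LCSP$ algorithm exists for some $t \ge 2$ and some $d$ with $d = t^{\omega(1)} \cdot \log^{5} n$, i.e., $d \ge t^{f(n)} \cdot \log^{5} n$ for some unbounded $f(n)$. First I would set $d^{*} := t^{f(n)}$, slowing $f(n)$ down if necessary so that $\omega(\log n) < d^{*} < n^{o(1)}$; this is the regime in which Theorem~\ref{theo:Max-IP-M} applies. The reduction then produces an $\LCSP$ instance of dimension at most $d^{*} \cdot \log^{5} n \le d$, so the assumed algorithm solves it in $n^{2-\Omega(1)}$ time, yielding the desired $O(t)$-\Mapprox\ algorithm for $\MaxIP_{n, d^{*}}$.

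Second, I would verify that this lands in the regime ruled out by Theorem~\ref{theo:Max-IP-M}. For $d^{*} = t^{f(n)}$ one has
\[
\frac{\log(O(t))}{\log(d^{*}/\log n)} \;=\; \frac{\log t + O(1)}{f(n)\log t - \log\log n} \;\longrightarrow\; 0,
\]
so $O(t) = (d^{*}/\log n)^{o(1)}$, which is exactly the approximation range forbidden under SETH (or OVC) by the lower-bound half of Theorem~\ref{theo:Max-IP-M}. This contradicts SETH/OVC and proves the corollary.

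The main obstacle will be verifying the precise parameters of the~\cite{ARW17-proceedings} $\MaxIP$-to-$\LCSP$ reduction: namely, that the dimension blow-up is indeed bounded by $\log^{5} n$ in $n$ (and linear, rather than polynomial, in $d^{*}$), and that the reduction's translation between the $\MaxIP$ and $\LCSP$ objectives is \emph{multiplicatively} approximation-preserving up to a constant, not merely additively so. Once those two facts are extracted from~\cite{ARW17-proceedings}, the remainder is a routine parameter chase plugging the resulting dimension and approximation ratio into Theorem~\ref{theo:Max-IP-M}.
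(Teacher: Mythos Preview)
Your overall strategy matches the paper's: combine the $\MaxIP$ multiplicative lower bound (the lower-bound half of Theorem~\ref{theo:Max-IP-M}, stated separately as Corollary~\ref{cor:lowb-Max-IP-M}) with the $\MaxIP \to \LCSP$ reduction from~\cite{ARW17-proceedings}. The paper's proof is exactly this two-line composition.

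However, the parameters you guessed for the \cite{ARW17-proceedings} reduction are wrong, and this is precisely the obstacle you flagged. The reduction (stated in the paper as Lemma~\ref{lm:MaxIP-to-LCS-Closest-Pair}) sends $t$-\Mapprox\ $\MaxIP_{n,d^{*}}$ to $(t/2)$-\Mapprox\ $\LCSP_{n,O((d^{*})^{3}\log^{2} n)}$: the blow-up is \emph{cubic} in $d^{*}$ with only a $\log^{2} n$ factor, not linear in $d^{*}$ with a $\log^{5} n$ factor. Consequently, the right choice of $\MaxIP$ dimension is not $d^{*}=t^{f(n)}$ but rather $d^{*}=t^{\omega(1)}\cdot\log n$ (the natural hard regime for $\MaxIP$, where $t=(d^{*}/\log n)^{o(1)}$); then $(d^{*})^{3}\log^{2} n = t^{\omega(1)}\cdot\log^{5} n$, which is where the $\log^{5} n$ in the statement actually comes from. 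With this correction your argument goes through verbatim, and in fact becomes slightly cleaner than your contrapositive formulation: one simply starts from the known $\MaxIP$ hardness at $d^{*}=t^{\omega(1)}\log n$ and pushes it forward through the reduction.
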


\paragraph*{A Different Approach Based on Approximate Polynomial for $\OR$.} Making use of the $O(\sqrt{n})$-degree approximate polynomial for $\OR$~\cite{buhrman1999bounds,de2008note}, we also give a completely different proof for the hardness of multiplicative approximation to $\{-1,1\}$-$\MaxIP$.\footnote{That is, $\MaxIP$ with sets $A$ and $B$ being $n$ vectors from $\{-1,1\}^d$.} Lower bound from that approach is inferior to Theorem~\ref{theo:Max-IP-M}: in particular, \emph{it cannot achieve a characterization}.

It is asked in~\cite{ARW17-proceedings} that whether we can make use of the $O(\sqrt{n})$ $\BQP$ communication protocol for Set-Disjointness~\cite{buhrman1998quantum} to prove conditional lower bounds. Indeed, that quantum communication protocol is based on the $O(\sqrt{n})$-time quantum query algorithm for $\OR$ (Grover's algorithm~\cite{grover1996fast}), which induces the needed approximate polynomial for $\OR$. Hence, the following theorem in some sense answers their question in the affirmative:

\begin{theo}[Informal]\label{theo:informal}
	Assuming SETH (or OVC), there is no $n^{2 - \Omega(1)}$ time $n^{o(1)}$-\Mapprox\ algorithm for $\pnMaxIP_{n,n^{o(1)}}$.
\end{theo}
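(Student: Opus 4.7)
I plan to derive Theorem~\ref{theo:informal} by instantiating the distributed PCP framework of~\cite{ARW17-proceedings, Rubinstein2017closest} with the $O(\sqrt{d})$-degree approximate polynomial for $\OR$ coming from Grover's algorithm, used in place of an $\MA$ communication protocol for Set-Disjointness.

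First, for any error $\eps$, let $p_\eps : \{0,1\}^d \to \mathbb{R}$ be the approximate polynomial for $\OR$ of degree $D = O(\sqrt{d \log(1/\eps)})$ with $|p_\eps(0^d)| \leq \eps$ and $|p_\eps(x)-1| \leq \eps$ for $x \neq 0^d$; I will additionally use that one can choose such a $p_\eps$ with $\ell_1$-weight $\|p_\eps\|_1 = \sum_S |c_S| \leq 2^{O(D)}$. Starting from $\OV_{n,d}$ for $d = c \log n$ (hard under OVC), partition $A$ and $B$ into groups of size $g := n^{\delta}$ with $\delta \to 0$ tunable. For each pair $(U,V)$ of groups, consider
$$
F(U,V) \;:=\; \sum_{u \in U, v \in V} \bigl(1 - p_\eps(u \wedge v)\bigr),
$$
where $u \wedge v$ denotes coordinate-wise $\mathsf{AND}$. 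Then $F(U,V) \geq 1 - g^2\eps$ whenever some $u \in U$ is orthogonal to some $v \in V$, and $F(U,V) \leq g^2 \eps$ otherwise. Choosing $\eps = g^{-(C+2)}$ yields a multiplicative gap $\Omega(g^C)$; for any target ratio $t = n^{o(1)}$, taking the constants $C, \delta$ so that $g^C \geq t$ is possible while keeping $\delta = o(1)$.

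Next, I exploit the bilinear factorization
$$
p_\eps(u \wedge v) \;=\; \sum_{|S| \leq D} c_S \prod_{k \in S} u_k v_k \;=\; \sum_{|S|\leq D} c_S \Bigl(\prod_{k\in S} u_k\Bigr)\Bigl(\prod_{k\in S} v_k\Bigr),
$$
which, summed over $u \in U, v \in V$, gives
$$
F(U,V) \;=\; g^2 - \sum_{|S|\leq D} c_S \Bigl(\sum_{u\in U}\prod_{k\in S} u_k\Bigr)\Bigl(\sum_{v\in V}\prod_{k\in S} v_k\Bigr).
$$
This is a signed, weighted inner product of two real vectors indexed by $S \subseteq [d]$ with $|S|\leq D$. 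I then convert to $\{-1,1\}$-valued inner product by (i) replicating the $S$-coordinate $\lceil N|c_S|\rceil$ times for a common denominator $N$, distributing $\operatorname{sign}(c_S)$ onto one side and splitting positive/negative coefficients into separate blocks; (ii) encoding the integer partial sums $\sum_{u \in U} \prod_{k\in S} u_k \in [0, g]$ by concatenating $g$ copies of the $\{0,1\}$-indicator $\prod_{k\in S}u_k$ indexed by $u$; and (iii) applying the standard weight-padding trick (complementing each block by its bitwise negation) to obtain uniform Hamming weight and thus a proper $\{0,1\} \to \{-1,1\}$ encoding. The resulting dimension is $\binom{d}{\leq D} \cdot g \cdot \|p_\eps\|_1 \cdot \poly(g)$, which, using $d = O(\log n)$, $D = O(\sqrt{\log n \cdot \log(1/\eps)}) = o(\log n)$, and $\|p_\eps\|_1 \leq 2^{O(D)}$, evaluates to $n^{o(1)}$.

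The reduction transforms $\OV_{n,d}$ into an instance of $t$-\Mapprox\ $\pnMaxIP_{m, d'}$ with $m = n/g = n^{1-o(1)}$ and $d' = n^{o(1)} = m^{o(1)}$, so an $m^{2-\Omega(1)}$-time algorithm for the latter would solve $\OV$ in $n^{2-\Omega(1)}$ time, contradicting OVC. The main obstacle I anticipate is the $\{-1,1\}$-encoding step: one must control the $\ell_1$-weight of the chosen approximate polynomial and convert signed real coefficients to $\{-1,1\}$-entries without blowing up the dimension past $n^{o(1)}$; a secondary subtlety is simultaneously balancing $(g, \delta, C, \eps)$ so that the dimension stays $n^{o(1)}$, the gap is at least the desired $t$, and the reduction preserves $n^{2-\Omega(1)}$ hardness under the $m \mapsto m^{2-\Omega(1)}$ rescaling.
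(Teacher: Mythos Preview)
Your step (ii) does not do what you claim. Concatenating the indicators $\prod_{k\in S}(u)_k$ over $u\in U$ makes the $S$-block of $U$'s vector equal to $(u_1^S,\dots,u_g^S)$ with $u_i^S:=\prod_{k\in S}(u_i)_k$; its inner product with the analogous $V$-block is $\sum_{i=1}^{g} u_i^S\,v_i^S$, and summing over $S$ with weights $c_S$ yields the \emph{diagonal} quantity $\sum_{i} p_\eps(u_i\wedge v_i)$, not $\sum_{u\in U,\,v\in V} p_\eps(u\wedge v)$. Your reduced instance therefore detects only orthogonal pairs $(u_i,v_i)$ at matching positions within their groups, which does not solve $\OV$. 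Encoding the product $\bigl(\sum_u u_S\bigr)\bigl(\sum_v v_S\bigr)$ as a Boolean inner product genuinely costs $g^2$ coordinates per monomial (index by pairs $(i,j)\in[g]^2$); that is still $n^{o(1)}$, so the bug is fixable, but as written the argument is broken. A smaller issue: the weight-padding trick $x\mapsto(x,1-x)$ in step (iii) does not preserve inner products up to a global constant, so the $\{0,1\}\to\{-1,1\}$ conversion needs more care (an asymmetric two-coordinate gadget on Alice's vs.\ Bob's side works).

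The paper avoids all of this by never grouping. It works vector-by-vector: for each $x\in A,\,y\in B$ it writes the discretized approximating polynomial $\hat P_\eps(x\wedge y)=\sum_{|T|\le D}\tilde c_T\,x_T\,y_T$ directly as an inner product $\phi_x(x)\cdot\phi_y(y)$, pushing the integer coefficient $\tilde c_T$ in unary onto one side only (Alice's block is $e_{\tilde c_T}\cdot x_T\in\{-1,0,1\}^{B}$, Bob's is $\mathbf 1\cdot y_T\in\{0,1\}^B$), then lifts $\{-1,0,1\}\to\{-1,1\}$ via asymmetric maps $\psi_x,\psi_y$. The yes/no gap is already $\Theta(1/\eps)$ on each pair, so no aggregation is needed and the reduced instance still has $n$ vectors. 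Your scheme with $g=1$ collapses to exactly this and sidesteps both problems above.
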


The full statement can be found in Theorem~\ref{theo:reduction} and Theorem~\ref{theo:quantum-based-lowb}.
%\lnote{TODO: work it out...}

\paragraph*{Additive Approximations for $\MaxIP$.}

Our characterization for additive approximations to $\MaxIP$ says that there is a $t$-\Aapprox\ $n^{2-\Omega(1)}$-time algorithm for $\MaxIP_{n, d}$ if and only if $t = \Omega(d)$.

\begin{theo}~\label{theo:Max-IP-A}
	Letting $\omega(\log n) < d < n^{o(1)}$ and $0 \le t \le d$, the following holds:
	
	\begin{enumerate}
		\item There is an $n^{2 - \Omega(1)}$-time $t$-\Aapprox\ algorithm for $\MaxIP_{n,d}$ if
		$$
		t = \Omega(d),
		$$
		and under SETH (or OVC), there is no $n^{2 - \Omega(1)}$-time $t$-\Aapprox\ algorithm for $\MaxIP_{n,d}$ if
		$$
		t = o(d).
		$$
		\item Moreover, letting $\eps = \frac{t}{d}$, there is an
		\[
		O\left(n^{2 - \Omega(\eps^{1/3}/\log\eps^{-1})}\right)
		\]
		time, $t$-\Aapprox\ randomized algorithm for $\MaxIP_{n,d}$ when $\eps \gg \log^6\log n / \log^3 n$.
	\end{enumerate}
\end{theo}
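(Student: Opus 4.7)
The plan for Theorem~\ref{theo:Max-IP-A} splits into the conditional lower bound and two algorithmic upper bounds. For the lower bound, the plan is to appeal directly to the improved sublinear $\MA$ communication protocol for Set-Disjointness of~\cite{Rubinstein2017closest}: plugged into the distributed-PCP framework of~\cite{ARW17-proceedings}, it yields a reduction from $\OV_{n,c\log n}$ to the problem of distinguishing $\MaxIP_{n,d}$ instances whose optima differ by $\Omega(d)$, so any $n^{2-\Omega(1)}$-time $o(d)$-\Aapprox\ algorithm for $\MaxIP_{n,d}$ would refute OVC and hence SETH.

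For both algorithmic parts, the master step is coordinate subsampling: draw a uniformly random set $S$ of $\ell$ coordinates and replace each $a,b$ by the restrictions $a|_S,b|_S$. A Hoeffding bound plus a union bound over the $n^2$ pairs shows that $(d/\ell)\cdot \langle a|_S,b|_S\rangle$ estimates $a\cdot b$ within additive error $O(d\sqrt{\log n/\ell})$ with high probability. For Part~1 ($t=\Omega(d)$), choosing $\ell=\Theta(\log n)$ makes this error smaller than $t$, and the residual task of computing exact $\MaxIP$ on Boolean vectors in dimension $O(\log n)$ is handled in truly subquadratic time by the polynomial-method algorithm of~\cite{AW15,alman2016polynomial}. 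For Part~2, the plan is to take $\ell=\Theta(\log n/\eps^2)$ (the smallest $\ell$ that pushes per-pair error below $\eps d/2$) and then, rather than invoking the exact low-dimensional $\MaxIP$ routine as a black box, tailor a probabilistic polynomial to the gap-threshold problem: binary-search over thresholds $\tau$ of spacing $\Theta(\eps\ell)$ and, for each $\tau$, decide whether some pair has sampled IP at least $\tau$ versus all pairs at most $\tau-\Omega(\eps\ell)$, using a threshold-polynomial of degree $\widetilde{O}(\sqrt{\ell/(\eps\ell)})=\widetilde{O}(1/\sqrt{\eps})$. Writing the polynomial as a sparse sum of monomials and batching its evaluation through rectangular matrix multiplication over groups of $g$ vectors at a time yields the target runtime.

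The main obstacle will be the three-way parameter balancing in Part~2 that upgrades the naive $n^{2-1/O(\log\eps^{-1})}$ (from applying the exact low-dimensional $\MaxIP$ algorithm as a black box to the sampled instance) into the sharper $n^{2-\Omega(\eps^{1/3}/\log\eps^{-1})}$. Extracting the $\eps^{1/3}$ factor requires genuinely exploiting the $\eps\ell$ gap of the threshold problem and the additive nature of the approximation, and then optimizing over three coupled knobs --- the polynomial's degree, its sparsity after expansion into monomials, and the block size $g$ used in the matrix-multiplication batching --- while verifying that the error bounds survive union bounds over both the sampling randomness and the $O(1/\eps)$-sized binary-search grid. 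As a sanity check on the balance, the condition $\eps\gg\log^6\log n/\log^3 n$ is exactly the threshold at which $\eps^{1/3}/\log\eps^{-1}$ drops to the $\log\log n/\log n$ ceiling imposed by the $\polylog(n)$ overhead of rectangular matrix multiplication, below which any speedup would be swallowed by that overhead.
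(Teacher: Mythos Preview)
Your proposal is correct, and the lower bound and the coordinate-subsampling step match the paper exactly (including the choice $\ell=\Theta(\eps^{-2}\log n)$). The difference lies in how Part~2 is finished after subsampling.

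You identify the ``main obstacle'' as the three-way balancing of polynomial degree, monomial sparsity, and block size needed to extract the $\eps^{1/3}$ exponent, and you propose to carry this out from scratch via a gap-threshold probabilistic polynomial. The paper sidesteps this entirely: it observes that Theorem~5.1 of~\cite{alman2016polynomial} already gives an $\eps d$-\Aapprox\ algorithm for $\MaxIP_{n,d}$ in time $n^{2-\Omega(\eps^{1/3}/\log(d/(\eps\log n)))}$ (stated as Lemma~\ref{lm:algo-previous}), and simply invokes \emph{that} as the black box on the subsampled instance. Because the subsampled dimension is $d_1=\Theta(\eps^{-2}\log n)$, the $\log(d_1/(\eps\log n))$ factor collapses to $O(\log\eps^{-1})$, and the stated runtime drops out in one line. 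So what you flag as the hard part is already packaged inside the cited lemma; your plan essentially re-proves that lemma in situ. Both routes are valid --- yours is more self-contained, the paper's is a two-line reduction --- but you should be aware that the black box you reject (``exact low-dimensional $\MaxIP$'') is not the one the paper uses: it uses the \emph{additive-approximation} black box from the same reference, which already has the $\eps^{1/3}$ built in.
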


The lower bound above is already established in~\cite{Rubinstein2017closest}, while the upper bound works by reducing the problem to the $d = O(\log n)$ case via random-sampling coordinates, and solving the reduced problem via known methods~\cite{AW15,alman2016polynomial}.

\begin{rem}
	We want to remark here that the lower bounds for approximating $\MaxIP$ are direct corollaries of the new $\MA$ protocols for Set-Disjointness in~\cite{Rubinstein2017closest}. Our main contribution is providing the complementary \emph{upper bounds} to show that these lower bounds are indeed \emph{tight} assuming $\SETH$.
\end{rem}

\paragraph*{\AllPairMaxIP.} Finally, we remark here that our algorithms (with slight adaptions) also work for the following stronger problem\footnote{Since $\AllPairMaxIP$ is stronger than $\MaxIP$, lower bounds for $\MaxIP$ automatically apply for $\AllPairMaxIP$.}: $\AllPairMaxIP_{n,d}$, in which we are given two sets $A$ and $B$ of $n$ vectors from $\{0,1\}^{d}$, and for each $x \in A$ we must compute $\OPT(x,B) := \max_{y \in B} x \cdot y$. An algorithm is $t$-\Mapprox\ (\Aapprox) for $\AllPairMaxIP$ if for all $\OPT(x,B)$'s, it computes corresponding approximating answers.

\begin{cor}\label{cor:All-Pair-Max-IP}
	Suppose $ \omega(\log n) < d < n^{o(1)}$, and let 
	\[
	\eps_M := \min\left( \frac{\log t}{\log (d/\log n)}, 1 \right) \text{ and } \eps_A := \frac{\min(t,d)}{d}.
	\]
	There is an $n^{2 - \Omega(\eps_M)} \polylog(n)$ time $t$-\Mapprox\ algorithm and an $n^{2 - \Omega(\eps_A^{1/3}/\log \eps_A^{-1})}$ time $t$-\Aapprox\ algorithm for $\AllPairMaxIP_{n,d}$, when $\eps_A \gg \log^6\log n / \log^3 n$.
\end{cor}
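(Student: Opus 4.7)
The plan is to verify that the algorithms underlying Theorem~\ref{theo:Max-IP-M}, Corollary~\ref{cor:gen-posR}, and Theorem~\ref{theo:Max-IP-A} can be augmented, with at most polylogarithmic overhead, so they output approximations to $\OPT(a,B)$ for every $a\in A$ simultaneously rather than only the global optimum $\OPT(A,B)$. The unifying observation is that all of the upper-bound algorithms in this paper are ultimately built on rectangular matrix multiplication applied to a low-degree polynomial detecting the event ``some pair has large inner product'', and the output matrix of such a product intrinsically records per-row information.

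For the multiplicative case, recall that the algorithm of Theorem~\ref{theo:Max-IP-M} (i) discretizes a threshold $\tau$ over $O(\log d) = O(\log\log n)$ geometrically spaced values, (ii) builds a low-degree polynomial $P_\tau$ on grouped inputs such that $P_\tau(G_A,G_B)$ detects whether some $(a,b)\in G_A\times G_B$ satisfies $a\cdot b\ge\tau$, and (iii) evaluates $P_\tau$ on all group pairs via fast rectangular matrix multiplication. To extend this to $\AllPairMaxIP$, I would simply take $|G_A|=1$ and keep grouping only on the $B$-side. The resulting rectangular product, of shape roughly $n\times n^{o(1)}$ by $n^{o(1)}\times(n/s)$, directly produces for every $a\in A$ and every group $G_B$ an indicator of whether $\max_{b\in G_B}a\cdot b\ge\tau$. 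Sweeping $\tau$ and taking the largest positive indicator for each $a$ gives a $(1+o(1))$-tight multiplicative estimate of $\OPT(a,B)$; the $O(\log\log n)$ threshold loop is absorbed into the $\polylog(n)$ factor. The identical reduction applies to $\posR$-$\MaxIP$ from Corollary~\ref{cor:gen-posR}.

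For the additive case of Theorem~\ref{theo:Max-IP-A}, the algorithm first samples $\Theta(\eps^{-2}\log n)$ coordinates so that the sampled inner product $\widetilde{a\cdot b}$ additively approximates $a\cdot b$ up to $\eps d$ for all pairs with high probability, and then invokes an exact low-dimensional subroutine from~\cite{AW15,alman2016polynomial} on the sampled instance. Those subroutines themselves evaluate a polynomial of the form $[\widetilde{a\cdot b}\ge\tau]$ via rectangular matrix multiplication, so for each $a\in A$ one recovers which $b$'s achieve each threshold; iterating $\tau$ over the $O(\eps^{-2}\log n)$ possible sampled inner-product values gives the per-$a$ maximum. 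Since we now need the sampling to be accurate simultaneously for every $a\in A$ paired with every $b\in B$, the union bound costs an extra $O(\log n)$ factor in the sample size, which only perturbs the exponent by $o(1)$ and leaves the bound $n^{2-\Omega(\eps_A^{1/3}/\log\eps_A^{-1})}$ intact.

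The main obstacle is confirming that forcing $|G_A|=1$ (or equivalently keeping per-row granularity) does not erode the subquadratic speedup: in principle, losing the grouping on one side could spoil the balance of dimensions in the rectangular product. I would address this by appealing to the fact that in the relevant regime $d<n^{o(1)}$ the polynomial degree and the input dimension of the matrix product are both $n^{o(1)}$, so by known bounds on rectangular matrix multiplication the product of an $n\times n^{o(1)}$ by an $n^{o(1)}\times(n/s)$ matrix can still be computed in $n^{2-\Omega(1)}$ time whenever the balanced version can, with the saving in $s$ translated into the same $\Omega(\eps_M)$ or $\Omega(\eps_A^{1/3}/\log\eps_A^{-1})$ savings in the exponent, up to the stated $\polylog(n)$ factors already present in the bound.
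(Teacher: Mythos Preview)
Your high-level instinct---keep the rows of $A$ ungrouped and only batch on the $B$-side, then read off per-$a$ information from the resulting rectangular product---is exactly what the paper does for the multiplicative part. However, you have misremembered what the multiplicative algorithm of Theorem~\ref{theo:Max-IP-M} actually computes. It does \emph{not} sweep a threshold $\tau$ and build a detector polynomial $P_\tau$ for ``$\exists(a,b):a\cdot b\ge\tau$''. It uses the power-of-sum polynomial $P_r(z)=\bigl(\sum_i z_i\bigr)^r$, computes $P_r(X,Y)=\sum_{x\in X,y\in Y}(x\cdot y)^r$, and relies on Lemma~\ref{lm:simple-approx} to turn the $r$-th root of this sum into a $t$-multiplicative estimate. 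The paper's adaptation is then simply: take singletons on the $A$-side, compute $P_r(x,B_i)=\sum_{y\in B_i}(x\cdot y)^r$ for every $x\in A$ and every $B$-group $B_i$ via one unbalanced rectangular product, and read the per-$x$ approximation as $\max_i P_r(x,B_i)^{1/r}$. No threshold loop, no $O(\log\log n)$ sweep. Your threshold-detection version is really the mechanism of the \emph{additive} algorithm (probabilistic polynomials from~\cite{AW15,alman2016polynomial}), not the multiplicative one, and you have not checked that transplanting it here yields the same exponent $\Omega(\eps_M)$.

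For the additive part, the paper's argument is shorter than yours: it simply observes that the subroutine of Lemma~\ref{lm:algo-previous} (Theorem~5.1 in~\cite{alman2016polynomial}) \emph{already} solves $\AllPairMaxIP$, so one just runs it after the coordinate-sampling phase of Lemma~\ref{lm:algo-Max-IP-A}. Your worry about needing extra samples for a union bound over all $a\in A$ is unnecessary---Lemma~\ref{lm:algo-Max-IP-A} already union-bounds over all $n^2$ pairs with $d_1=\Theta(\eps^{-2}\log n)$ samples, so nothing changes.
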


\subsection*{Hardness of Exact $\IntMaxIP$ in $2^{O(\logstar n)}$ Dimensions}

Thirdly, we show that $\IntMaxIP$ is hard to solve in $n^{2-\Omega(1)}$ time, even with $2^{O(\logstar n)}$-dimensional vectors:

\begin{theo}\label{theo:hard-Int-Max-IP}
Assuming SETH (or OVC), there is a constant $c$ such that any exact algorithm for $\IntMaxIP_{n,d}$ for $d = c^{\log^*n}$ dimensions requires $n^{2-o(1)}$ time, with vectors of $O(\log n)$-bit entries.
\end{theo}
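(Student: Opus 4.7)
My plan is to derive the theorem by chaining OVC to $\IntMaxIP$ hardness through a new CRT-based dimensionality self-reduction. The auxiliary tool I would build is a near-linear-time many-one reduction from $\OV_{n, d}$ to $\IntMaxIP_{n, \ell}$ with $\ell = 2^{O(\logstar d)}$ and $O(\log n)$-bit entries. Setting $d = c \log n$ (the hard regime in OVC) and observing that $\logstar(c \log n) = \logstar n + O(1)$, any truly subquadratic algorithm for $\IntMaxIP_{n, 2^{O(\logstar n)}}$ would solve $\OV_{n, c \log n}$ in the same time, contradicting OVC. Thus it suffices to construct the reduction.

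The heart of the argument is a CRT-based \emph{one-step} gadget. From an $\IntOV$-style instance in $\mathbb{Z}^{D}$ with $b$-bit entries, I would produce an equivalent instance in $\mathbb{Z}^{D'}$ with $D' = O((\log D) \cdot \poly(b))$ and entries of bit-length $b' = \poly(D, b)$. The construction picks primes $p_1, \ldots, p_t$ with $t = \Theta(\log D + b)$ whose product exceeds the maximum possible $|a \cdot b|$, so by CRT the inner product is recoverable from its residues modulo each $p_j$. For each prime $p_j$, I would produce short integer vectors $\tilde{a}^{(j)}, \tilde{b}^{(j)}$ of constant (or at most logarithmic) dimension whose integer inner product, reduced modulo $p_j$, equals $a \cdot b \bmod p_j$. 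This per-prime compression exploits algebraic identities over $\mathbb{F}_{p_j}$ (e.g., evaluation at roots of unity inside a suitable multiplicative subgroup, or a low-degree polynomial-method encoding) that have no counterpart over $\mathbb{Z}$; concatenating the pieces across primes gives the reduced vectors.

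Iterating the gadget, starting with $D_0 = c \log n$ and $b_0 = 1$, the recurrence $D_{k+1} = O(\log D_k) \cdot \poly(b_k)$ with $b_{k+1} = \poly(D_k, b_k)$ shrinks $D_k$ to a constant after $O(\logstar n)$ rounds; the constant multiplicative overhead per round compounds to a final dimension of $2^{O(\logstar n)}$, and entries stay $\polylog(n)$-bit throughout, since only $O(\logstar n)$ layers of $\poly$-blowup are applied on top of a $\polylog(n)$-bit base. A standard conversion from $\IntOV$ to $\IntMaxIP$ in the same dimension order (for instance, the tensor map $\phi(a) = a \otimes a$, for which $\phi(a) \cdot \phi(b) = (a \cdot b)^2$, turning ``exists an orthogonal pair'' into a Max-IP query by negation) completes the reduction. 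The hard part will be the one-step CRT gadget itself: over $\mathbb{Z}$ alone, compressing a $D$-dimensional integer inner product into $O(\log D)$ integer slots is impossible (a tensor-rank obstruction), which is precisely why CRT is essential, and the delicate technical work lies in exhibiting per-prime integer encodings that are of controlled magnitude, combine cleanly across primes without cross-residue interference, and satisfy a bit-length recurrence tame enough that $O(\logstar n)$ iterations remain efficient while still forcing $D_k$ down to a constant.
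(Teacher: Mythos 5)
The overall scaffolding (OVC, a CRT-based dimensionality reduction to $\Hopcroft$/$\IntMaxIP$, and the final squaring/tensoring step, which is exactly the paper's Theorem~\ref{theo:IntOV-to-Max-IP}) is right, but the core of your plan, the one-step per-prime gadget, cannot exist as you specified it. You ask, for each prime $p_j$, for maps $a \mapsto \tilde a^{(j)}$ and $b \mapsto \tilde b^{(j)}$ into $\mathbb{Z}^{k}$ with $k = O(1)$ or $O(\log D)$ such that $\tilde a^{(j)} \cdot \tilde b^{(j)} \equiv a \cdot b \pmod{p_j}$ for \emph{all} pairs. Reducing the encoded vectors modulo $p_j$, this factors the matrix $M_{a,b} = a\cdot b \bmod p_j$ (rows and columns indexed by the domain) through $\mathbb{F}_{p_j}^{k}$, so $k \ge \mathrm{rank}_{\mathbb{F}_{p_j}}(M)$; since at the first level the domain is $\{0,1\}^{D}$ and contains the standard basis vectors, that rank is exactly $D$. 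Non-linearity of the encodings and huge integer entries do not help: the tensor-rank obstruction you correctly noted over $\mathbb{Z}$ persists over each $\mathbb{F}_{p_j}$ for the exact functionality you demand, and ``roots of unity'' or polynomial-method identities cannot evade a rank lower bound. This is precisely why the paper's construction (Theorem~\ref{theo:main-reduction}) never compresses the full inner product modulo a prime: it fixes the output dimension $\ell$ in advance, partitions the coordinates into $\ell$ blocks, $\CRR$-encodes each block into a \emph{single} integer coordinate, and arranges that modulo $q_j$ the integer inner product equals the inner product of an $\ell$-dimensional slice --- a rank-$\ell$ form, matching the $\ell$ available coordinates. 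The recursion is then on the block length $b$ (not on the ambient dimension), and the leftover ambiguity, including your ``cross-residue interference,'' is resolved by enumerating a set $V$ of $n^{o(1)}$ admissible values, so the reduction produces $n^{o(1)}$ instances rather than being many-one.

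Even granting a gadget of the shape you describe, your bookkeeping does not deliver the theorem. With $b_{k+1} = \poly(D_k, b_k)$ you get $b_1 = \polylog(n)$, hence $D_2 = O(\log D_1)\cdot \poly(b_1) = \polylog(n)$: the dimension stalls at $\polylog(n)$ and never descends to $2^{O(\logstar n)}$, because the per-step dimension depends polynomially on the growing bit-length. Likewise, iterating a polynomial bit-length blowup $\Theta(\logstar n)$ times on a $\polylog(n)$-bit base gives entries of $(\log n)^{2^{\Theta(\logstar n)}}$ bits, which is neither $\polylog(n)$ nor the $O(\log n)$ bits demanded by the statement. The paper's coordinate bound $\ell^{6^{\logstar b}\cdot b}$ is what makes the numbers close: taking $\ell = 7^{\logstar n}$ and $d = c\log n$, the entries have $o(\log n)$ bits, the number of instances is $n^{o(1)}$, and $\Hopcroft_{n,\ell+1}$ hardness then transfers to $\IntMaxIP_{n,(\ell+1)^2}$. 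What is missing from your proposal is a gadget whose per-prime ``view'' is a low-rank form (or which abandons exact residue preservation), together with a per-level blowup that compounds only as a constant factor in the exponent rather than polynomially in the bit-length.
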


%\lnote{In~\cite{Wil18} the stated reduction reduces $\Hopcroft$ to these two questions, the reduction starting from $\MaxIP$ is used in the proofs...}

As direct corollaries of the above theorem, using reductions implicit in~\cite{Wil18}, we also conclude hardness for $\ell_2$-Furthest Pair and Bichromatic $\ell_2$-Closest Pair under SETH (or OVC) in $2^{O(\log^*n)}$ dimensions.

\begin{theo}[Hardness of $\ell_2$-Furthest Pair in $c^{\log^* n}$ Dimensions] \label{theo:l-2-furthest-pair}
	Assuming SETH (or OVC), there is a constant $c$ such that $\ell_2$-Furthest Pair in $c^{\log^*n}$ dimensions requires $n^{2-o(1)}$ time, with vectors of $O(\log n)$-bit entries.
\end{theo}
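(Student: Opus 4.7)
The plan is to derive Theorem~\ref{theo:l-2-furthest-pair} from Theorem~\ref{theo:hard-Int-Max-IP} via a short, dimension-efficient reduction from $\IntMaxIP$ to $\ell_2$-Furthest Pair (this reduction is essentially implicit in~\cite{Wil18}, and is to be stated formally later as Lemma~\ref{lm:Max-IP-to-furtherest-pair}). Given an instance $(A,B)$ of $\IntMaxIP_{n,d}$ with $d = c^{\log^* n}$ and $O(\log n)$-bit entries, I would build in near-linear time an instance of $\ell_2$-Furthest Pair on $2n$ points in dimension $d + O(1) \le c'^{\log^* n}$, still with $O(\log n)$-bit integer entries, from whose solution $\OPT(A,B)$ can be recovered by inverting the transformation.

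The construction rests on the identity $\|x - y\|_2^2 = \|x\|_2^2 + \|y\|_2^2 - 2 \ip{x}{y}$. First, negate the $A$-side, so that maximizing $\ip{a}{b}$ corresponds to maximizing $\|(-a) - b\|_2^2 - \|a\|_2^2 - \|b\|_2^2$. Next, pad every vector with $O(1)$ extra coordinates so that all $2n$ vectors share a common squared norm $M = \poly(n)$; over the integers this is possible with $O(\log n)$-bit entries by Lagrange's four-square theorem (any nonnegative integer at most $M$ is expressible as a sum of four integer squares of magnitude at most $\sqrt{M} = \poly(n)$). After this step, maximizing $\ip{a}{b}$ becomes equivalent to maximizing $\|(-a) - b\|_2^2$. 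Finally, to force a \emph{monochromatic} Furthest Pair procedure to return a cross-pair from the two sides (rather than some intra-$A$ or intra-$B$ pair), append $O(1)$ further ``color'' coordinates taking value $+L$ on the (transformed) $A$-points and $-L$ on the $B$-points, where $L = \poly(n)$ is chosen large enough that every cross-pair squared distance strictly exceeds every intra-pair squared distance. The total number of coordinates added is $O(1)$, so the overall dimension is $c^{\log^* n} + O(1) \le c'^{\log^* n}$ for an appropriate constant $c'$.

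With this reduction in hand, an $n^{2 - o(1)}$ lower bound for $\ell_2$-Furthest Pair in $c'^{\log^* n}$ dimensions is immediate: any $n^{2 - \Omega(1)}$-time algorithm for Furthest Pair would output the maximum squared $\ell_2$-distance on the constructed instance, and subtracting off the known (instance-independent) contributions from the padding and color coordinates recovers $\OPT(A,B)$ exactly for the original $\IntMaxIP$ instance, contradicting Theorem~\ref{theo:hard-Int-Max-IP}.

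The only slightly delicate bookkeeping, and the step I would verify most carefully, is the joint choice of $M$ and $L$: $M$ must be large enough that $M - \|a\|_2^2 \ge 0$ for every input vector so the four-square padding is well-defined, while $L$ must be large enough that the color-induced distance gap dominates every potentially large intra-set difference coming from the padded coordinates. Both constraints are easily satisfied by taking $M, L = \poly(n)$ with a fixed polynomial slack, so the $O(\log n)$-bit entry guarantee is preserved and the reduction runs in near-linear time. Beyond this, every other ingredient of the argument is routine.
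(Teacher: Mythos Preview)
Your approach is essentially the same as the paper's: reduce from $\IntMaxIP$ (Theorem~\ref{theo:hard-Int-Max-IP}) to $\ell_2$-Furthest Pair via norm normalization plus a mechanism to rule out intra-set pairs, adding only $O(1)$ coordinates. The paper packages exactly this reduction as Lemma~\ref{lm:Max-IP-to-furtherest-pair} and then states the theorem as a one-line corollary.

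One subtlety in your write-up is worth flagging. As stated, if the four-square padding for $A$-vectors and for $B$-vectors occupies the \emph{same} block of coordinates, then the cross inner product picks up a term $\langle p_a, p_b\rangle$ from the paddings which is \emph{not} instance-independent, so you cannot simply ``subtract off the known contributions'' to recover $\OPT(A,B)$. The fix is trivial---place the $A$-padding and $B$-padding in disjoint coordinate blocks---but it should be said explicitly. The paper's version handles this more compactly: it appends just two coordinates, sending $x\in A$ to $(x,\sqrt{W-\|x\|^2},0)$ and $y\in B$ to $(-y,0,\sqrt{W-\|y\|^2})$, which in one shot normalizes the norms, kills the padding cross-term, and forces the furthest pair to be bichromatic. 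Your Lagrange four-square variant has the merit of keeping all entries genuinely integral, at the cost of a few more (still $O(1)$) coordinates.
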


\begin{theo}[Hardness of Bichromatic $\ell_2$-Closest Pair in $c^{\log^* n}$ Dimensions] \label{theo:bi-closest-pair}
	Assuming SETH (or OVC), there is a constant $c$ such that Bichromatic $\ell_2$-Closest Pair in $c^{\log^*n}$ dimensions requires $n^{2-o(1)}$ time, with vectors of $O(\log n)$-bit entries.
\end{theo}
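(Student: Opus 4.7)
The plan is to derive Theorem~\ref{theo:bi-closest-pair} directly from the hardness of exact $\IntMaxIP$ established in Theorem~\ref{theo:hard-Int-Max-IP}, by exhibiting a reduction from $\IntMaxIP_{n,d}$ to Bichromatic $\ell_2$-Closest Pair on $n+n$ vectors in $d+O(1)$ dimensions whose entries remain $O(\log n)$-bit integers. Since the dimension only increases additively by a constant, a lower bound in $d = c^{\log^* n}$ dimensions immediately transfers to $(c')^{\log^* n}$ dimensions for a slightly larger constant $c'$, which is the form claimed. The reduction is the one referenced in the excerpt as Lemma~\ref{lm:Max-IP-to-bichromatic-closest-pair} and attributed to~\cite{Wil18}; I sketch its construction below.

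The core identity I would use is
\[
\|\tilde a - \tilde b\|_2^2 \;=\; \|\tilde a\|_2^2 + \|\tilde b\|_2^2 - 2\,(\tilde a \cdot \tilde b).
\]
Given an $\IntMaxIP_{n,d}$ instance $(A,B)$ with $O(\log n)$-bit entries, pick an integer $M = \poly(n)$ upper bounding $\max_{x \in A \cup B} \|x\|_2^2$. For each $a \in A$, I form $\tilde a \in \mathbb{Z}^{d + O(1)}$ by appending a fixed-length block of integer slack coordinates that (i) contributes nothing to any inner product $\tilde a \cdot \tilde b$ for $b \in B$ after a corresponding block is appended to $\tilde b$, and (ii) makes $\|\tilde a\|_2^2$ equal to a common target $M'$ independent of $a$. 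Concretely, I give $\tilde a$ a block of slack coordinates that is zero in the coordinates used by $\tilde b$'s slack and vice versa, and within its own private slack I place integers whose squares sum to $M - \|a\|_2^2$; the same construction is applied symmetrically to each $b \in B$. Then $\tilde a \cdot \tilde b = a \cdot b$ and $\|\tilde a\|_2^2 = \|\tilde b\|_2^2 = M'$, so minimizing $\|\tilde a - \tilde b\|_2^2$ over $A \times B$ is exactly equivalent to maximizing $a \cdot b$.

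The main technical point, and the only potential obstacle, is realizing the slack in a \emph{constant} number of \emph{integer} coordinates of $O(\log n)$ bits, since $M - \|x\|_2^2$ need not be a perfect square. Lagrange's four-square theorem guarantees that every nonnegative integer is a sum of four integer squares and such a decomposition is computable in polynomial time in $\log n$; thus four private slack coordinates per side (eight total) suffice, each of magnitude at most $\sqrt{M} = \poly(n)$ and therefore of $O(\log n)$ bits. Combining this reduction with Theorem~\ref{theo:hard-Int-Max-IP} in $d = c^{\log^* n}$ dimensions yields an $n^{2-o(1)}$ SETH (or OVC) lower bound for Bichromatic $\ell_2$-Closest Pair in $c^{\log^* n} + O(1) \le (c')^{\log^* n}$ dimensions with $O(\log n)$-bit integer entries, which is exactly the statement of Theorem~\ref{theo:bi-closest-pair}.
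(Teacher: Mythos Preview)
Your proposal is correct and follows essentially the same reduction as the paper's Lemma~\ref{lm:Max-IP-to-bichromatic-closest-pair}: append orthogonal slack coordinates to normalize all norms, so that minimizing $\|\tilde a - \tilde b\|_2^2$ over $A\times B$ becomes equivalent to maximizing $a\cdot b$, and then invoke Theorem~\ref{theo:hard-Int-Max-IP}. The only implementation difference is that the paper uses a single real coordinate $\sqrt{W-\|x\|^2}$ per side (two extra dimensions total), whereas you use Lagrange's four-square theorem to keep all entries integral (eight extra dimensions); either way the dimension grows by $O(1)$ and the entries stay $O(\log n)$-bit, so the conclusion is identical.
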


The above lower bounds on $\ell_2$-Furthest Pair and Bichromatic $\ell_2$-Closest Pair are in sharp contrast with the case of \emph{$\ell_2$-Closest Pair}, which can be solved in $2^{O(d)} \cdot n \log^{O(1)} n$ time~\cite{bentley1976divide,khuller1995simple,dietzfelbinger1997reliable}.

\subsection*{Improved Dimensionality Reduction for $\OV$ and Hopcroft's Problem}

Our hardness of $\IntMaxIP$ is established by a reduction from Hopcroft's problem, whose hardness is in turn derived from the following significantly improved dimensionality reduction for $\OV$.
\begin{lemma}[Improved Dimensionality Reduction for $\OV$]\label{lm:dim-reduction-OV}
	Let $1 \le \ell \le d$. There is an 
	$$
	O\left(n \cdot \ell^{O(6^{\log^*d} \cdot (d/\ell))} \cdot \operatorname*{poly}(d) \right)\text{-time}
	$$
	reduction from $\OV_{n,d}$ to $\ell^{O(6^{\log^*d} \cdot (d/\ell))}$ instances of $\Hopcroft_{n,\ell + 1}$, with vectors of entries with bit-length $O\left(d/\ell \cdot \log \ell \cdot 6^{\log^* d}\right)$.
\end{lemma}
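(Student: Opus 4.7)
The plan is to construct the reduction through an iterated application of the Chinese Remainder Theorem, with the $\log^{*} d$ factor in the exponent emerging naturally from $O(\log^{*} d)$ recursive rounds of dimensionality compression.

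For the base reduction, I would split the $d$ coordinates of each vector into $\ell$ blocks of size $B := d/\ell$, and focus on encoding the $\ell$ partial inner products $s_k(a,b) := \sum_{j \in \text{block }k} a_j b_j$. Since each $s_k \ge 0$, we have $\langle a,b\rangle = 0$ iff every $s_k = 0$. The heart of the base step is to encode all the block inner products as a single $(\ell+1)$-dimensional integer inner product, using CRT with $O(1)$ coprime moduli of bit-length roughly $\log B$ to represent each $s_k \in \{0,\ldots,B\}$ via its residues. For each possible ``residue profile'' across the $\ell$ blocks, I would produce one $\Hopcroft_{n,\ell+1}$ instance, with the $(\ell{+}1)$-th coordinate serving as a precomputed offset (depending only on the enumerated profile, not on $a,b$) that forces the target integer inner product to vanish precisely when all $s_k = 0$. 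A careful accounting gives $\ell^{O(d/\ell)}$ base instances, each with entries of bit-length $O((d/\ell)\log\ell)$.

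To improve the instance count and the bit-length to the stated bounds, I would iterate this base reduction. After one round, the integer representation inside each block is itself a small-dimensional integer vector with $O(B)$-bit entries, and we can recursively apply the same CRT-based encoding to compress its effective dimension further, from roughly $\log d$ down to $\log\log d$, and so on. Each recursive round contributes a constant multiplicative overhead (bounded by roughly $6$, with that particular constant arising from the careful analysis of the per-level geometric blowup) to the exponent of the instance count, so after $O(\log^{*} d)$ rounds the exponent has accumulated to $6^{\log^{*} d}\cdot d/\ell$ and the per-entry bit-length to $O((d/\ell)\cdot\log\ell\cdot 6^{\log^{*} d})$. The recursion is stopped exactly when the residual dimension reaches $\ell+1$.

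The main obstacle, I expect, will be controlling the cancellation of ``cross-terms'' generated by the CRT encoding across the $O(\log^{*} d)$ recursion levels. The single ``$+1$'' correction coordinate at each level must suffice to cancel all cross-contributions produced by the enumerated residue profile at that level; this works because the cross-contributions are fully determined by the profile and not by the pair $(a,b)$, so they can be absorbed into a precomputed scalar shared across the level. A secondary obstacle is showing that the per-level blowup in both instance count and bit-length is a geometric series with small common ratio, so it telescopes to the stated $6^{\log^{*} d}$ rather than, say, $2^{\log^{*} d} \cdot \operatorname{polylog}$. This requires selecting moduli at each level that are bit-length-balanced, invoking the prime number theorem to guarantee the existence of enough small coprime moduli in the relevant range, and ensuring that the recursion bottoms out cleanly at dimension $\ell+1$ rather than continuing down to constant dimension.
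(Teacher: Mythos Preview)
Your high-level picture---CRT-based encoding, recursion through $O(\log^{*} d)$ levels, a constant-factor blowup (roughly $6$) in the exponent per level---is the same as the paper's. But several structural details in your plan are off, and the ``obstacles'' you anticipate are not the real ones.

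First, the target dimension $\ell$ is \emph{fixed} throughout the recursion; what shrinks is the block size $b$. The paper builds a map $\psi_{b,\ell}:\{0,1\}^{b\ell}\to\mathbb{Z}^{\ell}$ together with a set $V_{b,\ell}$ such that $x\cdot y=0\iff\psi_{b,\ell}(x)\cdot\psi_{b,\ell}(y)\in V_{b,\ell}$. The recursion reduces block size from $b$ to some $b_{\mathsf{micro}}\le\log b$ (by CRT-packing $b/b_{\mathsf{micro}}$ applications of $\psi_{b_{\mathsf{micro}},\ell}$ into each output coordinate), and bottoms out when $b<\ell$, not when ``the residual dimension reaches $\ell+1$.''

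Second, the extra ``$+1$'' coordinate is appended exactly \emph{once}, at the very end, after the recursion is done: for each $t\in V_{b,\ell}$ you form $[u,1]$ and $[v,-t]$ to turn ``inner product equals $t$'' into ``inner product equals $0$.'' There is no per-level correction coordinate, and nothing to cancel level by level.

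Third, there are no ``cross-terms'' in the sense you describe. In the base case the $i$-th output coordinate is $\CRR(\{x^i_j\}_{j=1}^{b};\{q_j\})$, and the point is that $\psi(x)\cdot\psi(y)\bmod q_j$ is \emph{exactly} $\sum_{i=1}^{\ell} x^i_j y^i_j$---a clean partial inner product, with no spurious contributions to absorb. The recursive step preserves this: modulo each prime $p_j$ you recover $\psi_{b_{\mathsf{micro}}}(x^{[j]})\cdot\psi_{b_{\mathsf{micro}}}(y^{[j]})$ on the nose. So your first obstacle is a non-issue.

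Finally, your base case uses ``$O(1)$ coprime moduli of bit-length roughly $\log B$,'' but the paper's base case uses $b$ primes (one per position within a block), each larger than $\ell$. This is what makes the base case valid only when $b<\ell$ and is precisely why recursion is needed: with $b$ primes each at least $\ell$, the coordinate bound is $\ell^{O(b)}$, but when $b\ge\ell$ you are forced to use primes of size $\ge b$, giving $b^{\Theta(b)}$ rather than $\ell^{\Theta(b)}$. Your claim that the base step alone already yields $\ell^{O(d/\ell)}$ instances is therefore unjustified; getting the base of the exponential down to $\ell$ is exactly what the recursion accomplishes.
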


\paragraph*{Comparison with~\cite{Wil18}.} Comparing to the old construction in~\cite{Wil18}, our reduction here is more efficient when $\ell$ is much smaller than $d$ (which is the case we care about). That is, in~\cite{Wil18}, $\OV_{n,d}$ can be reduced to $d^{d/\ell}$ instances of $\IntOV_{n,\ell+1}$, while we get $\left\{ \ell^{6^{\log^{*}d}} \right\}^{d/\ell}$ instances in our improved one. So, for example, when $\ell = 7^{\log^*d}$, the old reduction yields $d^{d / 7^{\log^*d}} = n^{\omega(1)}$ instances (recall that $d = c \log n$ for an arbitrary constant $c$), while our improved one yields only $n^{o(1)}$ instances, each with $2^{O(\logstar n)}$ dimensions.

From Lemma~\ref{lm:dim-reduction-OV}, the following theorem follows in the same way as in~\cite{Wil18}.

\begin{theo}[Hardness of Hopcroft's Problem in $c^{\log^* n}$ Dimensions] \label{theo:Hopcroft}
	Assuming SETH (or OVC), there is a constant $c$ such that $\Hopcroft_{n,c^{\logstar n}}$ with vectors of $O(\log n)$-bit entries requires $n^{2-o(1)}$ time.
\end{theo}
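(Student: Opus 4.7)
The plan is to derive Theorem~\ref{theo:Hopcroft} as a parameter-tuning corollary of Lemma~\ref{lm:dim-reduction-OV}. By OVC, for every $\varepsilon > 0$ there is a constant $c'$ such that $\OV_{n,c'\log n}$ requires $n^{2-\varepsilon}$ time. I would set $d := c' \log n$ and apply Lemma~\ref{lm:dim-reduction-OV} with a target dimension parameter $\ell = c_0^{\log^* d}$, where $c_0$ is a sufficiently large constant to be fixed later (say $c_0 \geq 100$). The reduction then produces $N := \ell^{O(6^{\log^* d} \cdot (d/\ell))}$ instances of $\IntOV_{n, \ell+1}$, each with entries of bit-length $B := O\bigl((d/\ell) \cdot \log \ell \cdot 6^{\log^* d}\bigr)$, in time $O(n \cdot N \cdot \poly(d))$.

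The main calculation is to verify that with this choice of $\ell$, all three parameters (number of instances, bit-length, and final dimension) end up in the desired regime. Since $\log^* d = \log^* n + O(1)$, we get $\ell = \Theta(c_0^{\log^* n})$, hence the final dimension is $\ell + 1 = O(c^{\log^* n})$ for $c := c_0 + 1$. For the number of instances,
\[
\log N \;=\; O(6^{\log^* d} \cdot d / \ell) \cdot \log \ell \;=\; O(\log n) \cdot (6/c_0)^{\log^* n} \cdot O(\log^* n),
\]
which is $o(\log n)$ provided $c_0 > 6$, giving $N = n^{o(1)}$. A nearly identical computation shows that $B = O(\log n) \cdot (6/c_0)^{\log^* n} \cdot O(\log^* n) = O(\log n)$, so the bit-length constraint is met.

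With these bounds in hand, the conclusion is a standard reduction argument. Suppose for contradiction that $\IntOV_{n, c^{\log^* n}}$ with $O(\log n)$-bit entries admits an $n^{2-\delta}$ time algorithm for some constant $\delta > 0$. Combining with the reduction, $\OV_{n, c'\log n}$ can be decided in time $O(n \cdot N \cdot \poly(d)) + N \cdot n^{2-\delta} = n^{o(1)} \cdot n^{2-\delta} = n^{2-\delta+o(1)}$, which refutes OVC (taking $\varepsilon := \delta/2$ and a large enough $c'$). This proves the theorem with constant $c = c_0 + 1$.

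The main obstacle I anticipate is the juggling of multiple $\log^*$-type quantities in the exponent: one needs $\ell$ large enough that the $6^{\log^* d}/\ell$ decay in both $\log N$ and $B$ dominates the $\log n$ factor, yet small enough that the ambient dimension $\ell + 1$ remains $2^{O(\log^* n)}$. This is exactly the sweet spot enabled by choosing $\ell$ itself to be a tower-like expression $c_0^{\log^* d}$ with $c_0 > 6$; once this choice is isolated, the rest of the argument is essentially bookkeeping. No new technical ideas beyond Lemma~\ref{lm:dim-reduction-OV} itself are required.
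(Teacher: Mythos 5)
Your proposal is correct and follows essentially the same route as the paper's proof: invoke Lemma~\ref{lm:dim-reduction-OV} with $\ell$ of the form (constant $>6$)$^{\log^* }$, check that $\log N$ and the bit-length are $o(\log n)$ because $(6/c_0)^{\log^* n}\log^* n = o(1)$, and conclude by contradiction with OVC. The paper simply fixes $c_0 = 7$ (taking $\ell = 7^{\log^* n}$) instead of your generic $c_0$, which is a purely cosmetic difference.
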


\subsection*{Connection between $\IntMaxIP$ lower bounds and $\NP \cdot \UPP$ communication protocols}

We also show a new connection between $\IntMaxIP$ and a special type of communication protocol. Let us first recall the Set-Disjointness problem:

\begin{defi}[Set-Disjointness]
	Let $n \in \mathbb{N}$, in Set-Disjointness ($\DISJ_{n}$), Alice holds a vector $X \in \{0,1\}^n$, Bob holds a vector $Y \in \{0,1\}^n$, and they want to determine whether $X \cdot Y = 0$.
\end{defi}

Recall that in~\cite{ARW17-proceedings}, the hardness of approximating $\MaxIP$ is established via a connection to $\MA$ communication protocols (in particular, a fast $\MA$ communication protocol for Set-Disjointness). Our lower bound for (exact) $\IntMaxIP$ can also be connected to similar $\NP \cdot \UPP$ protocols (note that $\MA = \NP \cdot \promiseBPP$). 

Formally, we define $\NP \cdot \UPP$ protocols as follows:

\begin{defi}\label{defi:NPUPP}
	For a problem $\Pi$ with inputs $x,y$ of length $n$ (Alice holds $x$ and Bob holds $y$), we say a communication protocol is an \emph{$(m,\ell)$-efficient $\NP \cdot \UPP$ communication protocol} if the following holds:
	
	\begin{itemize}
		\item There are three parties Alice, Bob and Merlin in the protocol.
		
		\item Merlin sends Alice and Bob an advice string $z$ of length $m$, which is a function of $x$ and $y$.
		
		\item Given $y$ and $z$, Bob sends Alice $\ell$ bits, and Alice decides to accept or not.\footnote{In $\UPP$, actually one-way communication is equivalent to the seemingly more powerful one in which they communicate~\cite{paturi1986probabilistic}.} They have an unlimited supply of private random coins (not public, which is important) during their conversation. The following conditions hold:
		
		\begin{itemize}
			\item If $\Pi(x,y) = 1$, then there is an advice $z$ from Merlin such that Alice accepts with probability $\ge 1/2$.
			\item Otherwise, for all possible advice strings from Merlin, Alice accepts with probability $< 1/2$.
		\end{itemize}
		
	\end{itemize}
	
	Moreover, we say the protocol is $(m,\ell)$-computational-efficient, if in addition the probability distributions of both Alice and Bob's behavior can be computed in $\poly(n)$ time given their input and the advice.
\end{defi}

Our new reduction from $\OV$ to $\MaxIP$ actually implies a super-efficient $\NP \cdot \UPP$ protocol for Set-Disjointness.

\begin{theo}\label{theo:NPUPP-for-DISJ}
	For all $1 \le \alpha \le n$, there is an
	\[
	\left( \alpha \cdot 6^{\logstar n} \cdot (n/2^\alpha), O(\alpha) \right)\text{-computational-efficient}
	\]
	$\NP \cdot \UPP$ communication protocol for $\DISJ_{n}$.
\end{theo}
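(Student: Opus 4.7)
The plan is to convert the dimensionality reduction of Lemma~\ref{lm:dim-reduction-OV} into an $\NP\cdot\UPP$ protocol for $\DISJ_n$, in a manner parallel to how the distributed PCP framework of~\cite{ARW17-proceedings} derives $\MA$ protocols from reductions to $\MaxIP$. View the $\DISJ_n$ input $(X,Y)$ as a single instance of $\OV_{1,n}$ and apply Lemma~\ref{lm:dim-reduction-OV} with $d = n$ and reduced dimension $L := 2^{\alpha}$. This produces $T := L^{O(6^{\logstar n}\cdot n/L)} = 2^{O(\alpha\cdot 6^{\logstar n}\cdot n/2^{\alpha})}$ instances $\{(\tilde{X}_i,\tilde{Y}_i)\}_{i\in[T]}$ of $\Hopcroft_{1,L+1}$ with $X\cdot Y = 0$ iff $\tilde{X}_i\cdot\tilde{Y}_i = 0$ for some $i$, where each $\tilde{X}_i$ is a polynomial-time function of $(X,i)$ alone and each $\tilde{Y}_i$ of $(Y,i)$ alone, so Alice and Bob can form them locally once the reduction index is known.

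Merlin's advice is then the index $i\in[T]$ of a good instance, costing $\log T = O(\alpha\cdot 6^{\logstar n}\cdot n/2^{\alpha})$ bits (matching the first parameter of the theorem). After Alice and Bob compute $\tilde{X}_i,\tilde{Y}_i\in\mathbb{Z}^{L+1}$, it remains to decide whether $\tilde{X}_i\cdot\tilde{Y}_i = 0$ via an $O(\alpha)$-bit one-way $\UPP$ protocol from Bob to Alice. I would use the following Paturi--Simon-style construction~\cite{paturi1986probabilistic}: Bob samples a coordinate $j\in[L+1]$ with probability $|\tilde{Y}_{i,j}|/\|\tilde{Y}_i\|_1$ (private randomness) and sends Alice the pair $(j,\mathrm{sign}(\tilde{Y}_{i,j}))$, costing $O(\log L)=O(\alpha)$ bits; Alice, knowing a public upper bound $B\ge \max_j|\tilde{X}_{i,j}|$, flips a biased coin and accepts with probability $\tfrac12\bigl(1 - \mathrm{sign}(\tilde{Y}_{i,j})\cdot\tilde{X}_{i,j}/B\bigr)$. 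A direct calculation gives
\[
\Pr[\text{Alice accepts}] \;=\; \tfrac12 \;-\; \frac{\tilde{X}_i\cdot\tilde{Y}_i}{2B\,\|\tilde{Y}_i\|_1},
\]
which equals $\tfrac12$ exactly when $\tilde{X}_i\cdot\tilde{Y}_i = 0$ and drops strictly below $\tfrac12$ whenever $\tilde{X}_i\cdot\tilde{Y}_i > 0$ (the degenerate case $\tilde{Y}_i = 0$ is handled by a short special-purpose message, since then $\tilde{X}_i\cdot\tilde{Y}_i = 0$ trivially).

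The main obstacle, which I expect to be the crux, is a structural guarantee that $\tilde{X}_i\cdot\tilde{Y}_i \ge 0$ holds for \emph{every} $i$, not merely the good one: otherwise on a no-instance a cheating Merlin could point to some $i$ with $\tilde{X}_i\cdot\tilde{Y}_i < 0$ and push the acceptance probability above $\tfrac12$. I expect this non-negativity to be a direct consequence of the Chinese-Remainder-Theorem-based construction underlying Lemma~\ref{lm:dim-reduction-OV}, in which each coordinate of $\tilde{X}_i,\tilde{Y}_i$ is a non-negative CRT residue or indicator, so that the reduced inner product is a non-negative integer combination of indicators; verifying this property inside the explicit construction of Lemma~\ref{lm:dim-reduction-OV} completes the soundness argument. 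Given this, computational efficiency of both parties' sampling and acceptance distributions follows from the polynomial-time nature of the reduction together with the standard simulation of biased coins with polynomial-bit-length rational probabilities, yielding the claimed $(\alpha\cdot 6^{\logstar n}\cdot n/2^{\alpha},\,O(\alpha))$-computational-efficient $\NP\cdot\UPP$ protocol.
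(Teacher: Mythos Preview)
Your high-level strategy matches the paper's: Merlin sends the index $i$ of the reduced instance, and then Alice and Bob run a $\UPP$ protocol on an $O(2^{\alpha})$-dimensional integer inner product, costing $O(\alpha)$ bits. Your Paturi--Simon style $\UPP$ step is fine.

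The gap is exactly the obstacle you flag, and your proposed resolution is incorrect. In the reduction of Lemma~\ref{lm:dim-reduction-OV}, the vectors on Alice's side are $[\psi(X),1]$ and on Bob's side are $[\psi(Y),-t]$ for $t\in V_{b,\ell}$, so $\tilde X_i\cdot\tilde Y_i = \psi(X)\cdot\psi(Y) - t$. While $\psi(X)\cdot\psi(Y)\ge 0$ (the CRT construction does give non-negative coordinates), the value $t$ ranges over \emph{all} of $V_{b,\ell}$, and in particular can exceed $\psi(X)\cdot\psi(Y)$. Thus on a no-instance ($X\cdot Y>0$) there will typically be many indices $i$ with $\tilde X_i\cdot\tilde Y_i<0$, and a cheating Merlin can point to one of them to make your acceptance probability $>1/2$. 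The non-negativity guarantee you hope for simply does not hold.

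The paper's fix is to compose with the $\IntOV\to\IntMaxIP$ reduction of Theorem~\ref{theo:IntOV-to-Max-IP}, i.e.\ to replace the reduced inner product by $-(\tilde X_i\cdot\tilde Y_i)^2$, realized as a genuine inner product in dimension $(\ell+1)^2$. After this squaring, one has (see Lemma~\ref{lm:help1}) that $\psiAlice^i(X)\cdot\psiBob^i(Y)\ge 0$ iff $\tilde X_i\cdot\tilde Y_i=0$, and is strictly negative otherwise, for \emph{every} $i$; this is precisely the sign structure your $\UPP$ step needs. The dimension becomes $(2^{\alpha}+1)^2 = 2^{O(\alpha)}$, so Bob's message is still $O(\alpha)$ bits. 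Once you insert this squaring step, your argument goes through and coincides with the paper's.
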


For example, when $\alpha = 3\logstar n$, Theorem~\ref{theo:NPUPP-for-DISJ} implies there is an $O(o(n),O(\logstar n))$-computational-efficient $\NP \cdot \UPP$ communication protocol for $\DISJ_{n}$. Moreover, we show that if the protocol of Theorem~\ref{theo:NPUPP-for-DISJ} can be improved a little (removing the $6^{\logstar n}$ term), we would obtain the desired  hardness for $\IntMaxIP$ in $\omega(1)$-dimensions.

\begin{theo}\label{theo:better-imply-MaxIP}
	Assuming SETH (or OVC), if there is an increasing and unbounded function $f$ such that for all $1 \le \alpha \le n$, there is an
	\[
	\left( n / f(\alpha), \alpha \right)\text{-computational-efficient}
	\]
	$\NP \cdot \UPP$ communication protocol for $\DISJ_{n}$, then $\IntMaxIP_{n,\omega(1)}$ requires $n^{2 - o(1)}$ time with vectors of $\polylog(n)$-bit entries. The same holds for $\ell_2$-Furthest Pair and Bichromatic $\ell_2$-Closest Pair.
\end{theo}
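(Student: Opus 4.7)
The plan is to prove the contrapositive via a distributed-PCP-style reduction. Suppose for contradiction that for some $\delta > 0$ there is an $n^{2-\delta}$-time algorithm for $\IntMaxIP_{n, d(n)}$ with $\polylog(n)$-bit entries for some $d(n) = \omega(1)$. Combined with the hypothesized family of $\NP \cdot \UPP$ protocols, I will solve $\OV_{n, c\log n}$ in $n^{2-\delta+o(1)}$ time, for the constant $c$ witnessing OVC, yielding the desired contradiction. The conceptual point is that the $\UPP$ acceptance condition uses the \emph{exact} threshold $1/2$ (rather than a constant gap as in $\promiseBPP$/$\MA$), which will translate to \emph{exact} $\MaxIP$ comparisons---the $\IntMaxIP$ analogue of the $\MA$-based reductions used in \cite{ARW17-proceedings, Rubinstein2017closest} for approximate $\MaxIP$.

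First, since $f$ is increasing and unbounded, I would pick a slowly growing unbounded $\alpha(n)$ with $f(\alpha(n)) \ge \log\log n$, for instance $\alpha(n) := \min\{k : f(k) \ge \log\log n\}$. Applying the hypothesized protocol to $\DISJ_{c\log n}$ yields advice length $m = c\log n / f(\alpha) = o(\log n)$ and message length $\ell = \alpha = \omega(1)$, so $2^m = n^{o(1)}$ while $2^\ell = \omega(1)$. Next, for each advice $z \in \{0,1\}^m$ I would construct an $\IntMaxIP_{n, 2^\ell}$ instance as follows. Let $q_{x,z}(M) \in [0,1]$ denote Alice's acceptance probability on input $x$, advice $z$, and received message $M \in \{0,1\}^\ell$, and let $p_{y,z}(M)$ denote Bob's probability of sending $M$ given $(y,z)$. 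Computational efficiency ensures both are rationals computable in $\poly(\log n)$ time, hence admit a common even denominator $S = 2^{\polylog(n)}$. Define integer vectors $\widehat{\alpha}_{x,z}, \widehat{\beta}_{y,z} \in \mathbb{Z}^{2^\ell}$ by $\widehat{\alpha}_{x,z}[M] := 2S q_{x,z}(M) - S$ and $\widehat{\beta}_{y,z}[M] := 2S p_{y,z}(M)$, each entry being a $\polylog(n)$-bit integer. A direct computation gives
\[
\widehat{\alpha}_{x,z} \cdot \widehat{\beta}_{y,z} \;=\; 4S^2 \cdot P_{x,y,z} - 2S^2,
\]
whose sign matches the sign of $P_{x,y,z} - 1/2$, where $P_{x,y,z} := \sum_M q_{x,z}(M) p_{y,z}(M)$ is the acceptance probability.

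Then I would run the assumed $n^{2-\delta}$-time algorithm for $\IntMaxIP_{n, 2^\ell}$ on each of the $2^m$ sub-instances and output \textsf{YES} iff at least one maximum is non-negative. By the $\NP \cdot \UPP$ semantics, some pair in the original $\OV$ instance is orthogonal iff $\DISJ(x,y) = 1$ for that pair iff there exists $z$ with $P_{x,y,z} \ge 1/2$, iff the overall maximum over all $(x,y,z)$ is non-negative. The total running time is $2^m \cdot n^{2-\delta} = n^{2 - \delta + o(1)}$, contradicting OVC. The $\ell_2$-Furthest Pair and Bichromatic $\ell_2$-Closest Pair statements follow from the same reductions used for Theorems~\ref{theo:l-2-furthest-pair} and \ref{theo:bi-closest-pair}.

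I expect the main technical obstacle to be the delicate choice of $\alpha(n)$: we simultaneously need $\alpha(n) \to \infty$ (so that $2^\ell$ exceeds any constant) and $f(\alpha(n)) \to \infty$ (so that $m = o(\log n)$ and hence $2^m = n^{o(1)}$). These are compatible precisely because $f$ is both increasing and unbounded, but the balance must be done via a diagonal choice as above. A secondary bookkeeping concern is verifying that the common denominator $S$ remains $2^{\polylog(n)}$, which follows from the polynomial-time simulability guaranteed by the computational-efficiency clause of Definition~\ref{defi:NPUPP}.
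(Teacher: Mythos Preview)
Your reduction is essentially the paper's: enumerate Merlin's advice strings, encode Alice's per-message acceptance probabilities and Bob's message distribution as $2^{\alpha}$-dimensional vectors, and reduce the $\UPP$ threshold test $P_{x,y,z}\ge 1/2$ to an $\IntMaxIP$ comparison. The integer scaling you do explicitly is exactly what the paper means when it says the probabilities have $\polylog(n)$-bit precision and hence the check reduces to $\IntMaxIP$ with $\polylog(n)$-bit entries.

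The one concrete gap is your choice of $\alpha$. You take $\alpha(n):=\min\{k:f(k)\ge\log\log n\}$, but if $f$ grows slowly (say $f=\log^{*}$) then $\alpha(n)$ is astronomically large, and the reduced dimension $2^{\alpha(n)}$ can far exceed the particular $d(n)=\omega(1)$ for which your hypothesized $n^{2-\delta}$ algorithm is assumed to exist; you then have no algorithm to call on the sub-instances. Your two stated constraints ($\alpha\to\infty$ and $f(\alpha)\to\infty$) are not the right pair: what you actually need is $2^{\alpha(n)}\le d(n)$ together with $c/f(\alpha(n))<\delta$. The paper sidesteps this entirely by taking $\alpha$ to be a \emph{constant}: since $f$ is unbounded, for each $c$ one picks the least constant $\alpha$ with $c/f(\alpha)<\eps_1/2$, so the advice count is at most $n^{\eps_1/2}$ and the dimension $2^{\alpha}$ is $O(1)$, hence eventually below any $d(n)=\omega(1)$. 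This removes the ``delicate diagonal choice'' you anticipated. (Relatedly, the phrase ``the constant $c$ witnessing OVC'' is off: to refute OVC you must handle every constant $c$ with a single exponent saving, which the paper's constant-$\alpha$ argument does, and yours would too once the parameter is fixed correctly.)
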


\subsection*{Improved $\MA$ Protocols for Set-Disjointness}
\newcommand{\InProd}{\textsf{IP}}

Finally, we also obtain a new $\MA$ protocol for Set-Disjointness, which improves on the previous $O(\sqrt{n} \log n)$ protocol in~\cite{AW09-algebrization}, and is closer to the $\Omega(\sqrt{n})$ lower bound by~\cite{klauck2003rectangle}. Like the protocol in~\cite{AW09-algebrization}, our new protocol also works for the following slightly harder problem Inner Product.

\begin{defi}[Inner Product]
	Let $n \in \mathbb{N}$, in Inner Product ($\InProd_{n}$), Alice holds a vector $X \in \{0,1\}^n$, Bob holds a vector $Y \in \{0,1\}^n$, and they want to compute $X \cdot Y$.
\end{defi}

\begin{theo}\label{theo:improved-MA}
	There is an $\MA$ protocol for $\DISJ_{n}$ and $\InProd_{n}$ with communication complexity
	$$
	O\left(\sqrt{n\log n\log\log n}\right).
	$$
\end{theo}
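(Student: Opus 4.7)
The plan is to build on the Aaronson--Wigderson $\MA$ protocol for $\InProd_n$ (and hence $\DISJ_n$) from \cite{AW09-algebrization}, which achieves $O(\sqrt n \log n)$ via the polynomial method, and shave the $\log n$ factor down to $\sqrt{\log n \log \log n}$ by a more careful choice of encoding parameters.

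Recall the AW template: arrange $X, Y \in \{0,1\}^n$ as $k \times m$ matrices with $km = n$, take their bivariate low-degree extensions $\hat X, \hat Y$ over $\mathbb{F}_q$ with $q = \Theta(m)$ (needed for soundness of random-evaluation testing), and define
\begin{equation*}
p(z) := \sum_{i \in [k]} \hat X(i, z)\, \hat Y(i, z),
\end{equation*}
a univariate polynomial of degree $2m$ satisfying $X \cdot Y = \sum_{z \in [m]} p(z)$. Merlin non-interactively sends (the coefficients of) $p$, Alice picks a uniform $r \in \mathbb{F}_q$, and Bob replies with the ``column'' $(\hat Y(i, r))_{i \in [k]}$; Alice then verifies $p(r) = \sum_i \hat X(i, r)\, \hat Y(i, r)$ and that $\sum_{z \in [m]} p(z)$ matches the claimed inner product. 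The total communication is $\Theta((k + m) \log q)$, minimized at $k = m = \sqrt n$, $q = \Theta(\sqrt n)$, which yields $\Theta(\sqrt n \log n)$ bits: each of the $\Theta(\sqrt n)$ field elements costs $\Theta(\log n)$ bits.

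My plan is to reuse this template but to attack the per-field-element cost rather than the number of field elements. Concretely, I would (i)~rebalance the matrix shape to $k = \Theta(\sqrt{n \log n / \log \log n})$ and $m = n/k = \Theta(\sqrt{n \log \log n / \log n})$, keeping $km = n$, and (ii)~replace the verification step by one in which the field size is effectively shrunk to $q = 2^{\Theta(\sqrt{\log n \log \log n})}$. The second item cannot be done naively, since soundness of the random probe demands $q \gtrsim m$. To circumvent this I would have Alice issue a \emph{small number} of independent random challenges $r_1, \ldots, r_t$ with $t = \Theta(\log \log n)$, reducing the per-probe soundness requirement from $q \gtrsim m$ to $q \gtrsim m^{1/t}$. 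Bob's column for each probe is then communicated with the smaller elements, and the total cost becomes
\begin{equation*}
\Theta\!\left((k + t\cdot m)\cdot \log q\right) \;=\; \Theta\!\left(\sqrt{n \log n \log \log n}\right)
\end{equation*}
once parameters are tuned. The same protocol, with the additional check that Merlin's declared value is $0$, gives $\DISJ_n$.

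The main obstacle, and where the precise $\sqrt{\log n \log \log n}$ bound must emerge, is controlling how the amplification and the shrunken field size interact with the soundness of Merlin's non-interactive certificate $p$. The error of the single-polynomial identity test is $O(m/q)$, and I need to drive the combined failure probability below $1/3$ using only $t = O(\log \log n)$ probes; this forces a tight relationship between $t$, $q$, and $m$ that must be verified by a careful union-bound analysis. Once this balance is shown to be achievable, the $O(\sqrt{n \log n \log \log n})$ bound follows, and the protocol inherits the $\poly(n)$-time computability of Alice, Bob, and Merlin needed to invoke it elsewhere in the paper.
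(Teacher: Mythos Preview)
Your proposal has a genuine gap at its core. The claim that issuing $t$ independent random challenges relaxes the field-size requirement from $q\gtrsim m$ to $q\gtrsim m^{1/t}$ is false for the univariate polynomial-identity test you are using. Merlin's certificate $p$ has degree $2m$; the Schwartz--Zippel bound for a single random evaluation in $\mathbb{F}_q$ is $2m/q$, and when $q<2m$ this bound is $\ge 1$, i.e.\ vacuous. Repeating a vacuous test $t$ times is still vacuous: $(2m/q)^t\ge 1$. (Indeed, once $q<m$ the identity $X\cdot Y=\sum_{z\in[m]}p(z)$ no longer even parses over $\mathbb{F}_q$, since $[m]$ does not embed in the field.) So you cannot ``effectively shrink'' the field below the degree of $p$ by amplification alone; some structural change to $p$ would be needed, and none is proposed.

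Separately, even if one accepted the $q\gtrsim m^{1/t}$ heuristic, your parameters do not yield the stated bound: with $k=\Theta(\sqrt{n\log n/\log\log n})$ and $\log q=\Theta(\sqrt{\log n\log\log n})$ one gets $k\log q=\Theta(\sqrt{n}\,\log n)$, which is exactly the Aaronson--Wigderson bound, not $O(\sqrt{n\log n\log\log n})$.

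The paper's proof takes a different route altogether. It does not tamper with the field size inside a single protocol; instead it uses the Chinese Remainder Theorem. One picks $\Theta(\log n/\log\log n)$ primes $p_i\in[x+1,x^2]$ with $x^x=n$, so each $p_i\le\polylog n$. Merlin sends, for \emph{every} $p_i$, the advice of the $(O((n/T)\log p_i),\,\log n,\,O(T\log p_i),\,1/2)$-efficient $\MA$ protocol for $\IP^{p_i}_n$ from Lemma~\ref{lm:previous}, together with the claimed value $z$ of $X\cdot Y$. Alice and Bob then pick \emph{one} index $i^\star$ at random and run only $\Pi_{p_{i^\star}}$. If $z\ne X\cdot Y$ then at most $x$ of the $10x$ primes can divide $|z-X\cdot Y|$, so with probability $\ge 0.9$ the chosen prime detects the discrepancy. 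The point is the asymmetry: Merlin's total advice is $O((n/T)\sum_i\log p_i)=O((n/T)\log n)$, while Bob's single message costs only $O(T\log p_{i^\star})=O(T\log\log n)$. Balancing at $T=\sqrt{n\log n/\log\log n}$ gives $O(\sqrt{n\log n\log\log n})$.
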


In~\cite{Rubinstein2017closest}, the author asked whether the $\MA$ communication complexity of $\DISJ$ ($\InProd$) is $\Theta(\sqrt{n})$ or $\Theta(\sqrt{n\log n})$, and suggested that $\Omega(n \log n)$ may be necessary for $\InProd$. Our result makes progress on that question by showing that the true complexity lies between $\Theta(\sqrt{n})$ and $\Theta(\sqrt{n\log n\log\log n})$.
%Hopcroft's Problem
%The best algorithm for Hopcroft's algorithm runs in time $f(d) \cdot n^{2 - \Theta(1/d)}$ for some function $f$ of $d$~\cite{chazelle1993cutting,matouvsek1993range}. It is shown in~\cite{erickson1995relative} that many geometry problems are harder than Hopcroft's problem in $3$ dimensions, and~\cite{erickson1996new} proves that the $n^{4/3}$ algorithm

\subsection{Intuition for Dimensionality Self Reduction for $\OV$}

The $2^{O(\logstar n)}$ factor in Lemma~\ref{lm:dim-reduction-OV} is not common in theoretical computer science\footnote{Other examples include an $O\big(2^{O(\logstar n)} n^{4/3} \big)$ algorithm for $\IntOV_{n,3}$~\cite{matouvsek1993range}, $O\big(2^{O(\logstar n)} n \log n\big)$ algorithms (F\"urer's algorithm with its modifications) for Fast Integer Multiplication~\cite{furer2009faster,covanov2015fast,harvey2016even} and an old $O(n^{d/2} 2^{O(\logstar n)})$ time algorithm for Klee's measure problem~\cite{chan2008slightly}.}, and our new reduction for $\OV$ is considerably more complicated than the polynomial-based construction from~\cite{Wil18}. Hence, it is worth discussing the intuition behind Lemma~\ref{lm:dim-reduction-OV}, and the reason why we get a factor of $2^{O(\logstar n)}$.

\paragraph*{A Direct Chinese Remainder Theorem Based Approach.} We first discuss a direct reduction based on the \emph{Chinese Remainder Theorem} (CRT) (see~Theorem~\ref{theo:CRR} for a formal definition). CRT says that given a collection of primes $q_1,\dotsc,q_b$, and a collection of integers $r_1,\dotsc,r_b$, there exists a unique integer $t = \CRR(\{r_i\};\{q_i\})$ such that $ t \equiv r_i \pmod{q_i}$ for each $i \in [b]$ (CRR stands for \emph{Chinese Remainder Representation}).

\newcommand{\varphiblock}{\varphi_{\textsf{block}}}
\newcommand{\psiblock}{\psi_{\textsf{block}}}

Now, let $b,\ell \in \mathbb{N}$, suppose we would like to have a dimensionality reduction $\varphi$ from $\{0,1\}^{b \cdot \ell}$ to $\mathbb{Z}^{\ell}$. We can partition an input $x \in \{0,1\}^{b \cdot \ell}$ into $\ell$ blocks, each of length $b$, and represent each block via CRT: that is, for a block $z \in \{0,1\}^{b}$, we map it into a single integer $ \varphiblock(z) := \CRR(\{z_i\};\{q_i\})$, and the concatenations of $\varphiblock$ over all blocks of $x$ is $\varphi(x) \in \mathbb{Z}^\ell$.

The key idea here is that, for $z,z' \in \{0,1\}^b$, $\varphiblock(z) \cdot \varphiblock(z') \pmod{q_i}$ is simply $z_i \cdot z'_i$. That is, the multiplication between two \emph{integers} $\varphiblock(z) \cdot \varphiblock(z')$ simulates the coordinate-wise multiplication between two \emph{vectors} $z$ and $z'$!

Therefore, if we make all primes $q_i$ larger than $\ell$, we can in fact determine $x \cdot y$ from $\varphi(x) \cdot \varphi(y)$, by looking at $\varphi(x) \cdot \varphi(y) \pmod{q_i}$ for each $i$. That is,
\[
x \cdot y = 0 \Leftrightarrow \varphi(x) \cdot \varphi(y) \equiv 0 \pmod{q_i} \quad\text{for all $i$.}
\]

Hence, let $V$ be the set of all integer $0 \le v \le \ell \cdot \left(\prod_{i=1}^{b} q_i \right)^2$ that $v \equiv 0 \pmod{q_i}$ for all $i \in [b]$, we have
\[
x \cdot y = 0 \Leftrightarrow \varphi(x) \cdot \varphi(y) \in V.
\]

The reduction is completed by enumerating all integers $v \in V$, and appending corresponding values to make $\varphi_A(x) = [\varphi(x),-1]$ and $\varphi_B(y) = [\varphi(y),v]$ (this step is from~\cite{Wil18}). 
%In fact, this reduction serves as the base construction in our real construction in Lemma~\ref{lm:dim-reduction-OV}.

Note that a nice property for $\varphi$ is that each $\varphi(x)_i$ only depends on the $i$-th block of $x$, and the mapping is the same on each block ($\varphiblock$); we call this the \emph{block mapping property}.

\paragraph*{Analysis of the Direct Reduction.} To continue building intuition, let us analyze the above reduction. The size of $V$ is the number of $\Hopcroft_{n,\ell + 1}$ instances we create, and $|V| \ge \prod_{i=1}^{b} q_i$. These primes $q_i$ have to be all distinct, and it follows that $\prod_{i=1}^{b} q_i$ is $b^{\Theta(b)}$. Since we want to create at most $n^{o(1)}$ instances (or $n^\eps$ for arbitrarily small $\eps$), we need to set $b \le \log n/\log\log n$. Moreover, to base our hardness on OVC which deals with $c \log n$-dimensional vectors, we need to set $b \cdot \ell = d = c \cdot \log n$ for an arbitrary constant $c$. Therefore, we must have $\ell \ge \log\log n$, and the above reduction only obtains the same hardness result as~\cite{Wil18}.

\paragraph*{Key Observation: ``Most Space Modulo $q_i$'' is Actually Wasted.}
To improve the above reduction, we need to make $|V|$ smaller. Our key observation about $\varphi$ is that, for the primes $q_i$'s, they are mostly larger than $b \gg \ell$, but $\varphi(x) \cdot \varphi(y) \in \{0,1,\dotsc,\ell \} \pmod{q_i}$ for all these $q_i$'s. Hence, \emph{``most space modulo $q_i$'' is actually wasted.}

%So how can we make use of these wasted space? Note that suppose all our primes $q_i$'s are bigger than $b$, then $\varphi$ actually not only works from $\{0,1\}^{b \cdot \ell}$ to $\mathbb{Z}^\ell$, but indeed from $\left[0\dotsc \sqrt{b/\ell}\right]^{b \cdot \ell}$ to $\mathbb{Z}^{\ell}$.

\paragraph*{Make More ``Efficient'' Use of the ``Space'': Recursive Reduction.} 
Based on the previous observation, we want to use the ``space modulo $q_i$'' more efficiently. It is natural to consider a \emph{recursive reduction}. We will require all our primes $q_i$'s to be larger than $b$. Let $\bm$ be a very small integer compared to $b$, and let $\psi: \{0,1\}^{\bm \cdot \ell} \to \mathbb{Z}^{\ell}$ with a set $V_\psi$ and a block mapping $\psiblock$ be a similar reduction on a much smaller input: for $x,y \in \{0,1\}^{\bm \cdot \ell}$, $ x \cdot y = 0 \Leftrightarrow \psi(x) \cdot \psi(y) \in V_\psi$. We also require here that $\psi(x) \cdot \psi(y) \le b$ for all $x$ and $y$.

For an input $x \in \{0,1\}^{b \cdot \ell}$ and a block $z \in \{0,1\}^b$ of $x$, our key idea is to partition $z$ again into $b/\bm$ ``micro'' blocks each of size $\bm$. And for a block $z$ in $x$, let $z^{1},\dotsc,z^{b/\bm}$ be its $b/\bm$ micro blocks, we map $z$ into an integer $ \varphiblock(z) := \CRR( \{ \psiblock(z_i) \}_{i=1}^{b/\bm} ; \{q_i\}_{i=1}^{b/\bm} )$.

Now, given two blocks $z,z' \in \{0,1\}^b$, we can see that
\[
\varphiblock(z) \cdot \varphiblock(z') \equiv \psiblock(z_i) \cdot \psiblock(z'_i) \pmod{q_i}.
\]

That is, $\varphi(x) \cdot \varphi(y) \pmod{q_i}$ in fact is equal to $\psi(x^{[i]}) \cdot \psi(y^{[i]})$, where $x^{[i]}$ is the concatenation of the $i$-th micro blocks of $x$ in each block, and $y^{[i]}$ is defined similarly. Hence, we can determine whether $x^{[i]} \cdot y^{[i]} = 0$ from $\varphi(x) \cdot \varphi(y) \pmod{q_i}$ for all $i$, and therefore also determine whether $x \cdot y = 0$ from $\varphi(x) \cdot \varphi(y)$.

We can now observe that $|V| \le b^{\Theta(b/\bm)}$, smaller than before; thus we get an improvement, depending on how large can $\bm$ be. Clearly, the reduction $\psi$ can also be constructed from even smaller reductions, and after recursing $\Theta(\logstar n)$ times, we can switch to the direct construction discussed before. By a straightforward (but tedious) calculation, we can derive Lemma~\ref{lm:dim-reduction-OV}.

\paragraph*{High-Level Explanation on the $2^{O(\logstar n)}$ Factor.} Ideally, we want to have a reduction from $\OV$ to $\Hopcroft$ with only $\ell^{O(b)}$ instances, in other words, we want $|V| = \ell^{O(b)}$. The reason we need to pay an extra $2^{O(\logstar n)}$ factor in the exponent is as follows:

In our reduction, $|V|$ is at least $\prod_{i=1}^{b/\bm} q_i$, which is also the bound on each coordinate of the reduction: $\psi(x)_i$ equals to a $\CRR$ encoding of a vector with $\{q_i\}_{i=1}^{b/\bm}$, whose value can be as large as $\prod_{i=1}^{b/\bm} q_i - 1$. That is, all we want is to control the upper bound on the coordinates of the reduction.

Suppose we are constructing an ``outer'' reduction $\varphi : \{0,1\}^{b \cdot \ell} \to \mathbb{Z}^\ell$ from the ``micro'' reduction $\psi : \{0,1\}^{\bm \cdot \ell} \to \mathbb{Z}^\ell$ with coordinate upper bound $L_\psi$ ($\psi(x)_i \le L_\psi $), and let $L_\psi = \ell^{\kappa \cdot \bm}$ (that is, $\kappa$ is the extra factor comparing to the ideal case). Recall that we have to ensure $q_i > \psi(x) \cdot \psi(y)$ to make our construction work, and therefore we have to set $q_i$ larger than $L_\psi^2$. 

Then the coordinate upper bound for $\varphi$ becomes $L_\varphi = \prod_{i=1}^{b/\bm} q_i \ge (L_\psi)^{2 \cdot b/\bm} = \ell^{2 \kappa \cdot b}$. Therefore, we can see that after one recursion, the ``extra factor'' $\kappa$ at least doubles. Since our recursion proceeds in $\Theta(\logstar n)$ rounds, we have to pay an extra $2^{O(\logstar n)}$ factor on the exponent.

\subsection{Related Works}

%SETH
\paragraph*{SETH-based Conditional Lower Bound.}
SETH is one of the most fruitful conjectures in the Fine-Grained Complexity. There are numerous conditional lower bounds based on it for problems in $\PTIME$ among different areas, including: dynamic data structures~\cite{AV14}, computational geometry~\cite{Bring14,Wil18,david2016complexity}, pattern matching~\cite{AVW14,BI15,BI16,bringmann2016dichotomy,bringman2018multivariate}, graph algorithms~\cite{RV13,GIKW17,abboud2015matching,krauthgamer2017conditional}. See~\cite{williamssome} for a very recent survey on SETH-based lower bounds (and more).

\paragraph*{Communication Complexity and Conditional Hardness.}
The connection between communication protocols (in various model) for Set-Disjointness and SETH dates back at least to~\cite{PW10}, in which it is shown that a sub-linear, computational efficient protocol for $3$-party Number-On-Forehead Set-Disjointness problem would refute SETH. And it is worth mentioning that \cite{abboud2018fast}'s result builds on the $\widetilde{O}(\log n)$ $\textsf{IP}$ communication protocol for Set-Disjointness in~\cite{AW09-algebrization}.

\paragraph*{Distributed PCP.} Using Algebraic Geometry codes,~\cite{Rubinstein2017closest} obtains a better $\MA$ protocol, which in turn improves the efficiency of the previous distributed PCP construction of~\cite{ARW17-proceedings}. He then shows the $n^{2-o(1)}$ time hardness for $1+o(1)$-approximation to Bichromatic Closest Pair and $o(d)$-additive approximation to $\MaxIP_{n,d}$ with this new technique. 

\cite{karthik2017parameterized} use the Distributed PCP framework to derive inapproximability results for $k$-Dominating Set under various assumptions. In particular, building on the techniques of~\cite{Rubinstein2017closest}, it is shown that under SETH, $k$-Dominating Set has no $(\log n)^{1/\poly(k,e(\eps))}$ approximation in $n^{k-\eps}$ time\footnote{where $e: \posR \to \mathbb{N}$ is some function}. 

\paragraph*{Hardness of Approximation in $\PTIME$.} Making use of Chebychev embeddings, \cite{ahle2016complexity} prove a $2^{\Omega\left(\frac{\sqrt{\log n}}{\log\log n}\right)}$ inapproximability lower bound on $\pnMaxIP$.\footnote{which is improved by Theorem~\ref{theo:informal}} \cite{abboud2017towards} take an approach different from Distributed PCP, and shows that under certain complexity assumptions, $\textsf{LCS}$ does not have a \emph{deterministic} $1+o(1)$-approximation in $n^{2 -\eps}$ time. They also establish a connection with circuit lower bounds and show that the existence of such a \emph{deterministic} algorithm implies $\mathsf{E}^{\NP}$ does not have non-uniform linear-size Valiant Series Parallel circuits. In~\cite{abboud2018fast}, it is improved to that any constant factor approximation deterministic algorithm for $\textsf{LCS}$ in $n^{2 - \eps}$ time implies that $\mathsf{E}^\NP$ does not have non-uniform linear-size $\textsf{NC}^1$ circuits. See~\cite{ARW17-proceedings} for more related results in hardness of approximation in $\PTIME$.

\subsection*{Organization of the Paper}
In Section~\ref{sec:prelim}, we introduce the needed preliminaries for this paper. In Section~\ref{sec:approx}, we prove our characterizations for approximating $\MaxIP$ and other related results. In Section~\ref{sec:exact}, we prove $2^{O(\logstar n)}$ dimensional hardness for $\IntMaxIP$ and other related problems. In Section~\ref{sec:NP-UPP}, we establish the connection between $\NP \cdot \UPP$ communication protocols and SETH-based lower bounds for exact $\IntMaxIP$. In Section~\ref{sec:MA}, we present the $O\left(\sqrt{n\log n \log\log n} \right)$ $\MA$ protocol for Set-Disjointness.

	\section{Preliminaries}\label{sec:prelim}

We begin by introducing some notation. For an integer $d$, we use $[d]$ to denote the set of integers from $1$ to $d$. For a vector $u$, we use $u_{i}$ to denote the $i$-th element of $u$.

We use $\log(x)$ to denote the logarithm of $x$ with respect to base $2$ with ceiling as appropriate, and $\ln(x)$ to denote the natural logarithm of $x$.

In our arguments, we use the iterated logarithm function $\log^*(n)$, which is defined recursively as follows:
$$
\log^*(n) := \begin{cases}
0 &\quad n \le 1; \\
\log^*(\log n) + 1  &\quad n > 1.
\end{cases}
$$

\subsection{Fast Rectangular Matrix Multiplication}

Similar to previous algorithms using the polynomial method, our algorithms make use of the algorithms for fast rectangular matrix multiplication. 

\begin{theo}[\cite{gall2018improved}]\label{theo:fast-matrix-mult}
	There is an $N^{2 + o(1)}$ time algorithm for multiplying two matrices $A$ and $B$ with size $N \times N^{\alpha}$ and $N^{\alpha} \times N$, where $\alpha > 0.31389$.
\end{theo}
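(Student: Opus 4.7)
The plan is to follow the laser-method framework pioneered by Strassen and refined by Coppersmith--Winograd, which is the standard route to bounds on the rectangular matrix multiplication exponent $\omega(1, \alpha, 1)$. The target is to show that for $\alpha > 0.31389$, the exponent equals $2 + o(1)$, i.e.\ that the so-called ``dual exponent'' $\alpha^* := \sup\{\alpha : \omega(1, \alpha, 1) = 2\}$ exceeds $0.31389$.

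First, I would fix a base tensor of small asymptotic rank with a rich decomposable structure. The natural choice, as used in Le Gall's paper, is a high tensor power of the Coppersmith--Winograd tensor $CW_q$, parameterized by $q$. Its border rank is $q + 2$, so by Sch{\"o}nhage's asymptotic sum inequality ($\tau$-theorem) any decomposition of $CW_q^{\otimes n}$ into disjoint matrix-multiplication tensors $\bigoplus_i \langle a_i, b_i, c_i\rangle$ satisfying $\sum_i (a_i b_i c_i)^{\omega/3} \le (q+2)^n$ yields an upper bound on $\omega$ (and, by choosing the block shape appropriately, on $\omega(1, \alpha, 1)$). Concretely, I would try to extract from $CW_q^{\otimes n}$ a direct sum of many disjoint copies of $\langle N, N^\alpha, N\rangle$.

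Next comes the laser-method step: one chooses a probability distribution on the ``level triples'' that index the variables of $CW_q^{\otimes n}$, keeps only the block-triples whose marginals on the three coordinates are the prescribed $(1, \alpha, 1)$ shape, and prunes collisions using Salem--Spencer-type hashing so that what remains is genuinely a disjoint sum. The value obtained is governed by an entropy optimization: one maximizes an expression of the form $H(P) - \alpha H(P_{\text{mid}}) - H(P_{\text{outer}})$ subject to marginal constraints dictated by the shape of the factor tensors in $CW_q$ (the two $\langle 1, q, 1\rangle$ blocks and the two $\langle 1, 1, q\rangle$ blocks, etc.). Plugging the solution into the $\tau$-theorem yields an inequality relating $\omega(1, \alpha, 1)$, $q$, and the entropy value; one then optimizes over $q$ (and over the tensor-power level) to push $\alpha^*$ as high as possible while still forcing the right-hand side down to $2$.

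The main obstacle, as always in this line of work, will not be any single conceptual ingredient but the combinatorial/numerical optimization. Each additional digit of $\alpha^*$ has historically required going to higher tensor powers of $CW_q$, which blows up the number of level-set variables and constraints, and demands both analytic symmetrization (to reduce the program to a manageable size) and computer-assisted solution with verified numerics. Obtaining the threshold $0.31389$ specifically requires working with at least the fourth (or higher) power of $CW_q$ and carefully balancing the tradeoff between the number of admissible holes (which increases the extracted value) and the growing collision cost in the hashing step. I would expect the heart of the write-up to be the explicit entropy program, the argument that its optimum exceeds the $\tau$-theorem threshold for any $\alpha \le 0.31389$, and a rigorous numerical certificate for the borderline case.
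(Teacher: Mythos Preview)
This theorem is not proved in the paper; it appears in the preliminaries section as a cited black-box result from \cite{gall2018improved}, invoked later only as a tool inside the algorithmic upper bound (Lemma~\ref{lm:algo-Max-IP-M}). There is therefore no ``paper's own proof'' to compare your proposal against.

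Your sketch is a fair high-level description of the laser-method machinery behind Le~Gall's bound on the dual exponent, but for the purposes of this paper no such argument is expected: the statement is quoted, not derived. If you were asked to supply a proof here, the appropriate response is simply to cite \cite{gall2018improved} (and, for the weaker constant $0.172$ with only a $\polylog$ overhead, \cite{coppersmith1982rapid}); reproducing the tensor-power optimization would be far outside the scope of this paper.
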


\begin{theo}[\cite{coppersmith1982rapid}]\label{theo:fast-matrix-mult-polylog}
	There is an $N^{2} \cdot \polylog(N)$ time algorithm for multiplying two matrices $A$ and $B$ with size $N \times N^{\alpha}$ and $N^{\alpha} \times N$, where $\alpha > 0.172$.
\end{theo}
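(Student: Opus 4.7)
The plan is to prove this via the bilinear-algorithm / tensor-rank framework for matrix multiplication, following Coppersmith's 1982 approach. The target is to show that the exponent $\omega(1,\alpha,1)$ for multiplying an $N \times N^\alpha$ matrix by an $N^\alpha \times N$ matrix equals $2$, with only $\polylog(N)$ overhead, for some concrete $\alpha > 0.172$.

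First I would set up the bilinear-complexity framework. Recall that the tensor $\langle m, n, p \rangle$ encodes the $m \times n$ by $n \times p$ matrix multiplication problem, and that if its (border) rank is at most $r$, recursion plus the symmetry of the tensor yield running-time bounds for larger instances. For the very skew case we care about, the right tool is Sch\"onhage's partial/direct-sum theorem, which converts an efficient bilinear algorithm for a disjoint sum of small rectangular matrix multiplications into an asymptotic bound on $\omega(1,\alpha,1)$.

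Second, following Coppersmith's original construction, I would write down an explicit small bilinear algorithm for a partial matrix-multiplication tensor shaped like $\langle 1, h, h\rangle$ (or its rotation) whose rank is substantially below the generic count, exploiting the very asymmetric aspect ratio. The construction aggregates many independent inner products into a single trilinear form and then is tensor-powered $\Theta(\log N)$ times. Unlike the symmetric $\omega < 2.38$ constructions, which rely on $\tau$-theorem aggregation and introduce an unavoidable $N^{o(1)}$ slack, the rectangular construction here is balanced enough that the recursion can be analyzed directly, so the overhead stays multiplicative in $\polylog(N)$ rather than $N^{o(1)}$. A square-to-rectangular blocking argument then lifts the base construction to arbitrary $N$.

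Third, I would fix the concrete threshold by numerically tuning the parameters of the base tensor so that the effective rectangular exponent equals exactly $2$, which is where the $\alpha > 0.172$ bound comes from. The main obstacle is the second step: finding a base tensor whose recursive powering both (a) achieves exponent $2$ for the chosen $\alpha$ and (b) avoids the asymptotic $N^{o(1)}$ slack produced by standard $\tau$-theorem aggregation. This is the core algebraic content of Coppersmith's construction and is not straightforward to reconstruct from scratch; I would likely first verify the weaker $N^{2+o(1)}$ bound via a generic laser-method argument and only then attempt to tighten the recursion analysis to $\polylog(N)$ overhead.
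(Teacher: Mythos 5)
This statement is not proved in the paper at all: it is quoted as a known result of Coppersmith~\cite{coppersmith1982rapid} and used as a black box, so the only way to ``prove'' it is to actually reproduce Coppersmith's construction. Your proposal does not do that. It correctly names the ambient framework (bilinear algorithms, border rank, Sch\"onhage's partial/direct-sum technique, tensor powering), but the two items that carry all the content are exactly the ones you leave open: (a) the explicit base identity for the skew partial matrix-multiplication tensor, which you acknowledge you cannot reconstruct, and (b) the justification that the recursion incurs only a $\polylog(N)$ multiplicative overhead rather than the $N^{o(1)}$ slack that generic Sch\"onhage/laser-method arguments produce. Point (b) is asserted (``the rectangular construction here is balanced enough that the recursion can be analyzed directly''), but this is precisely the non-trivial part of Coppersmith's paper: converting border-rank/approximate bilinear algorithms into exact ones and bounding the degree growth and recursion depth so that the total cost is $O(N^{2}\log^{2}N)$, not $N^{2+o(1)}$. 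Without an argument here, nothing distinguishes this theorem from the weaker $N^{2+o(1)}$ statement.

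Your proposed fallback of first establishing the $N^{2+o(1)}$ bound also does not salvage the claim in the context it is used: the paper invokes this theorem specifically to get the second algorithm of Theorem~\ref{theo:Max-IP-M}, whose running time $n^{2-\Omega(\eps)}\cdot\polylog(n)$ is meaningful even when $\eps$ is as small as $\omega(\log\log n/\log n)$; an $N^{o(1)}$ overhead would swallow the savings in that regime and only recover the first algorithm. So the proposal, as written, has a genuine gap: it is a plan that defers the core algebraic construction and the polylog-overhead analysis, both of which are the substance of the cited result.
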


\subsection{Number Theory}

Here we recall some facts from number theory. In our reduction from $\OV$ to $\Hopcroft$, we will apply the famous prime number theorem, which supplies a good estimate of the number of primes smaller than a certain number. See e.g.~\cite{apostol2013introduction} for a reference on this.

%\lnote{probably add a referent to a book for these famous result?}

\begin{theo}[Prime Number Theorem]
	Let $\pi(n)$ be the number of primes $\le n$, then we have
	$$
	\lim_{n \to \infty} \frac{\pi(n)}{n / \ln n} = 1.
	$$
\end{theo}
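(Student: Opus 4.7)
The plan is to follow the classical analytic approach in the streamlined form due to Newman. First I would reduce PNT to the asymptotic $\psi(x) \sim x$, where $\psi(x) := \sum_{p^k \le x} \log p$ is Chebyshev's function; the equivalence with $\pi(n) \sim n/\ln n$ follows by partial summation together with the elementary Chebyshev bounds $\psi(x) = \Theta(x)$ (which are proved from $\binom{2n}{n} < 4^n$ by examining the prime factorization of the binomial coefficient, and give the ``cheap'' direction $\pi(n) = \Theta(n/\ln n)$).

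The analytic input is the Riemann zeta function $\zeta(s) = \sum_{n \ge 1} n^{-s}$ for $\Re(s) > 1$, together with the auxiliary Dirichlet series $\Phi(s) := \sum_{p} \frac{\log p}{p^{s}}$, which is closely related to $-\zeta'(s)/\zeta(s)$. Using the Euler product $\zeta(s) = \prod_p (1 - p^{-s})^{-1}$, I would show that $\zeta(s) - \frac{1}{s-1}$ extends holomorphically past the line $\Re(s) = 1$, and hence so does $\Phi(s) - \frac{1}{s-1}$ provided $\zeta$ has no zeros on that line. The non-vanishing of $\zeta$ on $\Re(s) = 1$ is the classical step: using $\log|\zeta(s)| = \Re \sum_{p,k} \frac{1}{k p^{ks}}$ together with the inequality $3 + 4\cos\theta + \cos(2\theta) = 2(1+\cos\theta)^2 \ge 0$, one obtains $|\zeta(\sigma)|^{3} |\zeta(\sigma+it)|^{4} |\zeta(\sigma+2it)| \ge 1$ for $\sigma > 1$, which rules out a hypothetical zero at $1 + it$ by letting $\sigma \to 1^{+}$.

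The final step is a Tauberian argument applied to $f(t) := \psi(e^{t}) e^{-t} - 1$: its Laplace transform can be written in terms of $\Phi$ and, by the previous paragraph, extends holomorphically to an open neighborhood of $\Re(z) \ge 0$. Newman's analytic theorem then asserts that $\int_{0}^{\infty} f(t)\, dt$ converges, and combining this with the monotonicity of $\psi$ yields $\psi(x) \sim x$, hence PNT.

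The main obstacles are the two places where complex analysis does genuine work: the non-vanishing of $\zeta$ on $\Re(s) = 1$, and the contour-shifting argument behind Newman's Tauberian theorem. Every known proof of PNT must confront at least one of these, with the Erd\H{o}s--Selberg elementary proof trading the Tauberian step for a delicate combinatorial induction on sums like $\sum_{p \le x} \log^{2} p$. In the context of the present paper PNT is used only as a black box to estimate $\pi(n)$ for selecting distinct small primes in the Chinese Remainder Theorem based reduction, so the crude form $\pi(n) = \Theta(n/\ln n)$ from Chebyshev's bounds already suffices, and the full strength of the theorem is invoked only for convenience.
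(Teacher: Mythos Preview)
Your outline of Newman's analytic proof is correct and standard, but note that the paper does not actually prove the Prime Number Theorem at all: it is stated in the preliminaries with a citation to a textbook (Apostol) and used purely as a black box. So there is no ``paper's own proof'' to compare against; the authors simply import the result.

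Your closing remark is exactly right and worth emphasizing: the only application in the paper is Lemma~\ref{lm:many-primes}, which needs $10n$ distinct primes in $[n+1,n^2]$. For this, the Chebyshev estimate $\pi(x) = \Theta(x/\ln x)$ is already more than sufficient, since it gives $\pi(n^2) - \pi(n) = \Theta(n^2/\ln n) \gg 10n$. So the full asymptotic $\pi(n) \sim n/\ln n$ is invoked only for convenience of exposition, and your sketch of the Chebyshev bounds via $\binom{2n}{n}$ would in fact cover everything the paper needs without any complex analysis.
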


From a simple calculation, we obtain:

\begin{lemma}\label{lm:many-primes}
	There are $10n$ distinct primes in $[n+1,n^2]$ for a large enough $n$.
\end{lemma}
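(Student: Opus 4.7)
The plan is to derive the claim directly from the Prime Number Theorem stated immediately above. By PNT, the number of primes in the interval $[n+1, n^2]$ equals $\pi(n^2) - \pi(n)$, and I want to show this quantity is at least $10n$ for all sufficiently large $n$.

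First, I would pick a small constant, say $\delta = 1/10$, and use the definition of the limit in PNT to fix an $n_0$ such that for all $m \ge n_0$,
\[
(1-\delta) \cdot \frac{m}{\ln m} \;\le\; \pi(m) \;\le\; (1+\delta) \cdot \frac{m}{\ln m}.
\]
Applying the lower bound with $m = n^2$ (note $\ln(n^2) = 2 \ln n$) and the upper bound with $m = n$, I obtain, for all $n$ with $n \ge n_0$,
\[
\pi(n^2) - \pi(n) \;\ge\; \frac{0.9 \, n^2}{2 \ln n} \;-\; \frac{1.1 \, n}{\ln n} \;=\; \frac{n}{\ln n}\left(0.45 \, n - 1.1\right).
\]

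Next I would observe that the right-hand side dominates $10n$ eventually: indeed, once $0.45 n - 1.1 \ge 20 \ln n$ (which certainly holds for all sufficiently large $n$, since $n/\ln n \to \infty$), the bound gives $\pi(n^2) - \pi(n) \ge 10n$. Choosing $n$ larger than both $n_0$ and the threshold at which this last inequality kicks in completes the argument.

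There is essentially no obstacle here — the only mild care is in handling the two sides of the PNT estimate with the correct direction of inequality (lower bound on $\pi(n^2)$, upper bound on $\pi(n)$) so that their difference can be bounded from below, and noting that $\ln(n^2) = 2\ln n$ so the leading term is $n^2/(2\ln n)$ rather than $n^2/\ln n$. Everything else is an elementary asymptotic comparison between $n^2/\ln n$ and $10n$.
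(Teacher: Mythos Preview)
Your proposal is correct and follows essentially the same approach as the paper: both compute $\pi(n^2)-\pi(n)$ via the Prime Number Theorem, obtain a leading term of order $n^2/(2\ln n)$, and observe that this dominates $10n$. Your version simply makes the constants and the direction of the inequalities explicit, whereas the paper leaves it as a one-line asymptotic calculation.
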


\begin{proof}
	For a large enough $n$, from the prime number theorem, the number of primes in $[n+1,n^2]$ is equal to
	$$
	\pi(n^2) - \pi(n) \sim n^2 / 2 \ln n - n/\ln n \gg 10n.
	$$        
\end{proof}

Next we recall the Chinese remainder theorem, and Chinese remainder representation.

\begin{theo}\label{theo:CRR}
	Given $d$ pairwise co-prime integers $q_1,q_2,\dotsc,q_{d}$, and $d$ integers $r_1,r_2,\dotsc,r_d$, there is exactly one integer $0 \le t  < \prod_{i=1}^{d} q_i$ such that
	$$
	t \equiv r_i \pmod{q_i} \quad\text{for all $i \in [d]$.}
	$$
	We call this $t$ the Chinese remainder representation (or the CRR encoding) of the $r_i$'s (with respect to these $q_i$'s). We also denote 
	$$
	t = \CRR(\{r_i\} ; \{q_i\})
	$$
	for convenience. We sometimes omit the sequence $\{q_i\}$ for simplicity, when it is clear from the context.
	
	Moreover, $t$ can be computed in polynomial time with respect to the total bits of all the given integers.
\end{theo}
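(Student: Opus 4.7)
The plan is to establish uniqueness by a simple subtraction argument and existence by induction on $d$, while the polynomial-time computability will come from the extended Euclidean algorithm at each inductive step. Let $Q := \prod_{i=1}^d q_i$.

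For uniqueness, I would take two integers $t, t' \in [0, Q)$ both satisfying $t \equiv t' \equiv r_i \pmod{q_i}$ for all $i \in [d]$. Then $q_i \mid (t - t')$ for each $i$. Since the $q_i$ are pairwise coprime, an elementary induction on $d$ (using that $\gcd(a,b)=1$ together with $a \mid m$ and $b \mid m$ forces $ab \mid m$) yields $Q \mid (t-t')$. Combined with $|t - t'| < Q$, this forces $t = t'$.

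For existence, I would induct on $d$. The case $d=1$ is immediate by reducing $r_1$ modulo $q_1$. For the inductive step, let $Q' := \prod_{i=1}^{d-1} q_i$ and suppose we have $t' \in [0, Q')$ satisfying the first $d-1$ congruences. Since $q_d$ is coprime to each $q_i$ with $i < d$, we have $\gcd(Q', q_d) = 1$, and Bezout's identity supplies integers $u, v$ with $u Q' + v q_d = 1$. Then the integer
\[
t := r_d \cdot u Q' + t' \cdot v q_d \pmod{Q}
\]
reduced to $[0, Q)$ satisfies $t \equiv t' \equiv r_i \pmod{q_i}$ for $i < d$ (since $v q_d \equiv 1 \pmod{Q'}$ and $uQ' \equiv 0 \pmod{Q'}$) and $t \equiv r_d \pmod{q_d}$ (since $uQ' \equiv 1 \pmod{q_d}$ and $v q_d \equiv 0 \pmod{q_d}$), as required.

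For the computational claim, note that the extended Euclidean algorithm computes the Bezout coefficients $u, v$ in time polynomial in $\log Q' + \log q_d$, and the arithmetic defining $t$ involves integers bounded by $Q^2$. Iterating this $d$ times gives total runtime polynomial in $\sum_i \log q_i$, i.e., polynomial in the total bit-length of the input. There is no real obstacle here; the only thing to watch is ensuring intermediate quantities stay bounded by $O(Q)$ after each reduction, which is automatic since we reduce modulo the running product after every step.
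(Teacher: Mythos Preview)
Your proof is correct and entirely standard: the uniqueness via divisibility by pairwise coprime moduli, existence via induction using B\'ezout coefficients, and the polynomial-time bound via iterated extended Euclidean algorithm are all fine. Note, however, that the paper does not actually prove this theorem; it is stated in the preliminaries as a recalled fact (``Next we recall the Chinese remainder theorem\ldots'') without proof, so there is no paper argument to compare against.
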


\subsection{Communication Complexity}

%We first introduce the well-known Set-Disjointness problem in communication complexity.

In our paper we will make use of a certain kind of $\MA$ protocol, we call them $(m,r,\ell,s)$-efficient protocols\footnote{Our notations here are adopted from~\cite{karthik2017parameterized}. They also defined similar $k$-party communication protocols, while we only discuss $2$-party protocols in this paper.}.

\begin{defi}
	We say an $\MA$ Protocol is $(m,r,\ell,s)$-efficient for 
	a communication problem, if in the protocol:
	
	\begin{itemize}
		\item There are three parties Alice, Bob and Merlin in the protocol, Alice holds input $x$ and Bob holds input $y$.
		
		\item Merlin sends an advice string $z$ of length $m$ to Alice, which is a function of $x$ and $y$.
		
		\item Alice and Bob jointly toss $r$ coins to obtain a random string $w$ of length $r$.
		
		\item Given $y$ and $w$, Bob sends Alice a message of length $\ell$.
		
		\item After that, Alice decides whether to accept or not.
		
		\begin{itemize}
			\item When the answer is yes, Merlin has exactly one advice such that Alice always accept.
			
			\item When the answer is no, or Merlin sends the wrong advice, Alice accepts with probability at most $s$.
		\end{itemize}
	\end{itemize}
\end{defi}

\subsection{Derandomization}

We make use of expander graphs to reduce the amount of random coins needed in one of our communication protocols. We abstract the following result for our use here.

\begin{theo}[see e.g. Theorem 21.12 and Theorem~21.19 in~\cite{AB09-book}]\label{theo:derand}
	Let $m$ be an integer, and set $B \subseteq [m]$. Suppose $|B| \ge m/2$. There is a universal constant $c_1$ such that for all $\eps < 1/2$, there is a $\poly(\log m,\log \eps^{-1})$-time computable function $\mathcal{F} :　\{0,1\}^{\log m + c_1 \cdot \log \eps^{-1}} \to [m]^{c_1 \cdot \log \eps^{-1}}$, such that
	\[
	\Pr_{w \in \{0,1\}^{\log m + c_1 \cdot \log \eps^{-1}}}\left[ a \notin B \text{ for all $a \in \mathcal{F}(w)$}  \right] \le \eps,
	\]
	here $a \in \mathcal{F}(w)$ means $a$ is one of the element in the sequence $\mathcal{F}(w)$.
\end{theo}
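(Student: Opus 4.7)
The plan is to instantiate this as the standard expander hitting-set construction; since this is a well-known derandomization result essentially quoted from \cite{AB09-book}, the proposal just sketches the underlying argument. First, I would fix an explicit, constant-degree expander graph $G$ on vertex set $[m]$ whose normalized second-largest eigenvalue is bounded by some absolute constant $\lambda < 1/4$. Such expanders (e.g. explicit Ramanujan or near-Ramanujan families, or the Gabber--Galil / Margulis constructions) are computable in $\poly(\log m)$ time per step: given a vertex label we can compute the list of neighbors in polylogarithmic time.

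Next, interpret the $\log m + c_1 \log \eps^{-1}$ random bits of $w$ as: (i) a uniformly random starting vertex $v_0 \in [m]$ (using $\log m$ bits), followed by (ii) $k = c_1 \log \eps^{-1}$ neighbor-choices (each using $O(1)$ bits, since $G$ has constant degree), producing a random walk $v_0, v_1, \dots, v_k$. Define $\mathcal{F}(w) := (v_0, v_1, \dots, v_{k})$. Computability of $\mathcal{F}$ in $\poly(\log m, \log \eps^{-1})$ time is immediate from the explicit construction.

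The key step is the standard expander hitting lemma (Ajtai--Komlós--Szemerédi, Alon--Feige--Wigderson, see Theorem 21.12 in \cite{AB09-book}): for any set $T \subseteq [m]$ with $|T|/m \le 1/2$, the probability that every vertex visited by a length-$k$ random walk lies in $T$ is at most $\bigl((|T|/m)^{1/2} + \lambda\bigr)^k \le (1/\sqrt 2 + \lambda)^k$. Applying this to $T = [m] \setminus B$, which has density at most $1/2$, gives probability at most $\rho^{k}$ for some $\rho < 1$. Choosing the constant $c_1$ so that $\rho^{c_1 \log \eps^{-1}} \le \eps$ yields the desired bound.

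The only subtlety is arranging the explicit expander to be efficiently computable at the required parameters; this is routine given the cited constructions. Thus no genuine obstacle arises beyond appealing to the black-box expander machinery, and the quoted statement follows directly.
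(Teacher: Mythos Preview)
Your proposal is correct and matches the intended approach: the paper does not supply its own proof of this statement but simply quotes it as a known derandomization fact from \cite{AB09-book}, and the expander random-walk hitting argument you sketch is precisely the content of the cited Theorems~21.12 and~21.19 there. The only (cosmetic) point to tidy is that the same constant $c_1$ governs both the seed length and the output length, so you should fold the constant-degree factor into $c_1$ rather than writing ``$O(1)$ bits per step'' separately; otherwise nothing further is needed.
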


	\section{Hardness of Approximate $\MaxIP$}\label{sec:approx}
	
	In this section we prove our characterizations of approximating $\MaxIP$.
	%\lnote{add a recap of Theorem~\ref{theo:Max-IP} here.}
	
	\subsection{The Multiplicative Case}
	
	We begin with the proof of Theorem~\ref{theo:Max-IP-M}. We recap it here for convenience.
	
	\begin{reminder}{Theorem~\ref{theo:Max-IP-M}}
		Letting $\omega(\log n) < d < n^{o(1)}$ and $t \ge 2$, the following holds:
		
		\begin{enumerate}
			\item There is an $n^{2 - \Omega(1)}$-time $t$-\Mapprox\ algorithm for $\MaxIP_{n,d}$ if 
			$$
			t = \left(d/\log n\right)^{\Omega(1)},
			$$
			and under SETH (or OVC), there is no $n^{2 - \Omega(1)}$-time $t$-\Mapprox\ algorithm for $\MaxIP_{n,d}$ if
			$$
			t = \left(d/\log n\right)^{o(1)}.
			$$
			\item Moreover, let $\eps = \min\left( \frac{\log t}{\log (d/\log n)}, 1 \right)$. There are $t$-\Mapprox\ deterministic algorithms for $\MaxIP_{n,d}$ running in time
			\[
			O\left(n^{2 + o(1) - 0.31 \cdot \frac{1}{\eps^{-1} + \frac{0.31}{2}}} \right) = O\left( n^{2 + o(1) - \Omega(\eps)} \right)
			\]
			or time
			\[
			O\left(n^{2 - 0.17 \cdot \frac{1}{\eps^{-1} + \frac{0.17}{2}}} \cdot \polylog(n) \right) = O\left( n^{2 - \Omega(\eps)} \cdot \polylog(n) \right).
			\]
		\end{enumerate}
	\end{reminder}

	In Lemma~\ref{lm:algo-Max-IP-M}, we construct the desired approximate algorithm and in Lemma~\ref{lm:lowb-Max-IP-M} we prove the lower bound.
	
	%\lnote{need to update this algorithm...}
	
	\subsubsection*{The Algorithm}	
	First we need the following simple lemma, which says that the $k$-th root of the sum of the $k$-th powers of non-negative reals gives a good approximation to their maximum.

\begin{lemma}\label{lm:simple-approx}
	Let $S$ be a set of non-negative real numbers, $k$ be an integer, and $x_{max} := \max_{x \in S} x$. We have
	$$
	\left(\sum_{x\in S} x^k \right)^{1/k} \in \left[x_{max},x_{max} \cdot |S|^{1/k}\right].
	$$
\end{lemma}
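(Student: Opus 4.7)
The plan is to prove the two inclusions separately, each by a one-line bounding argument on the sum $\sum_{x \in S} x^k$ followed by taking $k$-th roots (which preserves the inequality since all quantities are non-negative).

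For the lower bound $\left(\sum_{x \in S} x^k\right)^{1/k} \ge x_{\max}$, I would note that $x_{\max}^k$ appears as one of the summands, and all other summands are non-negative, so $\sum_{x \in S} x^k \ge x_{\max}^k$. Taking the (monotone) $k$-th root yields the desired inequality.

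For the upper bound $\left(\sum_{x \in S} x^k\right)^{1/k} \le x_{\max} \cdot |S|^{1/k}$, I would use that each summand satisfies $x^k \le x_{\max}^k$ (since $x \le x_{\max}$ and $k$-th powering is monotone on non-negative reals), so $\sum_{x \in S} x^k \le |S| \cdot x_{\max}^k$. Taking $k$-th roots gives the claim.

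Both steps are entirely routine, and there is no real obstacle; the only minor thing to verify is that $k$-th roots of non-negative reals are monotone, which is standard. No further machinery is needed.
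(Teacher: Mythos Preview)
Your proposal is correct and matches the paper's proof essentially verbatim: the paper also observes $\sum_{x\in S} x^k \in [x_{\max}^k,\, |S|\cdot x_{\max}^k]$ and then takes $k$-th roots.
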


\begin{proof}
	Since
	$$
	\left( \sum_{x \in S} x^k \right) \in \left[x_{max}^k, |S| \cdot x_{max}^k \right],
	$$
	the lemma follows directly by taking the $k$-th root of both sides.
	
\end{proof}

\begin{lemma}~\label{lm:algo-Max-IP-M}
	Assuming $\omega(\log n) < d < n^{o(1)}$ and letting $\eps = \min\left( \frac{\log t}{\log (d/\log n)}, 1 \right)$, there are $t$-\Mapprox\ deterministic algorithms for $\MaxIP_{n,d}$ running in time
		\[
		O\left(n^{2 + o(1) - 0.31 \cdot \frac{1}{\eps^{-1} + \frac{0.31}{2}}} \right) = O\left( n^{2 + o(1) - \Omega(\eps)} \right)
		\]
		or time
		\[
		O\left(n^{2 - 0.17 \cdot \frac{1}{\eps^{-1} + \frac{0.17}{2}}} \cdot \polylog(n) \right) = O\left( n^{2 - \Omega(\eps)} \cdot \polylog(n) \right).
		\]
\end{lemma}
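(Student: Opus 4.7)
The plan is to apply the polynomial method combined with fast rectangular matrix multiplication (Theorem~\ref{theo:fast-matrix-mult} and Theorem~\ref{theo:fast-matrix-mult-polylog}), in the now-standard Williams--Abboud-style framework.

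First, I would partition $A$ and $B$ into $N := n/s$ groups of size $s := n^\sigma$ (with $\sigma$ to be fixed), giving $A_1, \ldots, A_N$ and $B_1, \ldots, B_N$. By Lemma~\ref{lm:simple-approx} applied to the $s^2$ values $\{a \cdot b : a \in A_i, b \in B_j\}$, the quantity $S_{i,j}^{1/k}$, where $S_{i,j} := \sum_{a \in A_i,\, b \in B_j} (a \cdot b)^k$, approximates $\max_{a \in A_i, b \in B_j} a \cdot b$ within a factor of $s^{2/k}$. Choosing $k := \lceil 2\log s / \log t \rceil$ forces $s^{2/k} \le t$, so reporting $\max_{i,j} S_{i,j}^{1/k}$ yields a $t$-multiplicative approximation to $\OPT(A,B)$.

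Next, I would compute all $S_{i,j}$ in bulk via fast matrix multiplication. Expanding $(a \cdot b)^k = \sum_{(i_1,\ldots,i_k)} \prod_{j} a_{i_j} b_{i_j}$ and using $a_i^2 = a_i$ to collapse to multilinear monomials, $(a \cdot b)^k$ becomes a bilinear form in feature vectors $\phi(a), \phi(b)$ indexed by subsets of $[d]$ of size at most $k$, so the feature dimension is $D := \sum_{j=0}^{k} \binom{d}{j} \le (O(d/k))^k$. Setting $\phi(A_i) := \sum_{a \in A_i} \phi(a)$ and $\phi(B_j) := \sum_{b \in B_j} \phi(b)$, the matrix of all $S_{i,j}$ is the product of an $N \times D$ integer matrix by a $D \times N$ integer matrix (with $O(\log n)$-bit entries), which is computable in $N^{2+o(1)}$ time by Theorem~\ref{theo:fast-matrix-mult} whenever $D \le N^{0.31389}$, and in $N^2 \cdot \polylog n$ time by Theorem~\ref{theo:fast-matrix-mult-polylog} whenever $D \le N^{0.172}$.

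The final step is to balance the parameters. The constraint $\log D \le \alpha \log(n/s)$ becomes $k \log(O(d/k)) \le \alpha(1-\sigma)\log n$, which together with $k = \Theta(\sigma \log n / \log t)$ gives $\sigma \le \alpha/\bigl(2 \log(O(d/k))/\log t + \alpha\bigr)$. Since the optimal $k$ is $\Theta(\log n)$ and $d = \omega(\log n)$ by hypothesis, $\log(d/k) = (1+o(1))\log(d/\log n)$, and the constraint collapses to $\sigma \le \alpha / (2\eps^{-1} + \alpha) \cdot (1 - o(1))$ with $\eps = \log t / \log(d/\log n)$. The runtime exponent $2 - 2\sigma = 2 - \alpha/(\eps^{-1} + \alpha/2)$ then yields the first bound with $\alpha = 0.31389$ (the $(1-o(1))$ slippage absorbed into $n^{o(1)}$), and the second with $\alpha = 0.172$ (absorbed into the $\polylog n$ factor, with the slack $0.17 < 0.172$ providing the needed cushion).

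The principal obstacle I anticipate is this bookkeeping of lower-order terms: justifying that the $(1+o(1))$ ratio between $\log(d/k)$ and $\log(d/\log n)$ can be absorbed cleanly into the $n^{o(1)}$ (or $\polylog n$) factor requires verifying that the minimizing $k$ really lies in the $\Theta(\log n)$ regime, and that the mild numerical cushion between $0.31$ and $0.31389$ (and between $0.17$ and $0.172$) suffices for all admissible $\eps \in (0,1]$. The remaining steps are mechanical: rounding $k$ up to an integer only tightens the approximation; the degenerate regime $k \ge d$ has $D \le 2^d = n^{o(1)}$ so the matrix multiplication is trivially cheap; and the final $k$-th root is applied to an integer bounded by $\poly(n)$ and costs $\polylog(n)$ per group pair.
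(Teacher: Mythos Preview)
Your proposal is correct and follows essentially the same approach as the paper: expand $(a\cdot b)^k$ into multilinear monomials over $\{0,1\}$, aggregate the feature vectors over groups, reduce to rectangular matrix multiplication, and balance the group size against the feature dimension exactly as the paper does (your $k,s,D$ are the paper's $r,b,m$, with the same relation $s=t^{k/2}$). The only slips are cosmetic---the optimal $k$ is $\Theta(\log n/\log c)$ rather than $\Theta(\log n)$ (though your conclusion $\log(d/k)=(1+o(1))\log c$ remains valid), and the matrix entries have $\polylog(n)$ rather than $O(\log n)$ bits---neither of which affects the argument or the stated running times.
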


\begin{proof}
	Let $d =  c \cdot \log n$. From the assumption, we have $c = \omega(1)$, and $\eps = \min\left( \frac{\log t}{\log c}, 1 \right)$. When $\log t > \log c$, we simply use a $c$-\Mapprox\ algorithm instead, hence in the following we assume $\log t \le \log c$. We begin with the first algorithm here.
	
	\paragraph*{Construction and Analysis of the Power of Sum Polynomial $P_r(z)$.}
	Let $r$ be a parameter to be specified later and $z$ be a vector from $\{0,1\}^d$, consider the following polynomial
	$$
	P_{r}(z) := \left(\sum_{i=1}^{d} z_i \right)^{r}.
	$$
	
	Observe that since each $z_i$ takes value in $\{0,1\}$, we have $z_i^k = z_i$ for $k \ge 2$. Therefore, by expanding out the polynomial and replacing all $z_i^{k}$ with $k \ge 2$ by $z_i$, we can write $P_{r}(z)$ as
	$$
	P_{r}(z) = \sum_{S \subseteq [d], |S| \le r} c_S \cdot z_S. 
	$$
	
	In which $z_S := \prod_{i \in S} z_i$, and the $c_S$'s are the corresponding coefficients. Note that $P_{r}(z)$ has $$
	m := \sum_{k=0}^{r} \binom{d}{k} \le \left( \frac{e d}{r} \right)^{r}
	$$
	terms.
	
	Then consider $P_{r}(x,y) := P_{r}(x_1 \cdot y_1,x_2 \cdot y_2,\dotsc,x_{d}\cdot y_{d})$, plugging in $z_i := x_i \cdot y_i$, it can be written as
	$$
	P_{r}(x,y) := \sum_{S \subseteq [d], |S| \le r} c_S \cdot x_S \cdot y_S,
	$$
	where $x_S := \prod_{i \in S} x_i$, and $y_S$ is defined similarly.
	
	\paragraph*{Construction and Analysis of the Batch Evaluation Polynomial $P_r(X,Y)$.}
	Now, let $X$ and $Y$ be two sets of $b = t^{r/2}$ vectors from $\{0,1\}^{d}$, we define
	$$ 
	P_r(X,Y) := \sum_{x \in X, y \in Y} P_{r}(x,y) = \sum_{x \in X,y \in Y} (x \cdot y)^{r}.
	$$
	
	By Lemma~\ref{lm:simple-approx}, we have
	\[
	P_r(X,Y)^{1/r} \in \left[\OPT(X,Y),\OPT(X,Y) \cdot t\right],
	\]
	recall that $\OPT(X,Y) := \max_{x \in X,y \in Y} x \cdot y$.
	
	\paragraph*{Embedding into Rectangle Matrix Multiplication.} Now, for $x,y \in \{0,1\}^{d}$, we define the mapping $\phi_x(x)$ as
	\[
	\phi_x(x) := \left(c_{S_1} \cdot x_{S_1},c_{S_2} \cdot x_{S_2},\dotsc,c_{S_m} \cdot x_{S_m}\right)
	\]
	and
	\[
	\phi_y(y) := \left(y_{S_1},y_{S_2},\dotsc,y_{S_m}\right),
	\]
	where $S_1,S_2,\dotsc,S_m$ is an enumeration of all sets $S \subseteq [d]$ and $|S| \le r$.
	
	From the definition, it follows that
	$$
	\phi_x(x) \cdot \phi_y(y) = P_{r}(x,y)
	$$
	for every $x,y \in \{0,1\}^{d}$.
	
	%For a vector $v \in \{0,1\}^*$, we use $v^{(k)}$ to denote the vector constructed by concatenating $k$ copies of $v$. And for two vectors $u,v \in \{0,1\}^{*}$, we use $u \circ v$ to denote their concatenation.
	
	Then for each $X$ and $Y$, we map them into $m$-dimensional vectors $\phi_X(X)$ and $\phi_Y(Y)$ simply by a summation:
	\[
	\phi_X(X) := \sum_{x \in X} \phi_x(x) \quad\text{and}\quad \Phi_Y(Y) := \sum_{y \in Y} \phi_y(y).
	\]
	We can see
	\[
	\phi_X(X) \cdot \phi_Y(Y) = \sum_{x \in X} \phi_x(x) \cdot \sum_{y \in Y} \phi_y(y) = \sum_{x \in X} \sum_{y \in Y} P_r(x,y) = P_r(X,Y).
	\]
	
	Given two sets $A,B$ of $n$ vectors from $\{0,1\}^d$, we split $A$ into $n/b$ sets $A_1,A_2,\dotsc,A_{n/b}$ of size $b$, and split $B$ in the same way as well. Then we construct a matrix $M_A(M_B)$ of size $n/b \times m$, such that the $i$-th row of $M_A(M_B)$ is the vector $\Phi_X(A_i)(\Phi_Y(B_i))$. After that, the evaluation of $P_r(A_i,B_j)$ for all $i,j \in [n/b]$ can be reduced to compute the matrix product $M_A \cdot M_B^{T}$. After knowing all $P_r(A_i,B_j)$'s, we simply compute the maximum of them, whose $r$-th root gives us a $t$-\Mapprox\ answer of the original problem.
	
	\paragraph*{Analysis of the Running Time.}
	Finally, we are going to specify the parameter $r$ and analyze the time complexity. In order to utilize the fast matrix multiplication algorithm from Theorem~\ref{theo:fast-matrix-mult}, we need to have
	$$
	m \le (n/b)^{0.313},
	$$
	then our running time is simply $(n/b)^{2 + o(1)} = n^{2 + o(1)} / b^2$.
	
	We are going to set $r = k \cdot \log n /\log c$, and our choice of $k$ will satisfy $k = \Theta(1)$. We have
	\[
	m \le \left(\frac{e \cdot d}{r}\right)^{r} \le \left( \frac{c \log n \cdot e}{k \cdot \log n/\log c} \right)^{k \cdot \log n/\log c},
	\]
	and therefore
	\[
	\log m \le k \cdot \log n \left[ \log \frac{c\log c}{k} + 1 \right] \Big/ \log c.
	\]
	
	Since $c = \omega(1)$ and $k = \Theta(1)$, we have
	\[
	\log m \le (1 + o(1)) \cdot k \log n = k \log n + o(\log n).
	\]
	
	Plugging in, we have
	\begin{align*}
	&m \le (n/b)^{0.313} \\
	\impliedby& \log m \le 0.313 \cdot (\log n - \log b)\\
	\impliedby& k \log n \le 0.31 \cdot (\log n - \log b)\\
	\impliedby& 0.31 \cdot (r/2) \cdot \log t + k \log n \le 0.31 \log n \tag{$b = t^{r/2}$}\\
	\impliedby& \frac{\log n}{\log c} \cdot k \cdot \log t \cdot \frac{0.31}{2} + k \log n \le 0.31 \log n \tag{$r = k \cdot \log n /\log c$}\\
	\impliedby& k \cdot \left\{ 1 + \frac{\log t}{\log c} \cdot \frac{0.31}{2} \right\} \le 0.31\\
	\impliedby& k = \frac{0.31}{1 + \frac{\log t}{\log c} \cdot \frac{0.31}{2}} = \frac{0.31}{1 + \frac{0.31}{2} \cdot \eps}.
	\end{align*}
	
	Note since $\eps \in [0,1]$, $k$ is indeed $\Theta(1)$.
	
	Finally, with our choice of $k$ specified, our running time is $n^{2 + o(1)} / b^2 = n^{2 +o(1)} / t^r$.
	
	By a simple calculation,
	\begin{align*}
	\log t^r &= r \cdot \log t \\
	&= k \cdot \log n / \log c \cdot \log t\\
	&= \log n \cdot \left\{ \frac{\log t}{\log c} \cdot \frac{0.31}{1 + \frac{0.31}{2} \cdot \eps} \right\}\\
	&= \log n \cdot \frac{0.31\eps}{1 + \frac{0.31}{2} \cdot \eps}\\
	&= \log n \cdot \frac{0.31}{\eps^{-1} + \frac{0.31}{2}}. 
	\end{align*}
	
	Hence, our running time is
	\[
	n^{2 + o(1)} / t^r = n^{2 + o(1) - \frac{0.31}{\eps^{-1} + \frac{0.31}{2}}}
	\]
	as stated.
	
	\paragraph*{The Second Algorithm.} The second algorithm follows exactly the same except for that we apply Theorem~\ref{theo:fast-matrix-mult-polylog} instead, hence the constant $0.31$ is replaced by $0.17$.
\end{proof}

\subsubsection*{Generalization to Non-negative Real Case}

Note that Lemma~\ref{lm:simple-approx} indeed works for a set of non-negative reals, we can observe that the above algorithm in fact works for $\posMaxIP_{n,d}$ (which is the same as $\MaxIP$ except for that the sets consisting of non-negative reals):\footnote{In the following we assume a real RAM model of computation for simplicity.}

\begin{reminder}{Corollary~\ref{cor:gen-posR}}
	Assuming $\omega(\log n) < d < n^{o(1)}$ and letting $\eps = \min\left( \frac{\log t}{\log (d/\log n)}, 1 \right)$, there is a $t$-\Mapprox\ deterministic algorithm for $\posMaxIP_{n,d}$ running in time
	\[
	O\left( n^{2 - \Omega(\eps)} \cdot \polylog(n) \right).
	\]
\end{reminder}
\begin{proofsketch}
	
	We can just use the same algorithm in Lemma~\ref{lm:algo-Max-IP-M}, the only difference is on the analysis of the number of terms in $P_{r}(z)$: since $z$ is no longer Boolean, $P_r(z)$ is no longer multi-linear, and we need to switch to a general upper bound $\binom{d + r}{r}$ on the number of terms for $r$-degree polynomials of $d$ variables. This corollary then follows by a similar calculation as in Lemma~\ref{lm:algo-Max-IP-M}.
\end{proofsketch}

\subsubsection*{The Lower Bound}

The lower bound follows directly from the new $\MA$ protocol for Set-Disjointness in~\cite{Rubinstein2017closest}. We present an explicit proof here for completeness. 

Before proving the lower bound we need the following reduction from $\OV$ to $t$-\Mapprox\ $\MaxIP$.

\begin{lemma}[Implicit in Theorem~4.1 of~\cite{Rubinstein2017closest}]	\label{lm:OV-to-MaxIP}
	There is a universal constant $c_1$ such that, for every integer $c$, reals $\varepsilon \in (0,1]$ and $\tau \ge 2$, $\OV_{n,c \log n}$ can be reduced to $n^{\varepsilon}$ $\MaxIP_{n,d}$ instances $(A_i,B_i)$ for $i \in [n^{\varepsilon}]$, such that:
	
	\begin{itemize}
		\item $d = \tau^{\poly(c /\varepsilon) } \cdot \log n$.
		\item Letting $T = c \log n \cdot \tau^{c_1}$, if there is $a \in A$ and $b \in B$ such that $a \cdot b = 0$, then there exists an $i$ such that $\OPT(A_i,B_i) \ge T$.
		\item Otherwise, for all $i$ we must have $\OPT(A_i,B_i) \le T/\tau$.
	\end{itemize}
\end{lemma}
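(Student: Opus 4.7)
The plan is to assemble the reduction from the distributed PCP framework of~\cite{ARW17-proceedings}, now instantiated with Rubinstein's improved $\MA$ protocol for $\DISJ$ from~\cite{Rubinstein2017closest}. Concretely, I would first invoke Rubinstein's algebraic-geometry-code-based $\MA$ protocol for $\DISJ$ on inputs of length $d_0 := c \log n$ with target soundness $s := \tau^{-c_1}$, for a universal constant $c_1$ coming from that protocol's parameters. Inspecting that protocol, the advice length $m_M$, the number of (public) random bits $r_M$, and Bob's message length $\ell_M$ can all be made $\poly(c/\varepsilon) \cdot \log \tau$ up to an additive $O(\log \log n)$ term; if the random-bit count is larger than needed, I would trim it using the expander-based derandomization of Theorem~\ref{theo:derand} at a constant-factor cost in soundness. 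I would then tune the protocol so that $m_M = \varepsilon \log n$ (padding with unused advice bits if necessary), giving $2^{m_M} = n^{\varepsilon}$.

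Next, I would carry out the distributed PCP construction: for each advice string $z \in \{0,1\}^{m_M}$, construct one $\MaxIP$ instance $(A_z, B_z)$. The vectors have dimension $d := 2^{r_M + \ell_M} = \tau^{\poly(c/\varepsilon)} \cdot \log n$, with coordinates indexed by pairs $(w,\mu)$ where $w$ is a random string and $\mu$ is a putative message from Bob. For each $a \in A$ put into $A_z$ the Boolean vector $\phi_{a,z}$ whose $(w,\mu)$-coordinate is $1$ iff Alice accepts on input $a$, advice $z$, randomness $w$, and received message $\mu$; for each $b \in B$ put into $B_z$ the vector $\psi_{b,z}$ whose $(w,\mu)$-coordinate is $1$ iff Bob would send $\mu$ on input $b$, advice $z$, randomness $w$. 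A direct calculation then gives $\phi_{a,z} \cdot \psi_{b,z} = 2^{r_M} \cdot \Pr_w[\text{MA accepts on } (a,b,z,w)]$. This produces $n^{\varepsilon}$ instances of $\MaxIP_{n,d}$, as required.

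Finally I would check the gap. For completeness: if some pair $(a^\star,b^\star) \in A \times B$ satisfies $a^\star \cdot b^\star = 0$, then by perfect completeness of the $\MA$ protocol there exists $z^\star$ making Alice accept with probability $1$, so $\phi_{a^\star,z^\star} \cdot \psi_{b^\star,z^\star} = 2^{r_M}$; by choosing the protocol so that $2^{r_M}$ equals $T = c\log n \cdot \tau^{c_1}$ (or rescaling by padded auxiliary coordinates), the instance $(A_{z^\star}, B_{z^\star})$ has $\OPT \ge T$. For soundness: if no orthogonal pair exists, then for every $z, a, b$ the $\MA$ soundness gives acceptance probability at most $s = \tau^{-c_1}$, whence $\phi_{a,z} \cdot \psi_{b,z} \le s \cdot 2^{r_M} = T/\tau$. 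The main obstacle is the bookkeeping: one must simultaneously match (i) exactly $n^{\varepsilon}$ instances, (ii) dimension $\tau^{\poly(c/\varepsilon)} \log n$, and (iii) the exact thresholds $T$ and $T/\tau$. This amounts to verifying that Rubinstein's parameter trade-off between advice, randomness, and soundness lines up with the stated form, with any residual mismatch in the thresholds absorbed by a small padding of auxiliary coordinates.
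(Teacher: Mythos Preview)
Your proposal is correct and follows essentially the same route as the paper: invoke Rubinstein's $\MA$ protocol for $\DISJ_{c\log n}$, amplify soundness via the expander-walk derandomization of Theorem~\ref{theo:derand}, enumerate the $n^{\varepsilon}$ advice strings, and for each advice build the standard $(w,\mu)$-indexed Boolean vectors so that the inner product counts accepting random strings. The only cosmetic difference is in the parameter bookkeeping you flag at the end: the paper sets the amplified soundness to exactly $1/\tau$ (not $\tau^{-c_1}$) and lets $c_1$ be the constant hidden in the $O(\log \varepsilon^{-1})$ overhead on the random-coin count, so that $2^{r} = c\log n \cdot \tau^{c_1} = T$ and the soundness bound $\tfrac{1}{\tau}\cdot 2^{r} = T/\tau$ drop out directly, with no need for the auxiliary-coordinate padding you propose.
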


The reduction above follows directly from the new $\MA$ communication protocols in~\cite{Rubinstein2017closest} together with the use of expander graphs to reduce the amount of random coins. A proof for the lemma above can be found in Appendix~\ref{app:OV-MaxIP-reduction}. 

Now we are ready to show the lower bound on $t$-\Mapprox\ $\MaxIP$.

\begin{cor}\label{cor:lowb-Max-IP-M}
	Assuming SETH (or OVC), and letting $d = \omega(\log n)$ and $t \ge 2$. There is no $n^{2 - \Omega(1)}$-time $t$-\Mapprox\ algorithm for $\MaxIP_{n,d}$ if
	\[
	t = \left(d/\log n\right)^{o(1)}.
	\]
\end{cor}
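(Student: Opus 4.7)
\medskip

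\noindent\textbf{Proof plan.} The plan is to reduce $\OV$ to multiplicatively approximating $\MaxIP$ via Lemma~\ref{lm:OV-to-MaxIP} and derive a contradiction with OVC. Suppose for contradiction that some algorithm $\alg$ runs in time $n^{2-\delta}$ for a fixed constant $\delta > 0$ and produces a $t$-\Mapprox\ answer for $\MaxIP_{n,d}$, where $t = (d/\log n)^{o(1)}$ and $d = \omega(\log n)$. I will use $\alg$ to refute OVC by exhibiting, for the fixed constant $\eps' := \delta/2 > 0$ and every $c \ge 1$, an $n^{2-\eps'}$-time algorithm for $\OV_{n,c\log n}$.

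\medskip

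\noindent\textbf{Setting up the parameters.} Fix an arbitrary constant $c \ge 1$ and set $\eps := \delta/4$ in Lemma~\ref{lm:OV-to-MaxIP}. Choose the gap $\tau := t(n)$. The lemma produces $n^{\eps}$ instances of $\MaxIP_{n,d'}$ with $d' = \tau^{\poly(c/\eps)} \cdot \log n$, such that distinguishing $\OPT \ge T$ from $\OPT \le T/\tau$ on these instances solves $\OV_{n,c\log n}$. Since $\eps$ and $c$ are constants, $\poly(c/\eps)$ is a constant, so the condition $d' \le d(n)$ becomes $\poly(c/\eps) \cdot \log \tau \le \log(d/\log n)$. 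Because $t = (d/\log n)^{o(1)}$ means $\log t / \log(d/\log n) \to 0$, this inequality is satisfied for all sufficiently large $n$; consequently each reduced instance fits inside dimension $d$ (by padding coordinates with zeros if $d' < d$).

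\medskip

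\noindent\textbf{Completing the contradiction.} For every sufficiently large $n$, running $\alg$ on each of the $n^{\eps}$ reduced instances and checking whether its output exceeds the threshold $T$ decides $\OV_{n,c\log n}$, because a $\tau = t$-factor multiplicative approximation distinguishes values $\ge T$ from values $\le T/\tau$. The total running time is at most $n^{\eps} \cdot n^{2-\delta} = n^{2 - \delta + \delta/4} = n^{2 - 3\delta/4} \le n^{2-\eps'}$. Since this bound holds for every $c \ge 1$ with the same constant $\eps' = \delta/2 > 0$, we refute OVC, hence also SETH.

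\medskip

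\noindent\textbf{What I expect to be routine versus delicate.} The structural step is entirely provided by Lemma~\ref{lm:OV-to-MaxIP}; the polynomial-method or communication-complexity machinery is hidden there. The only point requiring care is verifying that the assumption $t = (d/\log n)^{o(1)}$ is exactly what makes the dimension blow-up $\tau^{\poly(c/\eps)} \log n$ fit within $d$ for every constant $c$ and for arbitrarily small constant $\eps$. This is immediate from the little-$o$ asymptotics, but it is also the reason the characterization is tight: any $t$ growing as $(d/\log n)^{\Omega(1)}$ would force $\poly(c/\eps) \cdot \log t$ to eventually exceed $\log(d/\log n)$, and the reduction would break, matching the algorithmic upper bound of Lemma~\ref{lm:algo-Max-IP-M}.
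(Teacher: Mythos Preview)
Your proposal is correct and follows essentially the same approach as the paper: both argue by contradiction, invoke Lemma~\ref{lm:OV-to-MaxIP} with $\tau = t$, observe that $t = (d/\log n)^{o(1)}$ forces the reduced dimension $t^{\poly(c/\eps)}\log n$ to be $o(d)$ for every constant $c$, and then run the hypothetical approximator on the $n^{\eps}$ reduced instances to contradict OVC. The only differences are cosmetic parameter choices (you take $\eps = \delta/4$ while the paper takes $\eps = \delta/2$).
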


\begin{proof}
	Let $c = d / \log n$, then $t = c^{o(1)}$ (recall that $t$ and $d$ are two functions of $n$).
	
	Suppose for contradiction that there is an $n^{2-\varepsilon'}$ time $t(n)$-\Mapprox\ algorithm $\alg$ for $\MaxIP(n,d)$ for some $\varepsilon' > 0$.
	
	Let $\varepsilon = \varepsilon'/2$. Now, for every constant $c_2$, we apply the reduction in Lemma~\ref{lm:OV-to-MaxIP} with $\tau = t$ to reduce an $\OV_{n,c_2 \log n}$ instance to $n^{\varepsilon}$ 
	$$
	\MaxIP_{n,t^{\poly(c_2 / \varepsilon)} \cdot \log n} \equiv \MaxIP_{n,t^{O(1)} \cdot \log n}
	$$
	
	instances. Since $t = c^{o(1)}$, which means for sufficiently large $n$, $t^{O(1)} \cdot \log n = c^{o(1)} \cdot \log n = o(d)$, and it in turn implies that for sufficiently large $n$, $n^{\varepsilon}$ calls to $\alg$ are enough to solve the $\OV_{n,c_2 \log n}$ instance.
	
	Therefore, we can solve $\OV_{n,c_2 \log n}$ in $n^{2 - \varepsilon'} \cdot n^{\varepsilon} = n^{2 - \varepsilon}$ time for all constant $c_2$. Contradiction to OVC.
\end{proof}

Finally, the correctness of Theorem~\ref{theo:Max-IP-M} follows directly from Lemma~\ref{lm:algo-Max-IP-M} and Corollary~\ref{cor:lowb-Max-IP-M}.

\subsection{The Additive Case}

In this subsection we prove Theorem~\ref{theo:Max-IP-A}. We first recap it here for convenience.

\begin{reminder}{Theorem~\ref{theo:Max-IP-A}}
	Letting $\omega(\log n) < d < n^{o(1)}$ and $0 \le t \le d$, the following holds:
	
	\begin{enumerate}
		\item There is an $n^{2 - \Omega(1)}$-time $t$-\Aapprox\ algorithm for $\MaxIP_{n,d}$ if
		$$
		t = \Omega(d),
		$$
		and under SETH (or OVC), there is no $n^{2 - \Omega(1)}$-time $t$-\Aapprox\ algorithm for $\MaxIP_{n,d}$ if
		$$
		t = o(d).
		$$
		\item Moreover, letting $\eps = \frac{t}{d}$, there is an
		\[
		O\left(n^{2 - \Omega(\eps^{1/3}/\log\eps^{-1})}\right)
		\]
		time, $t$-\Aapprox\ randomized algorithm for $\MaxIP_{n,d}$ when $\eps \gg \log^6\log n / \log^3 n$.
	\end{enumerate}
\end{reminder}
We proceed similarly as in the multiplicative case by establishing the algorithm first.

\subsubsection*{The Algorithm}

The algorithm is actually very easy, we simply apply the following algorithm from~\cite{alman2016polynomial}.

\begin{lemma}[Implicit in Theorem~5.1 in~\cite{alman2016polynomial}]~\label{lm:algo-previous}
	Assuming $\eps \gg \log^6\log(d \log n) / \log^3 n$, there is an \[
	n^{2 - \Omega\big(\eps^{1/3} / \log (\frac{d}{\eps \log n})\big)}
	\] time $\eps \cdot d$-\Aapprox\ randomized algorithm for $\MaxIP_{n,d}$.
\end{lemma}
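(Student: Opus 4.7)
The plan is to invoke Theorem~5.1 of~\cite{alman2016polynomial} essentially as a black box; the lemma is a restatement of its conclusion with the parameter translation $t = \eps d$. We sketch the main ideas for orientation.

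The algorithm combines two standard reductions. First, sample $k = \widetilde \Theta(\log n / \eps^2)$ coordinates uniformly at random and rescale sub-vector inner products by $d/k$; by a Chernoff bound and a union bound over all $n^2$ pairs, every rescaled sub-inner-product approximates $a \cdot b$ within additive error $\eps d / 2$ with high probability, reducing the problem to $\MaxIP_{n,k}$ with a slightly smaller additive tolerance. Second, on this reduced instance, Alman--Williams construct a probabilistic polynomial $\widetilde P(a,b)$ over the integers, of degree $D = \widetilde O((k/\eps)^{1/3}\log n)$, whose value agrees with $a \cdot b$ up to a small additive error with probability at least $1 - n^{-3}$ on every fixed pair. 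The $\eps^{-1/3}$ scaling in $D$ (rather than the naive $\eps^{-1/2}$) is the source of the $\eps^{1/3}$ factor in the final running time; it arises from a three-way trade-off between degree, error tolerance, and confidence in a Razborov--Smolensky-style construction applied at dyadic threshold scales.

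With $\widetilde P$ in hand, we multi-linearize $\widetilde P(a,b) = \sum_{|S| \le D} c_S a_S b_S$, embed into vectors $\phi(a), \psi(b) \in \mathbb R^m$ with $m \le \binom{k}{D}$ so that $\langle \phi(a), \psi(b) \rangle = \widetilde P(a,b)$, and evaluate on all $n^2$ pairs via fast rectangular matrix multiplication in exactly the style of Lemma~\ref{lm:algo-Max-IP-M}. Running $O(\log n)$ independent copies and taking the pointwise median boosts the per-pair success probability; a union bound over the $n^2$ pairs then certifies that the maximum of the estimated values lies within $\eps d$ of $\OPT(A,B)$. Choosing $D$ to balance the monomial count $m$ against the break-even dimension of Theorem~\ref{theo:fast-matrix-mult-polylog} delivers the target exponent $2 - \Omega\bigl(\eps^{1/3}/\log(d/(\eps \log n))\bigr)$, with the logarithmic denominator arising naturally from inverting $m = \binom{k}{D} \le n^{\alpha}$ to solve for the admissible degree.

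The main technical obstacle is the polynomial construction in the second reduction; given it, the remaining arithmetic is the standard polynomial-method plumbing already illustrated in Lemma~\ref{lm:algo-Max-IP-M}. The hypothesis $\eps \gg \log^6 \log(d \log n)/\log^3 n$ is precisely the regime in which $D$ stays strictly below $k$ and the monomial count $m$ remains subpolynomial in $n$, so that the rectangular matrix multiplication step delivers a genuine polynomial speed-up over the brute-force $O(n^2 d)$ algorithm rather than being absorbed by its $n^{o(1)}$ overhead.
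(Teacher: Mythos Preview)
Your proposal is correct and matches the paper's treatment: the paper gives no proof of this lemma at all, simply citing it as implicit in Theorem~5.1 of~\cite{alman2016polynomial} and using it as a black box. Your opening sentence already captures exactly what the paper does; the remainder of your sketch is a helpful expository bonus that goes well beyond what the paper provides.

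One small organizational remark: you fold the random coordinate-sampling step into your description of the Alman--Williams algorithm, but in this paper that sampling reduction is factored out and performed separately in the proof of Lemma~\ref{lm:algo-Max-IP-A}, with Lemma~\ref{lm:algo-previous} invoked only on the already-reduced $O(\eps^{-2}\log n)$-dimensional instance. This does not affect correctness, since either way the lemma is just a citation, but it slightly blurs the boundary between what is being quoted and what the present paper adds.
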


\begin{lemma}~\label{lm:algo-Max-IP-A}
	Let $\eps = \frac{\min(t,d)}{d}$, there is an
	\[
	O\left(n^{2 - \Omega(\eps^{1/3}/\log\eps^{-1})}\right)
	\]
	time, $t$-\Aapprox\ randomized algorithm for $\MaxIP_{n,d}$ when $\eps \gg \log^6\log n / \log^3 n$.
\end{lemma}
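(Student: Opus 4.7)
The plan is to reduce the $d$-dimensional $\MaxIP_{n,d}$ instance to one of much smaller dimension $d' = \Theta(\log n / \eps^2)$ via uniform coordinate sampling, and then invoke Lemma~\ref{lm:algo-previous} on the reduced problem with an appropriately chosen approximation parameter. Since $\eps$ depends only on $\min(t,d)$, I would first replace $t$ by $\min(t,d)$ without loss of generality, so that $\eps = t/d \in [0,1]$.

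For the sampling step, I would draw a multiset $S \subseteq [d]$ of $d' = \Theta(\log n/\eps^2)$ coordinates uniformly at random, and define the restrictions $a' := a|_S$ and $b' := b|_S$ for every $a \in A$, $b \in B$. The rescaled quantity $(d/d')\cdot \langle a',b'\rangle$ is an unbiased estimator of $\langle a,b\rangle$, and by Hoeffding's inequality it deviates by more than $t/2$ from its mean with probability at most $n^{-3}$, provided the hidden constant in $d'$ is large enough. A union bound over the $n^2$ pairs then guarantees simultaneous accuracy to within $t/2$ on every pair; call the resulting reduced instance $(A',B')$.

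With this at hand, I would invoke Lemma~\ref{lm:algo-previous} on $\MaxIP_{n,d'}$ with approximation parameter $\eps' := \eps/2$, obtaining in time $n^{2 - \Omega((\eps')^{1/3}/\log(d'/(\eps' \log n)))}$ a value $\WOPT'$ satisfying $|\WOPT' - \OPT(A',B')| \le \eps' d'$. Rescaling by $d/d'$, the algorithmic slack contributes at most $\eps d / 2 = t/2$ and the sampling slack contributes another $t/2$, so $(d/d')\cdot \WOPT'$ is a $t$-additive approximation of $\OPT(A,B)$ with high probability. Plugging in $d' = \Theta(\log n/\eps^2)$ yields $d'/(\eps' \log n) = O(\eps^{-3})$ and hence $\log(d'/(\eps' \log n)) = O(\log \eps^{-1})$, so the overall running time is $n^{2 - \Omega(\eps^{1/3}/\log \eps^{-1})}$, matching the claim.

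The only nontrivial check is that the precondition of Lemma~\ref{lm:algo-previous} holds on the reduced instance, namely $\eps' \gg \log^6 \log(d' \log n)/\log^3 n$. Under the hypothesis $\eps \gg \log^6 \log n / \log^3 n$ one has $\log \eps^{-1} = O(\log\log n)$, so $\log(d' \log n) = O(\log \log n + \log \eps^{-1}) = O(\log\log n)$, and the requirement reduces to $\eps \gg \log^6\log\log n / \log^3 n$, which is implied by the hypothesis. I do not foresee any serious obstacle; the main care is in keeping the $d/d'$ rescaling consistent so that the sampling and algorithmic slacks each consume exactly half of the target additive budget $t$.
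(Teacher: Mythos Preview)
Your proposal is correct and follows essentially the same approach as the paper: random coordinate sampling to reduce the dimension to $\Theta(\eps^{-2}\log n)$, a Chernoff/Hoeffding bound plus union bound to control all $n^2$ pairs, and then an invocation of Lemma~\ref{lm:algo-previous} with parameter $\eps/2$ on the reduced instance. The only cosmetic difference is that the paper works throughout with the normalized ``relative inner products'' $\langle a,b\rangle/d$ rather than rescaling by $d/d'$, and you additionally spell out the check that the precondition of Lemma~\ref{lm:algo-previous} is met on the reduced instance, which the paper leaves implicit.
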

\begin{proof}
	When $t > d$ the problem becomes trivial, so we can assume $t \le d$, and now $t = \eps \cdot d$.
	
	Let $\eps_1 = \eps / 2$ and $c_1$  be a constant to be specified later. Given an $\MaxIP_{n,d}$ instance with two sets $A$ and $B$ of vectors from $\{0,1\}^d$,
	we create another $\MaxIP_{n,d_1}$ instance with sets $\WT{A}$ and $\WT{B}$ and $d_1 = c_1 \cdot \eps_1^{-2} \cdot \log n$ as follows:
	
	\begin{itemize}
		\item Pick $d_1$ uniform random indices $i_1,i_2,i_3,\dotsc,i_{d_1} \in [d]$, each $i_k$ is an independent uniform random number in $[d]$.
		
		\item Then we construct $\WT{A}$ from $A$ by reducing each $a \in A$ into $\tilde{a} = (a_{i_1},a_{i_2},\dotsc,a_{i_{d_1}}) \in \{0,1\}^{d_1}$ and $\WT{B}$ from $B$ in the same way.
	\end{itemize}

	Note for each $a \in A$ and $b \in B$, by a Chernoff bound, we have
	
	\[
	\Pr\left[ \left| \frac{\tilde{a} \cdot \tilde{b}}{d_1} - \frac{a \cdot b}{d} \right| \ge \eps_1 \right] < 2e^{-2d_1\eps_1^{2}} = 2n^{-2\cdot c_1}.
	\]
	
	By setting $c_1 = 2$, the above probability is smaller than $1/n^{3}$.
	
	Hence, by a simple union bound, with probability at least $1 - 1/n$, we have
	
	\[
	\left| \frac{\WT{a} \cdot \WT{b}}{d_1} - \frac{a \cdot b}{d} \right| \le \eps_1
	\]
	
	for all $a \in A$ and $b \in B$. Hence, it means that this reduction only changes the ``relative inner product''($\frac{a\cdot b}{d}$ or $\frac{\WT{a} \cdot \WT{b}}{d_1}$) of each pair by at most $\eps_1$. Hence the maximum of the ``relative inner product'' also changes by at most $\eps_1$, and we have $| \OPT(A,B)/d - \OPT(\WT{A},\WT{B})/d_1 | \le \eps_1$.
	
	Then we apply the algorithm in Lemma~\ref{lm:algo-previous} on the instance with sets $\WT{A}$ and $\WT{B}$ with error $\eps = \eps_1$ to obtain an estimate $\WT{O}$, and our final answer is simply $\frac{\WT{O}}{d_1} \cdot d$.
	
	From the guarantee from Lemma~\ref{lm:algo-previous}, we have $|\OPT(\WT{A},\WT{B})/d_1 - \WT{O}/d_1| \le \eps_1$, and therefore we have $| \OPT(A,B)/d - \WT{O}/d_1 | \le 2 \eps_1 = \eps$, from which the correctness of our algorithm follows directly.
	
	For the running time, note that the reduction part runs in linear time $O(n \cdot d)$, and the rest takes
	\[
	n^{2 - \Omega\big(\eps^{1/3} / \log (\frac{d_1}{\eps_1 \log n}) \big)} = n^{2 - \Omega(\eps^{1/3} / \log \eps^{-1})}
	\]
	time.
\end{proof}

\subsubsection*{The Lower Bound}

The lower bound is already established in~\cite{Rubinstein2017closest}, we show it follows from Lemma~\ref{lm:OV-to-MaxIP} here for completeness.

\begin{lemma}[Theorem 4.1 of~\cite{Rubinstein2017closest}]\label{lm:lowb-Max-IP-A}
	Assuming SETH (or OVC), and letting $d = \omega(\log n)$ and $t > 0$, there is no $n^{2 - \Omega(1)}$-time $t$-\Aapprox\ randomized algorithm for $\MaxIP_{n,d}$ if
	\[
	t = o(d).
	\]
\end{lemma}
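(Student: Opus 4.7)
The plan is to derive the lower bound directly from Lemma~\ref{lm:OV-to-MaxIP} by converting the multiplicative gap it produces into an additive one via coordinate duplication. Suppose, for contradiction, that there is an $n^{2-\delta}$-time $t(n)$-\Aapprox\ algorithm $\alg$ for $\MaxIP_{n,d}$ with $t = o(d)$. I will refute OVC by solving $\OV_{n, c \log n}$ in $n^{2 - \Omega(1)}$ time for an arbitrary constant $c$.

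Fix $c$, set $\tau = 2$ and $\varepsilon = \delta/2$, and apply Lemma~\ref{lm:OV-to-MaxIP} to reduce $\OV_{n, c \log n}$ to $n^{\delta/2}$ instances of $\MaxIP_{n, d'}$, where $d' = 2^{\poly(c/\varepsilon)} \log n = \Theta(\log n)$ because $c$ and $\varepsilon$ are constants. The lemma guarantees that in a YES case some instance $(A_i,B_i)$ has $\OPT(A_i, B_i) \ge T$, while in a NO case every instance has $\OPT(A_i, B_i) \le T/2$, where $T = c \log n \cdot 2^{c_1}$ and $c_1$ is the universal constant from the lemma.

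Next, pad each $\MaxIP_{n, d'}$ instance up to dimension $d$ by duplicating each of the $d'$ coordinates $K := \lfloor d/d' \rfloor$ times and then appending $d - K d'$ zero coordinates. This scales every inner product by the factor $K$ without altering YES/NO membership, so the YES/NO gap becomes $K(T - T/2) = KT/2$. Since $K = \Theta(d/\log n)$ and $T = \Theta(\log n)$, this gap is $\Theta(d)$, and because $t = o(d)$ it exceeds $2t$ for all sufficiently large $n$. Running $\alg$ on each padded instance and thresholding the output against $(3/4) KT$ therefore correctly distinguishes YES from NO.

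Summing over the $n^{\delta/2}$ instances, the total running time is $n^{\delta/2} \cdot n^{2-\delta}$ plus an $\tilde O(n d)$ overhead for reduction and padding, yielding $n^{2 - \delta/2 + o(1)}$ and refuting OVC since $c$ was arbitrary. The only subtle choice is taking $\tau$ to be a fixed constant (specifically $2$): this keeps the dimension-inflation factor $\tau^{\poly(c/\varepsilon)}$ a constant and hence $d' = O(\log n) \ll d$, leaving a factor $K = \omega(1)$ of padding room that stretches the modest absolute gap $T(1 - 1/\tau) = \Theta(\log n)$ into one of size $\Theta(d)$, which dominates any $t = o(d)$. This rescaling is the only real point to verify; everything else is bookkeeping.
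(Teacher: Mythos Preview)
Your proof is correct and follows essentially the same approach as the paper: invoke Lemma~\ref{lm:OV-to-MaxIP} with $\tau=2$ and $\varepsilon=\delta/2$ to get $n^{\delta/2}$ instances in dimension $d'=\Theta(\log n)$ with an additive gap $T/2=\Theta(\log n)$, then duplicate coordinates to stretch the dimension to $d$ and the gap to $\Theta(d)\gg t$. The paper's version differs only cosmetically (it assumes $d'\mid d$ rather than padding with zeros, and phrases the final step as scaling the estimate back by $d'/d$ rather than thresholding at $(3/4)KT$).
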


\begin{proof}
		
	Recall that $t$	and $d$ are all functions of $n$. Suppose for contradiction that there is an $n^{2-\varepsilon'}$ time $t(n)$-\Aapprox\ algorithm $\alg$ for $\MaxIP(n,d)$ for some $\varepsilon' > 0$.
	
	Let $\varepsilon = \varepsilon'/2$. Now, for every constant $c_2$, we apply the reduction in Lemma~\ref{lm:OV-to-MaxIP} with $\tau = 2$ to reduce an $\OV_{n,c_2 \log n}$ instance to $n^{\varepsilon}$ 
	$$
	\MaxIP_{n,2^{\poly(c_2 / \eps)} \cdot \log n} \equiv \MaxIP_{n, d_1} \text{ where $d_1 = O(1) \cdot \log n$}
	$$
	
	instances. In addition, from Lemma~\ref{lm:OV-to-MaxIP}, to solve the $\OV_{c_2 \log n}$ instance, we only need to distinguish an additive gap of $\frac{T}{2} = \Omega(\log n) = \Omega(d_1)$ for these $\MaxIP$ instances obtained via the reduction.
	
	\newcommand{\Inew}{\mathcal{I}^{\textsf{new}}}
	
	This can be solved, via $n^{\varepsilon}$ calls to $\alg$ as follows: for each $\MaxIP_{n, d_1}$ instance $\mathcal{I}$ we get, since $d = \omega(\log n)$, which means for a sufficiently large $n$, $d_1 = O(\log n) \ll d$, and we can duplicate each coordinate $d/d_1$ times (for simplicity we assume $d_1 | d$ here), to obtain an $\MaxIP_{n, d}$ instance $\Inew$, such that $\OPT(\Inew) = d/d_1 \cdot \OPT(\mathcal{I})$. Then $\alg$ can be used to estimate $\OPT(\Inew)$ within an additive error $t = o(d)$. Scaling its estimate by $\frac{d_1}{d}$, it can also be used to estimate $\OPT(\mathcal{I})$ within an additive error $o(d_1) = o(\log n) \le T/2$ for sufficiently large $n$.
	
	Therefore, we can solve $\OV_{n,c_2 \log n}$ in $n^{2 - \varepsilon'} \cdot n^{\varepsilon} = n^{2 - \varepsilon}$ time for all constant $c_2$. Contradiction to OVC.
\end{proof}

Finally, the correctness of Theorem~\ref{theo:Max-IP-A} follows directly from Lemma~\ref{lm:algo-Max-IP-A} and Lemma~\ref{lm:lowb-Max-IP-A}.

\subsection{Adaption for $\AllPairMaxIP$}

Now we sketch the adaption for our algorithms to work for the $\AllPairMaxIP$ problem.

\begin{reminder}{Corollary~\ref{cor:All-Pair-Max-IP}}
	Suppose $ \omega(\log n) < d < n^{o(1)}$, and let 
	\[
	\eps_M := \min\left( \frac{\log t}{\log (d/\log n)}, 1 \right) \text{ and } \eps_A := \frac{\min(t,d)}{d}.
	\]
	There is an $n^{2 - \Omega(\eps_M)} \polylog(n)$ time $t$-\Mapprox\ algorithm and an $n^{2 - \Omega(\eps_A^{1/3}/\log \eps_A^{-1})}$ time $t$-\Aapprox\ algorithm for $\AllPairMaxIP_{n,d}$, when $\eps_A \gg \log^6\log n / \log^3 n$.
\end{reminder}
\begin{proofsketch}
	Note that the algorithm in Lemma~\ref{lm:algo-previous} from~\cite{alman2016polynomial} actually works for the $\AllPairMaxIP_{n,d}$. Hence, we can simply apply that algorithm after the coordinate sampling phase, and obtain a $t$-\Aapprox\ algorithm for $\AllPairMaxIP_{n,d}$.
	
	For $t$-\Mapprox\ algorithm, suppose we are given with two sets $A$ and $B$ of $n$ vectors from $\{0,1\}^{d}$. Instead of partitioning both of them into $n/b$ subsets $A_i$'s and $B_i$'s (the notations used here are the same as in the proof of Lemma~\ref{lm:algo-Max-IP-M}), we only partition $B$ into $n/b$ subsets $B_1,B_2,\dotsc,B_{n/b}$ of size $b$, and calculate $P_r(x,B_i) := \sum_{y \in B_i} P_r(x,y)$ for all $x \in A$ and $i \in [n/b]$ using similar reduction to rectangle matrix multiplication as in Lemma~\ref{lm:algo-Max-IP-M}. By a similar analysis, these can be done in $n^{2 - \Omega(\eps_M)} \cdot \polylog(n)$ time, and with these informations we can compute the $t$-\Mapprox\ answers for the given $\AllPairMaxIP_{n,d}$ instance.
\end{proofsketch}

%\lnote{Is above too sketchy..?}

\subsection{Improved Hardness for LCS-Closest Pair Problem}

We finish this section with the proof of Corollary~\ref{cor:LCSP}. First we abstract the reduction from $\MaxIP$ to $\LCSP$ in~\cite{ARW17-proceedings} here.

\begin{lemma}[Implicit in Theorem~1.6~in~\cite{ARW17-proceedings}]~\label{lm:MaxIP-to-LCS-Closest-Pair}
	For big enough $t$ and $n$, $t$-\Mapprox\ $\MaxIP_{n, d}$ reduces to $t/2$-\Mapprox\ $\LCSP_{n,O(d^3 \log^2 n)}$.
\end{lemma}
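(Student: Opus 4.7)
The plan is to invoke the alignment gadget framework of~\cite{ARW17-proceedings}, which provides a general recipe for encoding inner-product problems as sequence-similarity problems. The strategy is to build a pair of string-valued mappings $\phi_A, \phi_B : \{0,1\}^d \to \Sigma^{L}$ with $L = O(d^3 \log^2 n)$ such that $\textsf{LCS}(\phi_A(a), \phi_B(b))$ is an affine function of $a \cdot b$. Any $t/2$-\Mapprox\ algorithm for $\LCSP$ on the resulting strings then yields a $t$-\Mapprox\ for $\MaxIP$ on the original vectors once the additive offset is subtracted.

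The first building block is a coordinate gadget: for each bit $x \in \{0,1\}$, produce short strings $G_A(x)$ and $G_B(x)$ of length $O(1)$ over a small auxiliary alphabet such that $\textsf{LCS}(G_A(x), G_B(y)) = c_0 + x\cdot y$ for a universal constant $c_0$. Simple three- or four-character patterns suffice, and correctness is checked by direct enumeration. The second building block is an \emph{alignment enforcer} $S$: a long string drawn from a disjoint alphabet, to be inserted between consecutive coordinate gadgets. I would define
\[
\phi_A(a) := S \cdot G_A(a_1) \cdot S \cdot G_A(a_2) \cdots S \cdot G_A(a_d) \cdot S,
\]
and $\phi_B(b)$ analogously. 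Taking $|S| = \Theta(d \log n)$ forces the optimal LCS alignment to match the $i$-th copy of $S$ in $\phi_A(a)$ against the $i$-th copy in $\phi_B(b)$: any ``cheating'' alignment that crosses block boundaries loses at least $|S|$ characters, whereas the maximum possible gain from the coordinate gadgets is only $O(d)$. Consequently $\textsf{LCS}(\phi_A(a),\phi_B(b)) = D + a\cdot b$ where $D = (d+1)|S| + c_0 d$ depends only on $d$ and $|S|$. The raw string length is $O(d^2 \log n)$; converting to a small or binary alphabet (as required by the problem variant) introduces an $O(d \log n)$ encoding factor, giving the claimed $O(d^3 \log^2 n)$.

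Finally, I would translate approximation factors. Since $\OPT_{\LCSP} = D + \OPT_{\MaxIP}$, a $t/2$-approximation $\widetilde{U}$ to $\OPT_{\LCSP}$ yields $\widetilde{U} - D$ as an estimate of $\OPT_{\MaxIP}$ in $[\OPT_{\MaxIP},\,(t/2)(D+\OPT_{\MaxIP}) - D]$. To close the gap from $t/2$ to $t$, I would ensure that $\OPT_{\MaxIP} \ge D$ in the hard instances, for example by augmenting every vector with $D$ always-on coordinates, which shifts $\OPT$ by $D$ without altering the hardness; then $(t/2)(D+\OPT_{\MaxIP}) - D \le t\cdot \OPT_{\MaxIP}$. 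The main obstacle in this plan is the delicate calibration of $|S|$ together with the alphabet-encoding overhead: $|S|$ must be large enough to rigidly enforce block-by-block alignment yet small enough to keep the final length at $O(d^3 \log^2 n)$, and the accounting for the additive offset $D$ when converting a multiplicative LCS approximation into a multiplicative $\MaxIP$ approximation is where the factor-of-$2$ slack in the statement is consumed.
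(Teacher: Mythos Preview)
The paper does not prove this lemma; it is quoted as implicit in \cite{ARW17-proceedings} and used as a black box, so there is no in-paper argument to compare against. Evaluating your proposal on its own merits, the gadget construction yielding $\textsf{LCS}(\phi_A(a),\phi_B(b)) = D + a\cdot b$ with $D = \Theta(d^2\log n)$ is standard, but the final paragraph---where you convert a multiplicative LCSP approximation into a multiplicative $\MaxIP$ approximation---contains a genuine gap.

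An affine shift by a large $D$ destroys multiplicative approximation: since $a\cdot b \le d \ll D$, all LCS values lie in $[D,D+d]$, so outputting the constant $D$ is already a $(1+d/D)$-approximation to $\OPT_{\LCSP}$ while revealing nothing multiplicative about $a\cdot b$. Your proposed fix---padding each vector with $D$ always-on coordinates so that $\OPT_{\MaxIP}\ge D$---does not repair this, and the assertion that padding ``does not alter the hardness'' is false. After padding, the new optimum is $\OPT' = \OPT + D \in [D,D+d]$, so the padded $\MaxIP$ instance becomes \emph{trivial} to $t$-approximate for any $t \ge 1 + d/D$: simply output $D$. Even setting that aside, padding raises the dimension to $d' = d + D = \Theta(d^2\log n)$, and rerunning your own gadget on the padded vectors would produce strings of length $\Theta((d')^3\log^2 n) \gg d^3\log^2 n$, violating the claimed bound.

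What is actually needed is a coordinate gadget whose LCS contribution is (up to lower-order error) $K$ when $a_i b_i = 1$ and $0$ when $a_i b_i = 0$, together with an alignment mechanism that does not itself add a dominant fixed offset---so that $\textsf{LCS}$ is \emph{proportional} to $a\cdot b$ rather than affinely shifted by something that dwarfs it. The cube in the string length in \cite{ARW17-proceedings} arises from making $K$ large enough that any misalignment loses more than it could possibly gain from faking a match. Your enforcer-with-constant-gadget scheme does not have this multiplicative-only property, so the factor-$2$ slack in the statement is not enough to absorb the offset $D$, and the last step of the argument does not go through.
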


Now we are ready to prove Corollary~\ref{cor:LCSP} (restated below for convenience).

\begin{reminder}{Corollary~\ref{cor:LCSP}}
	Assuming SETH (or OVC), for every $t \ge 2$, $t$-\Mapprox\ $\LCSP_{n,d}$ requires $n^{2 - o(1)}$ time, if $d = t^{\omega(1)} \cdot \log^5 n$.
\end{reminder}
\begin{proof}
	From Lemma~\ref{lm:lowb-Max-IP-M}, assuming SETH (or OVC), for every $t \ge 2$, we have that $2t$-\Mapprox\ $\MaxIP_{n, d}$ requires $n^{2 - o(1)}$ time if $d = t^{\omega(1)} \cdot \log n$. Then from Lemma~\ref{lm:MaxIP-to-LCS-Closest-Pair}, we immediately have that $t$-\Mapprox\ $\LCSP_{n,d^3 \cdot \log^2 n} = \LCSP_{n,t^{\omega(1)} \cdot \log^5 n}$ requires $n^{2 - o(1)}$ time.
\end{proof}
	\section{Hardness of Exact $\IntMaxIP$, Hopcroft's Problem and More}\label{sec:exact}
%\section{Improved Reduction from Boolean OV to Integer OV}

In this section we show hardness of Hopcroft's problem, exact $\IntMaxIP$, $\ell_2$-Furthest Pair and Bichromatic $\ell_2$-Closest Pair. Essentially our results follow from the framework of~\cite{Wil18}, in which it is shown that hardness of Hopcroft's problem implies hardness of other three problems, and is implied by dimensionality reduction for $\OV$.
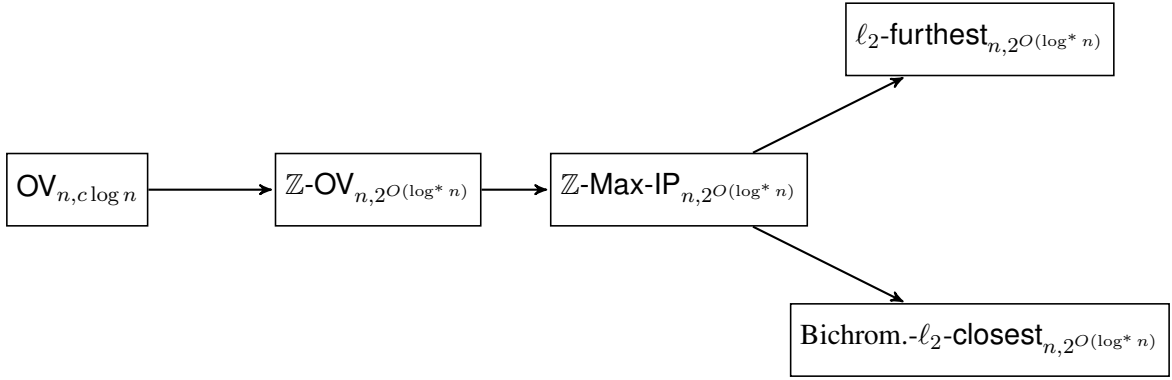
\begin{figure}[H]
	\centering
\begin{tikzpicture}[->,>=stealth',shorten >=1pt,auto,
semithick,scale = 2.0]
\tikzstyle{every state}=[draw=black,text=black,rectangle]
\tikzstyle{reduceto}=[thick]

\node [state] (OV) at (0,0) {$\OV_{n,c \log n}$};
\node [state] (Hopcroft) at (2,0) {$\Hopcroft_{n,2^{O(\logstar n)}}$};
\node [state] (IntMaxIP) at (4,0) {$\IntMaxIP_{n,2^{O(\logstar n)}}$};
\node [state] (furthest) at (6,1) {$\ell_2\text{-}\textsf{furthest}_{n,2^{O(\logstar n)}}$};
\node [state] (closest) at (6,-1) {Bichrom.-$\ell_2\text{-}\textsf{closest}_{n,2^{O(\logstar n)}}$};

\draw[reduceto] (OV) to (Hopcroft);
\draw[reduceto] (Hopcroft) to (IntMaxIP);
\draw[reduceto] (IntMaxIP) to (furthest);
\draw[reduceto] (IntMaxIP) to (closest);
\end{tikzpicture}
\caption{A diagram for all reductions in this section.}
\label{fig:reduction-graph}
\end{figure}

\subsubsection*{The Organization of this Section}
In Section~\ref{sec:improved-dim-reduction} we prove the improved dimensionality reduction for $\OV$. In Section~\ref{sec:hardness-Hopcroft} we establish the hardness of Hopcroft's problem in $2^{O(\logstar n)}$ dimensions with the improved reduction. In Section~\ref{sec:hardness-Int-Max-IP} we show Hopcroft's problem can be reduced to $\IntMaxIP$ and thus establish the hardness for the later one. In Section~\ref{sec:hardness-Geometry} we show $\IntMaxIP$ can be reduced to $\ell_2$-Furthest Pair and Bichromatic $\ell_2$-Closest Pair, therefore the hardness for the later two problems follow. See Figure~\ref{fig:reduction-graph} for a diagram of all reductions covered in this section.

The reduction in last three subsections are all from~\cite{Wil18} (either explicit or implicit), we make them explicit here for our ease of exposition and for making the paper self-contained.

\subsection{Improved Dimensionality Reduction for $\OV$}\label{sec:improved-dim-reduction}
% and show this implies the improved hardness results for Hopcroft's problem, $\ell_2$ farthest pair and bichromatic $\ell_2$ closest pair following~\cite{Wil18}. 

We begin with the improved dimensionality reduction for $\OV$. The following theorem is one of the technical cores of this paper, which makes use of the CRR encoding (see Theorem~\ref{theo:CRR}) recursively. 

\begin{theo}\label{theo:main-reduction}
	Let $b, \ell$ be two sufficiently large integers. There is a reduction $\psi_{b,\ell} : \{0,1\}^{b \cdot \ell} \to \mathbb{Z}^{\ell}$ and a set $V_{b,\ell} \subseteq \mathbb{Z}$, such that for every $x,y \in \{0,1\}^{b \cdot \ell}$,
	
	$$
	x \cdot y = 0 \Leftrightarrow \psi_{b,\ell}(x) \cdot \psi_{b,\ell}(y) \in V_{b,\ell}
	$$
	and 
	$$
	0 \le \psi_{b,\ell}(x)_i < {\ell}^{6^{\logstar(b)} \cdot b}
	$$
	for all possible $x$ and $i \in [\ell]$.
	Moreover, the computation of $\psi_{b,\ell}(x)$ takes $\operatorname*{poly}(b \cdot \ell)$ time, and the set $V_{b,\ell}$ can be constructed in $O\left(\ell^{O(6^{\log^*(b)} \cdot b)} \cdot \operatorname{poly}(b \cdot \ell) \right)$ time.
\end{theo}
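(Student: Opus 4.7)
The plan is to prove Theorem~\ref{theo:main-reduction} by induction on $b$, following the recursive CRT construction laid out in the intuition. At the base case (say $b \le b_0$ for some absolute constant $b_0$) I will use the direct CRR encoding: partition the $b \cdot \ell$ input bits into $\ell$ length-$b$ blocks, pick $b$ distinct primes $q_1, \dotsc, q_b$ each strictly larger than $\ell$ (available by Lemma~\ref{lm:many-primes}), and define the base block map $\psiblock(z) := \CRR(\{z_i\}_{i=1}^{b}; \{q_i\}_{i=1}^{b})$. Coordinate-wise CRR multiplication gives $\psiblock(z) \cdot \psiblock(z') \equiv z_i z'_i \pmod{q_i}$; summing across the $\ell$ blocks and noting that $\sum_k x_{k,i} y_{k,i} \le \ell < q_i$, we recover each coordinate-product exactly modulo $q_i$. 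Taking $V_{b,\ell}$ to be the non-negative integers $v \le \ell L_b^2$ congruent to $0$ modulo each $q_i$ gives correctness at the base, with coordinate bound $L_b \le \prod q_i \le \ell^{O(b)}$, i.e.\ a starting $\kappa$ of $O(1)$.

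For the inductive step, suppose $\psi_{\bm,\ell}$ with block map $\psiblock'$, coordinate bound $L_{\bm}$, and accepting set $V_{\bm,\ell}$ has already been constructed for some $\bm \mid b$ with $\bm < b$. I will build $\psi_{b,\ell}$ as follows. By Lemma~\ref{lm:many-primes}, choose $b/\bm$ distinct primes $q_1, \dotsc, q_{b/\bm}$ each strictly larger than $\ell \cdot L_{\bm}^2$. For each length-$b$ block $z$, split it into $b/\bm$ micro-blocks $z^1, \dotsc, z^{b/\bm} \in \{0,1\}^{\bm}$ and set
\[
\psiblock(z) := \CRR\!\left(\{\psiblock'(z^j)\}_{j=1}^{b/\bm};\ \{q_j\}_{j=1}^{b/\bm}\right).
\]
Because CRR-multiplication is coordinate-wise mod each $q_j$, we get $\psiblock(z) \cdot \psiblock(z') \equiv \psiblock'(z^j)\, \psiblock'(z'^j) \pmod{q_j}$, and summing across the $\ell$ outer blocks yields $\psi_{b,\ell}(x) \cdot \psi_{b,\ell}(y) \equiv \psi_{\bm,\ell}(x^{[j]}) \cdot \psi_{\bm,\ell}(y^{[j]}) \pmod{q_j}$, where $x^{[j]}$ denotes the vector that concatenates the $j$-th micro-blocks of the $\ell$ outer blocks of $x$. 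Since $q_j > \ell L_{\bm}^2$ dominates the true inner product, the reduction modulo $q_j$ is an equality. Because $x \cdot y = \sum_j x^{[j]} \cdot y^{[j]}$ is a sum of non-negative terms, $x \cdot y = 0$ iff every $x^{[j]} \cdot y^{[j]} = 0$, iff every $\psi_{\bm,\ell}(x^{[j]}) \cdot \psi_{\bm,\ell}(y^{[j]}) \in V_{\bm,\ell}$ by induction. The accepting set at this level is
\[
V_{b,\ell} := \{\, v \in \mathbb{Z}\ :\ 0 \le v \le \ell \cdot L_b^2,\ v \bmod q_j \in V_{\bm,\ell} \text{ for all } j \in [b/\bm]\,\},
\]
which closes the inductive equivalence $x\cdot y=0 \Leftrightarrow \psi_{b,\ell}(x)\cdot\psi_{b,\ell}(y) \in V_{b,\ell}$.

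For the coordinate bound, Lemma~\ref{lm:many-primes} lets us take the $q_j$ of size at most $(\ell L_{\bm}^2)^{O(1)}$, so $L_b \le \prod q_j \le (\ell L_{\bm}^2)^{O(b/\bm)}$. Writing $L_b = \ell^{\kappa(b)\cdot b}$, the recurrence becomes $\kappa(b) \le 2\kappa(\bm) + O(1/\bm)$, so the extra factor $\kappa$ essentially doubles each level. I will set the recursion schedule so that the block size shrinks from $b$ to roughly $\log b$, then to $\log\log b$, etc., bottoming out at the constant base case after $\logstar(b) + O(1)$ levels; because we also need $\ell$ and $\bm$ sufficiently large at every recursive call, these are the "sufficiently large'' hypotheses of the theorem. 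Unwinding the recurrence gives $\kappa(b) \le 2^{O(\logstar b)}$, and by tuning constants (absorbing the additive $O(1/\bm)$ term and the constant hidden in the prime sizes) we obtain $\kappa(b) \le 6^{\logstar b}$, producing the stated coordinate bound $\ell^{6^{\logstar b} \cdot b}$.

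Finally, each $\psiblock$ evaluation consists of $O(\logstar b)$ nested CRR computations on integers of bit-length $O(b \log \ell \cdot 6^{\logstar b})$, which is $\poly(b\ell)$ total time by Theorem~\ref{theo:CRR}. The set $V_{b,\ell}$ is constructed by enumeration: one can iterate over all $(r_1, \dotsc, r_{b/\bm}) \in V_{\bm,\ell}^{b/\bm}$ and use CRR to produce a candidate modulus-class representative, then enumerate its lifts up to $\ell L_b^2$; the dominant cost is $|V_{b,\ell}| \le \ell L_b^2 = \ell^{O(6^{\logstar b}\cdot b)}$ up to polynomial factors. I expect the main technical obstacle to be the bookkeeping that makes the coefficient $6$ actually come out: the doubling of $\kappa$ combines with lower-order additive terms from the prime sizes and from the CRR bit-length overhead, so verifying $\kappa(b) \le 6^{\logstar b}$ inductively requires pinning down exactly how large $\bm$ and $\ell$ must be at each level and checking that the slack absorbs the additive corrections.
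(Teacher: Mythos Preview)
Your proposal is correct and follows essentially the same recursive CRT construction as the paper. The only notable differences are presentational: the paper takes the base case to be $b < \ell$ (rather than $b \le b_0$ for an absolute constant), and in the recursive step it fixes $\bm$ implicitly by the equation $\ell^{\,6^{\logstar(\bm)}\cdot \bm} = b$, so that the inductive bound $L_{\bm}$ equals $b$ exactly and the primes can be taken in $[b^2\ell,(b^2\ell)^2]$; this turns the verification of the $6^{\logstar b}$ exponent into one clean calculation instead of unwinding a recurrence, but the underlying argument is the same as yours.
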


\begin{rem}
	We didn't make much effort to minimize the base $6$ above to keep the calculation clean, it can be replaced by any constant $ > 2 $ with a tighter calculation.
\end{rem}

\begin{proof}
	
	We are going to construct our reduction in a recursive way. $\ell$ will be the same throughout the proof, hence in the following we use $\psi_{b}$ ($V_b$) instead of $\psi_{b,\ell}$ ($V_{b,\ell}$) for simplicity.
	
	\paragraph*{Direct CRR for small $b$:}
	
	When $b < \ell$, we use a direct Chinese remainder representation of numbers. We pick $b$ primes $q_1,q_2,\dotsc,q_{b}$ in $[\ell+1,\ell^2]$, and use them for our CRR encoding.
	
	Let $x \in \{0,1\}^{b \cdot \ell}$, we partition it into $\ell$ equal size groups, and use $x^i$ to denote the  $i$-th group, which is the sub-vector of $x$ from the $((i-1) \cdot b + 1)$-th bit to the $(i \cdot b)$-th bit.
	
	Then we define $\psi_b(x)$ as
	
	$$
	\psi_b(x) := \left(\CRR\left(\left\{x^1_j\right\}_{j=1}^{b}\right),\CRR\left(\left\{x^2_j\right\}_{j=1}^{b}\right),\dotsc,\CRR\left(\left\{x^{\ell}_j\right\}_{j=1}^{b}\right) \right).
	$$
	
	That is, the $i$-th coordinate of $\psi_b(x)$ is the CRR encoding of the $i$-th sub-vector $x^i$ with respect to the primes $q_j$'s.
	
	Now, for $x,y \in \{0,1\}^{b \cdot \ell}$, note that for $j \in [b]$,
	\begin{align*}
	&\psi_b(x) \cdot \psi_b(y) \pmod{q_j}\\
	\equiv&\sum_{i=1}^{\ell} \CRR\left(\left\{x^i_j\right\}_{j=1}^{b}\right) \cdot \CRR\left(\left\{y^i_j\right\}_{j=1}^{b}\right) \pmod{q_j} \\
	\equiv&\sum_{i=1}^{\ell} x^i_j \cdot y^i_j \pmod{q_j}.
	\end{align*}
	
	Since the sum $\sum_{i=1}^{\ell} x^i_j \cdot y^i_j$ is in $[0, \ell]$, and $q_j > \ell$, we can see
	$$
	\sum_{i=1}^{\ell} x^i_j \cdot y^i_j = 0 \Leftrightarrow \psi_b(x) \cdot \psi_b(y) \equiv 0 \pmod{q_j}.
	$$
	
	Therefore, $x \cdot y = \sum_{j=1}^{b} \sum_{i=1}^{\ell} x^i_j \cdot y^i_j = 0$ is equivalent to that
	$$
	\psi_b(x) \cdot \psi_b(y) \equiv 0 \pmod{q_j}
	$$
	for every $j \in [b]$.
	
	Finally, we have $0 \le \psi_b(x)_i < \prod_{j=1}^{b} p_j < \ell^{2 \cdot b} \le \ell^{6^{\logstar(b)} \cdot b}$. Therefore 
	$$
	\psi_b(x) \cdot \psi_b(y) < \ell^{6^{\logstar(b)} \cdot 2b  + 1},
	$$
	and we can set $V_b$ to be the set of all integers in $[0,\ell^{6^{\logstar(b)} \cdot 2b  + 1}]$ that is $0$ modulo all the $p_j$'s, and it is easy to see that
	
	$$
	x \cdot y \Leftrightarrow \psi_b(x) \cdot \psi_b(y) \in V_b
	$$
	
	for all $x,y \in \{0,1\}^{b \cdot \ell}$.
	
	\paragraph*{Recursive Construction for larger $b$:} 
	
	When $b \ge \ell$, suppose the theorem holds for all $b' < b$. Let $\bm$ be the number such that (we ignore the rounding issue here and pretend that $\bm$ is an integer for simplicity),
	$$
	{\ell}^{6^{\logstar(\bm)} \cdot \bm} = b.
	$$
	
	Then we pick $b/\bm$ primes $p_1,p_2,\dotsc,p_{b/\bm}$ in $[(b^2\ell),(b^2\ell)^2]$, and use them as our reference primes in the CRR encodings.
	
	Let $x \in \{0,1\}^{b \cdot \ell}$, as before, we partition $x$ into $\ell$ equal size sub-vectors $x^1,x^2,\dotsc,x^{\ell}$, where $x^i$ consists of the $((i-1) \cdot b + 1)$-th bit of $x$ to the $(i \cdot b)$-th bit of $x$. Then we partition each $x^i$ again into $b/\bm$ micro groups, each of size $\bm$.  We use $x^{i,j}$ to denote the $j$-th micro group of $x^i$ after the partition.
	
	Now, we use $x^{[j]}$ to denote the concatenation of the vectors $x^{1,j},x^{2,j},\dotsc,x^{\ell,j}$. That is, $x^{[j]}$ is the concatenation of the $j$-th micro group in each of the $\ell$ groups.  Note that $x^{[j]} \in \{0,1\}^{\bm \cdot \ell}$, and can be seen as a smaller instance, on which we can apply $\psi_{\bm}$.
	
	Our recursive construction then goes in two steps. In the first step, we make use of $\psi_{\bm}$, and transform each $\bm$-size micro group into a single number in $[0,b)$. This step transforms $x$ from a vector in $\{0,1\}^{b \cdot \ell}$ into a vector $S(x)$ in $\mathbb{Z}^{ (b / \bm) \cdot \ell }$. And in the second step, we use a similar CRR encoding as in the base case to encode $S(x)$, to get our final reduced vector in $\mathbb{Z}^{\ell}$.
	
	$S(x)$ is simply 
	\begin{align*}
	S(x) := \Big(
	&\psi_{\bm}(x^{[1]})_1,\psi_{\bm}(x^{[2]})_1,\dotsc,\psi_{\bm}(x^{[b/\bm]})_1,\\
	&\psi_{\bm}(x^{[1]})_2,\psi_{\bm}(x^{[2]})_2,\dotsc,\psi_{\bm}(x^{[b/\bm]})_2,\\
	&\dotsc,\dotsc,\dotsc\\
	&\psi_{\bm}(x^{[1]})_{\ell},\psi_{\bm}(x^{[2]})_{\ell},\dotsc,\psi_{\bm}(x^{[b/\bm]})_{\ell}
	\Big).
	\end{align*}
	
	That is, we apply $\psi_{\bm}$ on all the $x^{[j]}$'s, and shrink all the corresponding micro-groups in $x$ into integers. Again, we partition $S$ into $\ell$ equal size groups $S^1,S^{2},\dotsc,S^{\ell}$.
	
	Then we define $\psi_b(x)$ as
	
	$$
	\psi_b(x) := \left(\CRR\left(\left\{S^1_j\right\}_{j=1}^{b/\bm}\right),\CRR\left( \left\{ S^2_j \right\}_{j=1}^{b/\bm}\right),\dotsc,\CRR\left(\left\{ S^{\ell}_j \right\}_{j=1}^{b/\bm}\right) \right).
	$$
	
	In other words, the $i$-th coordinate of $\psi_b(x)$ is the CRR representation of the number sequence $S^{i}$, with respect to our primes $\{q_j\}_{j=1}^{b/\bm}$.
	
	Now, note that for $x,y \in \{0,1\}^{b \cdot \ell}$, $x \cdot y = 0$ is equivalent to $x^{[j]} \cdot y^{[j]} = 0$ for every $j \in [b/\bm]$, which is further equivalent to 
	$$
	\psi_{\bm}(x^{[j]}) \cdot \psi_{\bm}(y^{[j]}) \in V_{\bm} 
	$$
	for all $j \in [b/\bm]$, by our assumption on $\psi_{\bm}$.
	
	Since $ 0 \le \psi_{\bm}(x^{[j]})_i, \psi_{\bm}(y^{[j]})_i < b$ for all $x,y \in \{0,1\}^{b \cdot \ell}$, $i \in [\ell]$ and $j \in [b/\bm]$, we also have $\psi_{\bm}(x^{[j]}) \cdot \psi_{\bm}(y^{[j]}) < b^2 \cdot \ell$, therefore we can assume that $V_{\bm} \subseteq [0,b^2 \ell)$.
	
	For all $x,y \in \{0,1\}^{b \cdot \ell}$ and $j \in [b/\bm]$, we have
	
	\begin{align*}
	&\psi_b(x) \cdot \psi_b(y) \\
	\equiv&\sum_{i=1}^{\ell} \CRR\left(\left\{S(x)^i_j\right\}_{j=1}^{b/\bm}\right) \cdot \CRR\left(\left\{S(y)^i_j\right\}_{j=1}^{b/\bm}\right)  \pmod{p_j}\\
	\equiv&\sum_{i=1}^{\ell} S(x)^i_j \cdot S(y)^i_j \pmod{p_j}\\
	\equiv&\sum_{i=1}^{\ell} \psi_{\bm}(x^{[j]})_i \cdot \psi_{\bm}(y^{[j]})_i \pmod{p_j}\\
	\equiv& \psi_{\bm}(x^{[j]}) \cdot \psi_{\bm}(y^{[j]}) \pmod{p_j}.
	\end{align*}
	
	Since $p_j \ge b^2 \cdot \ell$, we can determine $\psi_{\bm}(x^{[j]}) \cdot \psi_{\bm}(y^{[j]})$ from $\psi_b(x) \cdot \psi_b(y)$ by taking modulo $p_j$. Therefore, 
	
	$$
	x \cdot y = 0
	$$
	is equivalent to
	
	$$
	\left( \psi_b(x) \cdot \psi_b(y)~~\mathrm{mod}~~p_j \right) \in V_{\bm},
	$$
	for every $j \in [b/\bm]$.
	
	Finally, recall that we have
	$$
	{\ell}^{6^{\logstar(\bm)} \cdot \bm} = b.
	$$
	
	Taking logarithm of both sides, we have
	$$
	6^{\logstar(\bm)} \cdot \bm \cdot \log \ell = \log b.
	$$
	
	Then we can upper bound $\psi_b(x)_i$ by
	\begin{align*}
	\psi_b(x)_i &< \prod_{j=1}^{b/\bm} p_j \\
	&< (b^2 \ell)^{2 \cdot (b/\bm)} \tag{$b \ge \ell.$}\\
	%&\le b^{6 \cdot b / \bm} \\
	&\le 2^{6 \cdot b / \bm \cdot \log b} \\
	&\le 2^{6 \cdot b / \bm \cdot 6^{\logstar(\bm)} \cdot \bm \cdot \log \ell} \\
	&\le \ell^{6 \cdot 6^{\logstar(\bm)} \cdot b }\\
	&\le \ell^{6^{\logstar(b)} \cdot b } \tag{$\bm \le \log b, \logstar(\bm) + 1 \le \logstar(\log b) + 1 = \logstar(b)$.}\\
	\end{align*}
	
	Therefore, we can set $V_b$ as the set of integer $t$ in $[0, \ell^{6^{\logstar(b)} \cdot 2b + 1})$ such that 
	$$
	\left( t~~\mathrm{mod}~~p_j \right) \in V_{\bm}
	$$
	for every $j \in [b/\bm]$. And it is easy to see this $V_b$ satisfies our requirement.
	
	Finally, it is easy to see that the straightforward way of constructing $\psi_b(x)$ takes $O(\operatorname*{poly}(b \cdot \ell))$ time,  and we can construct $V_{b}$ by enumerating all possible values of $\psi_b(x) \cdot \psi_b(y)$ and check each of them in $O(\operatorname*{poly}(b \cdot \ell))$ time. Since there are at most  $\ell^{O(6^{\log^*(b)} \cdot b)}$ such values, $V_b$ can be constructed in
	
	$$
	O\left(\ell^{O(6^{\log^*(b)} \cdot b)} \cdot \operatorname{poly}(b \cdot \ell) \right)
	$$
	time, which completes the proof.
	
\end{proof}

Now we prove Lemma~\ref{lm:dim-reduction-OV}, we recap its statement here for convenience.

\begin{reminder}{Lemma~\ref{lm:dim-reduction-OV}}
	Let $1 \le \ell \le d$. There is an 
	$$
	O\left(n \cdot \ell^{O(6^{\log^*d} \cdot (d/\ell))} \cdot \operatorname*{poly}(d) \right)\text{-time}
	$$
	reduction from $\OV_{n,d}$ to $\ell^{O(6^{\log^*d} \cdot (d/\ell))}$ instances of $\Hopcroft_{n,\ell + 1}$, with vectors of entries with bit-length $O\left(d/\ell \cdot \log \ell \cdot 6^{\log^* d}\right)$.
\end{reminder}
\begin{proof}
	The proof is exactly the same as the proof for Lemma 1.1 in~\cite{Wil18} with different parameters, we recap it here for convenience.
	
	Given two sets $A'$ and $B'$ of $n$ vectors from $\{0,1\}^{d}$, we apply $\psi_{d/\ell,\ell}$ to each of the vectors in $A'$ ($B'$) to obtain a set $A$ ($B$) of vectors from $\mathbb{Z}^{\ell}$.   From Theorem~\ref{theo:main-reduction}, there is a $(u,v) \in A' \times B'$ such that $u \cdot v = 0$ if and only if there is a $(u,v) \in A \times B$ such that $u \cdot v \in V_{d/\ell,\ell}$.
	
	Now, for each element $t \in V_{d/\ell,\ell}$, we are going to construct two sets $A_t$ and $B_t$ of vectors from $\mathbb{Z}^{\ell + 1}$ such that there is a $(u,v) \in A \times B$ with $u \cdot v = t$ if and only if there is a $(u,v) \in A_t \times B_t$ with $u \cdot v = 0$. We construct a set $A_t$ as a collection of all vectors $u_A = [u,1]$ for $u \in A$, and a set $B_t$ as a collection of all vectors $v_B = [v,-t]$ for $v \in B$. It is easy to verify this reduction has the properties we want.
	
	Note that there are at most $\ell^{O(6^{\log^*d} \cdot (d/\ell))}$ numbers in $V_{d/\ell,\ell}$, so we have such a number of $\Hopcroft_{n,\ell+1}$ instances. And from Theorem~\ref{theo:main-reduction}, the reduction takes
	$$
	O\left(n \cdot \ell^{O(6^{\log^*d} \cdot (d/\ell))} \cdot \operatorname*{poly}(d) \right)
	$$
	time.
	
	Finally, the bit-length of reduced vectors is bounded by
	$$
	\log\left(\ell^{O(6^{\log^*d} \cdot (d/\ell))}\right) = O\left(d/\ell \cdot \log \ell \cdot 6^{\log^*d} \right),
	$$
	which completes the proof.        
\end{proof}

\subsubsection*{A Transformation from Nonuniform Construction to Uniform Construction}

The proof for Theorem~\ref{theo:main-reduction} works recursively. In one recursive step, we reduce the construction of $\psi_{b,\ell}$ to the construction of $\psi_{\bm,\ell}$, where $\bm \le \log b$. Applying this reduction $\logstar n$ times, we get a sufficiently small instance that we can switch to a direct CRR construction. 

An interesting observation here is that after applying the reduction only thrice, the block length parameter becomes $b' \le \log\log\log b$, which is so small that we can actually use brute force to find the ``optimal'' construction $\psi_{b',\ell}$ in $b^{o(1)}$ time instead of recursing deeper. Hence, to find a construction better than Theorem~\ref{theo:main-reduction}, we only need to prove the existence of such a construction. See Appendix~\ref{app:nonuniform-obs} for details.

\subsection{Improved Hardness for Hopcroft's Problem}\label{sec:hardness-Hopcroft}
In this subsection we are going to prove Theorem~\ref{theo:Hopcroft} using our new dimensionality reduction Lemma~\ref{lm:dim-reduction-OV}, we recap its statement here for completeness. 

\begin{reminder}{Theorem~\ref{theo:Hopcroft}}[Hardness of Hopcroft's Problem in $c^{\logstar n}$ Dimension]
	Assuming SETH (or OVC), there is a constant $c$ such that $\Hopcroft_{n,c^{\logstar n}}$ with vectors of $O(\log n)$-bit entries requires $n^{2-o(1)}$ time.
\end{reminder}

\begin{proof}
	The proof here follows roughly the same as the proof for Theorem~1.1 in~\cite{Wil18}.
	
	Let $c$  be an arbitrary constant and $d := c \cdot \log n$. We show that an oracle solving $\Hopcroft_{n,\ell+1}$ where $\ell = 7^{\log^* n}$ in $O(n^{2 - \delta})$ time for some $\delta > 0$ can be used to construct an $O(n^{2-\delta + o(1)})$ time algorithm for $\OV_{n,d}$, and therefore contradicts the OVC.
	
	We simply invoke Lemma~\ref{lm:dim-reduction-OV}, note that we have
	\begin{align*}
	\log \left\{ \ell^{O\left(  6^{\log^* d} \cdot (d/\ell) \right)} \right\} 
	&= \log \ell \cdot O\left(  6^{\log^* d} \cdot (d/\ell) \right)\\
	&= O\left(  \log^* n \cdot  6^{\log^* n} \cdot c \cdot \log n / 7^{\log^*n} \right)\\
	&= O\left(   \log^* n \cdot (6/7)^{\log^* n} \cdot c \cdot \log n \right)\\
	&= o(\log n).
	\end{align*}
	Therefore, the reduction takes $O(n \cdot \ell^{O\left(  6^{\log^* d} \cdot (d/\ell) \right)} \cdot \operatorname{poly}(d) ) = n^{1 + o(1)}$ time, and an $\OV_{n,d}$ instance is reduced to $n^{o(1)}$ instances of $\Hopcroft_{n,\ell+1}$, and the reduced vectors have bit length $o(\log n)$ as calculated above.  We simply solve all these $n^{o(1)}$ instances using our oracle, and this gives us an $O(n^{2-\delta + o(1)})$ time algorithm for $\OV_{n,d}$, which completes the proof.
\end{proof}

\subsection{Hardness for $\IntMaxIP$}\label{sec:hardness-Int-Max-IP}

Now we move to hardness of exact $\IntMaxIP$.

\begin{theo}[Implicit in Theorem~1.2~\cite{Wil18}]\label{theo:IntOV-to-Max-IP}
	There is an $O(\poly(d) \cdot n)$-time algorithm which reduces a $\IntOV_{n,d}$ instance into a $\IntMaxIP_{n,d^2}$ instance.
\end{theo}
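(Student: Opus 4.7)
The plan is to use the standard ``tensor squaring'' trick: for integer vectors, $a\cdot b = 0$ is equivalent to $(a\cdot b)^2 = 0$, and squaring lets us turn the orthogonality condition into a maximization of a quantity that is always $\le 0$. The key algebraic identity is
\[
(a\cdot b)^2 \;=\; \sum_{i,j\in[d]} a_i a_j \, b_i b_j \;=\; (a\otimes a)\cdot (b\otimes b),
\]
where $a\otimes a\in\mathbb{Z}^{d^2}$ is the flattened outer product.

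Concretely, given an $\IntOV_{n,d}$ instance with sets $A,B\subseteq\mathbb{Z}^d$, I would construct sets $A',B'\subseteq\mathbb{Z}^{d^2}$ as follows. For each $a\in A$, define $\phi_A(a) := -(a\otimes a)$, and for each $b\in B$, define $\phi_B(b) := b\otimes b$. Then for every $a\in A, b\in B$,
\[
\phi_A(a)\cdot \phi_B(b) \;=\; -(a\cdot b)^2 \;\le\; 0,
\]
with equality if and only if $a\cdot b=0$. Hence $\OPT(A',B') = 0$ iff the $\IntOV$ instance is a YES-instance, and $\OPT(A',B') < 0$ otherwise; so we simply output YES iff the $\IntMaxIP_{n,d^2}$ oracle returns $0$.

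For the running time and bit-length, computing each $\phi_A(a)$ or $\phi_B(b)$ takes $O(d^2)$ arithmetic operations, so the whole reduction runs in $O(n\cdot d^2)=O(\poly(d)\cdot n)$ time. Moreover, if the original entries have bit-length $\beta$, the new entries have bit-length at most $2\beta + O(1)$, so the reduction preserves $O(\log n)$-bit entries (which is the regime in which we will chain this with Theorem~\ref{theo:Hopcroft} to derive Theorem~\ref{theo:hard-Int-Max-IP}).

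There is essentially no obstacle here: the only thing to verify is that the IntMaxIP oracle can actually distinguish $\OPT=0$ from $\OPT<0$, which is immediate since it is an exact algorithm and the answer is an integer. The whole content is the observation that $(a\cdot b)^2$ linearizes into an inner product in the tensored space, and that negating one side turns minimization of a nonnegative quantity into maximization with a known ceiling of $0$.
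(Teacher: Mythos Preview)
Your proposal is correct and is essentially the same approach as the paper: both use the identity $-(a\cdot b)^2 = (-(a\otimes a))\cdot(b\otimes b)$ to embed $\IntOV_{n,d}$ into $\IntMaxIP_{n,d^2}$ and test whether the maximum equals $0$. Your write-up is in fact a cleaner rendering of the same idea (the paper's displayed expansion of $-(x\cdot y)^2$ and its coordinate formulas contain typos, but the intended construction is exactly your tensor-squaring map).
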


\begin{proof}
	We remark here that this reduction is implicitly used in the proof of Theorem~1.2 in~\cite{Wil18}, we abstract it here only for our exposition. 
	
	Given a $\IntOV_{n,d}$ instance with sets $A,B$. Consider the following polynomial $P(x,y)$, where $x,y \in \mathbb{Z}^{d}$.
	\[
	P(x,y) = -(x \cdot y)^2 = \sum_{i,j \in [d]} - x_i \cdot y_j.
	\]
	
	It is easy to see that whether there is a $(x,y) \in A \times B$ such that $x \cdot y = 0$ is equivalent to whether the maximum value of $P(x,y)$ is $0$.
	
	Now, for each $x \in A$ and $y \in B$, we construct $\WT{x},\WT{y} \in \mathbb{Z}^{d^2}$ such that $\WT{x}_{i} = x_{ \lfloor (i-1) /d \rfloor + 1 }$ and $\WT{y}_{i} = - y_{(i \bmod{d}) + 1}$. Then we have $\WT{x} \cdot \WT{y} = P(x,y)$. Hence, let $\WT{A}$ be the set of all these $\WT{x}$'s, and $\WT{B}$ be the set of all these $\WT{y}$'s, whether there is a $(x,y) \in A \times B$ such that $x \cdot y = 0$ is equivalent to whether $\OPT(\WT{A},\WT{B}) = 0$, and our reduction is completed.
	
\end{proof}

Now, Theorem~\ref{theo:hard-Int-Max-IP} (restated below) is just a simple corollary of Theorem~\ref{theo:IntOV-to-Max-IP} and Theorem~\ref{theo:Hopcroft}.

\begin{reminder}{Theorem~\ref{theo:hard-Int-Max-IP}}
	Assuming SETH (or OVC), there is a constant $c$ such that every exact algorithm for $\IntMaxIP_{n,d}$ for $d = c^{\log^*n}$ dimensions requires $n^{2-o(1)}$ time, with vectors of $O(\log n)$-bit entries.
\end{reminder}

\subsubsection*{A Dimensionality Reduction for $\MaxIP$}

The reduction $\psi_{b,\ell}$ from Theorem~\ref{theo:main-reduction} actually does more: for $x,y \in \{0,1\}^{b \cdot \ell}$, from $\psi_{b,\ell}(x) \cdot \psi_{b,\ell}(y)$ we can in fact determine the inner product $x \cdot y$ itself, not only whether $x \cdot y = 0$.

Starting from this observation, together with Theorem~\ref{theo:IntOV-to-Max-IP}, we can in fact derive a similar dimensionality self reduction from $\MaxIP$ to $\IntMaxIP$, we deter its proof to Appendix~\ref{app:dim-reduction-MaxIP}.

\begin{cor}\label{cor:MaxIP-reduction}
	Let $1 \le \ell \le d$. There is an 
	$$
	O\left(n \cdot \ell^{O(6^{\log^*d} \cdot (d/\ell))} \cdot \operatorname*{poly}(d) \right)\text{-time}
	$$
	reduction from $\MaxIP_{n,d}$ to $ d \cdot \ell^{O(6^{\log^*d} \cdot (d/\ell))}$ instances of $\IntMaxIP_{n,(\ell + 1)^2}$, with vectors of entries with bit-length $O\left(d/\ell \cdot \log \ell \cdot 6^{\log^* d}\right)$.
\end{cor}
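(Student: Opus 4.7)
The plan is to mirror the $\OV$-to-$\Hopcroft$ reduction of Lemma~\ref{lm:dim-reduction-OV}, while propagating enough information through the CRR encoding to recover $\OPT(A,B)$ rather than just its positivity. The key observation is that $\psi_{d/\ell,\ell}$ from Theorem~\ref{theo:main-reduction} actually makes $\psi(x)\cdot\psi(y)$ a \emph{complete} encoding of $x\cdot y$, not merely of the predicate $x\cdot y = 0$. I would verify this by induction on the recursion depth of $\psi$: at the base level, $\psi_b(x)\cdot\psi_b(y) \equiv \sum_i x^i_j y^i_j \pmod{q_j}$, and the right-hand side is bounded by $\ell < q_j$, so the residue equals the partial sum exactly; adding the $b$ partial sums gives $x\cdot y$. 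At a recursive level, $\psi_b(x)\cdot\psi_b(y) \pmod{p_j}$ equals $\psi_{\bm}(x^{[j]})\cdot\psi_{\bm}(y^{[j]})$, which by induction determines $x^{[j]}\cdot y^{[j]}$; summing over $j$ recovers $x\cdot y$.

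With this in hand, the first concrete step is to compute an evaluation map $\mathrm{eval}:V\to\{0,1,\dots,d\}$ sending each achievable value $v$ of $\psi(x)\cdot\psi(y)$ to the corresponding inner product, where $V$ has size at most $\ell^{O(6^{\log^* d}\cdot(d/\ell))}$ by the same counting used inside Theorem~\ref{theo:main-reduction}; this partitions $V$ into level sets $V_0,\dots,V_d$. Next, for each target $t\in\{0,1,\dots,d\}$ and each $v\in V_t$, I would produce an $\IntOV_{n,\ell+1}$ instance exactly as in Lemma~\ref{lm:dim-reduction-OV}: append the coordinate $1$ to every $\psi(x)$ to form $A'$, and $-v$ to every $\psi(y)$ to form $B'$, so that an orthogonal pair exists iff some original pair satisfies $\psi(x)\cdot\psi(y)=v$, which by the observation above is equivalent to $x\cdot y = t$. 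Feeding each such $\IntOV_{n,\ell+1}$ instance through Theorem~\ref{theo:IntOV-to-Max-IP} yields an $\IntMaxIP_{n,(\ell+1)^2}$ instance whose optimum equals $0$ exactly when the $\IntOV$ answer is YES. The final answer to the original $\MaxIP_{n,d}$ problem is the largest $t$ for which at least one of the associated $\IntMaxIP$ instances returns optimum $0$.

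The total number of $\IntMaxIP$ instances produced is at most $(d+1)\cdot|V| \le d\cdot\ell^{O(6^{\log^* d}\cdot(d/\ell))}$, matching the claim. Preprocessing time, the dimension $(\ell+1)^2$, and the bit-length $O((d/\ell)\cdot\log\ell\cdot 6^{\log^* d})$ of the entries are inherited directly from Lemma~\ref{lm:dim-reduction-OV} together with the linear-time coordinate duplication of Theorem~\ref{theo:IntOV-to-Max-IP}. The only non-routine step is the inductive unwinding of CRR: one must check that at every recursion level the partial sums appearing modulo the chosen primes never wrap, so the congruences equal the integers they represent. This is essentially the same ``modulus is large enough'' estimate already carried out inside the proof of Theorem~\ref{theo:main-reduction}, so the argument should go through without new ideas.
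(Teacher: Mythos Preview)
Your proposal is correct and follows essentially the same route as the paper: the paper records your key observation as Corollary~\ref{cor:ExactIP-reduction} (that $\psi_{b,\ell}(x)\cdot\psi_{b,\ell}(y)$ determines $x\cdot y$, giving the level sets $V_{b,\ell}^k$), and then proceeds exactly as you describe---enumerate $k\in\{0,\dots,d\}$, for each $v\in V_{b,\ell}^k$ build the $\IntOV_{n,\ell+1}$ instance via the append-$(-v)$ trick of Lemma~\ref{lm:dim-reduction-OV}, and push each through Theorem~\ref{theo:IntOV-to-Max-IP}. Your inductive unwinding of the CRR to justify the ``$\psi(x)\cdot\psi(y)$ determines $x\cdot y$'' claim is exactly what the paper means by ``tracing the proof of Theorem~\ref{theo:main-reduction}.''
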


\subsection{Hardness for $\ell_2$-Furthest Pair and Bichromatic $\ell_2$-Closest Pair}\label{sec:hardness-Geometry}

We finish the whole section with the proof of hardness of $\ell_2$-Furthest Pair and Bichromatic $\ell_2$-Closest Pair. The two reductions below are slight adaptations of the ones in the proofs of Theorem~1.2 and Corollary~2.1 in~\cite{Wil18}.

\begin{lemma}\label{lm:Max-IP-to-furtherest-pair}
	Assuming $d = n^{o(1)}$, there is an $O(\poly(d) \cdot n)$-time algorithm which reduces a $\IntMaxIP_{n,d}$ instance into an instance of $\ell_2$-Furthest Pair on $2n$ points in $\R^{d + 2}$. Moreover, if the $\IntMaxIP$ instance consists of vectors of $O(\log n)$-bit entries, so does the $\ell_2$-Furthest Pair instance.
\end{lemma}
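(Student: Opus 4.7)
My plan is to embed every vector of $A\cup B$ onto the Euclidean sphere of radius $\sqrt N$ in $\mathbb{R}^{d+2}$ (for an appropriate $O(\log n)$-bit integer $N$) so that the bichromatic squared $\ell_2$-distance becomes an affine, increasing function of $a\cdot b$; an $\ell_2$-furthest-pair oracle then decodes $\max_{a,b} a\cdot b$. The two extra coordinates will be used, one for each colour class, to normalize $\ell_2$-norms to a common value without introducing cross-terms in the bichromatic inner product. The identity $\|u-v\|_2^2=\|u\|_2^2+\|v\|_2^2-2\,u\cdot v$ is what makes this translation work.

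Concretely, I would take $N$ to be any integer strictly larger than $\max_{v\in A\cup B}\|v\|_2^2$, which has $O(\log n)$ bits since entries do and $d=n^{o(1)}$. Define $\phi_A(a):=(a,\alpha_a,0)\in\mathbb{R}^{d+2}$ with $\alpha_a$ an $O(\log n)$-bit rational approximation of $\sqrt{N-\|a\|_2^2}$, and $\phi_B(b):=(-b,0,\beta_b)$ with $\beta_b\approx\sqrt{N-\|b\|_2^2}$. Because the two auxiliary coordinates live in disjoint positions, $\phi_A(a)\cdot\phi_B(b)=-a\cdot b$ with no spurious cross-term, and $\|\phi_A(a)\|_2^2=\|\phi_B(b)\|_2^2=N$ up to a rounding error that can be made $o(1)$ by taking $O(\log n)$ bits of precision. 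Consequently $\|\phi_A(a)-\phi_B(b)\|_2^2 = 2N + 2\,a\cdot b$, and $\max_{a,b} a\cdot b$ — which is a bounded integer — is recovered from the bichromatic furthest distance by subtracting $2N$, halving, and rounding. The reduction is clearly $O(\mathrm{poly}(d)\cdot n)$ time (one norm and one square-root approximation per vector), and the entries are $O(\log n)$-bit rationals.

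The main obstacle is to certify that the $\ell_2$-furthest pair over $\phi_A(A)\cup\phi_B(B)$ is realized by a bichromatic pair rather than by two points of the same colour: all points lie on a sphere of radius $\sqrt N$, so every pairwise squared distance is at most $4N$, and \emph{a priori} a monochromatic pair could tie the bichromatic maximum. To handle this I would first preprocess $A$ and $B$ by translating all inputs componentwise by a common integer vector so that every entry becomes non-negative; the induced correction to $a\cdot b$ is a function only of the coordinate sums of $a$ and $b$ and can be cancelled inside the two auxiliary coordinates via a standard rank-one bookkeeping update (so we stay within the $d+2$ budget). After this step $a\cdot a'\ge 0$ for every $a,a'\in A$ and analogously in $B$, hence $\|\phi_A(a)-\phi_A(a')\|_2^2\le 2N$ and $\|\phi_B(b)-\phi_B(b')\|_2^2\le 2N$, while the bichromatic squared distance is $2N+2\max a\cdot b$; padding by a single $+1$ coordinate (absorbed into the preprocessing) forces $\max a\cdot b>0$, so the bichromatic pair is strictly furthest and the oracle's answer uniquely determines $\max_{a,b}a\cdot b$.
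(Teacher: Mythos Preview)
Your embedding $\phi_A(a)=(a,\alpha_a,0)$, $\phi_B(b)=(-b,0,\beta_b)$ and the resulting identity $\|\phi_A(a)-\phi_B(b)\|_2^2=2N+2\,a\cdot b$ are exactly what the paper uses. Where you diverge is in handling the monochromatic-vs-bichromatic issue, and there your argument has a gap while the paper's fix is much simpler.

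The gap: your ``rank-one bookkeeping'' does not fit in the two auxiliary coordinates as stated. After translating $a\mapsto a+t$, $b\mapsto b+t$, the first $d$ coordinates contribute $-(a+t)\cdot(b+t)=-a\cdot b-t\cdot a-t\cdot b-\|t\|^2$ to $\phi_A(a)\cdot\phi_B(b)$; since your two auxiliary coordinates are placed in \emph{disjoint} positions precisely so that they contribute $0$ to the bichromatic inner product, there is nowhere to absorb the $t\cdot a+t\cdot b$ correction without either abandoning the norm normalization or spending extra coordinates. And you cannot simply leave the correction in: $a\cdot b+t\cdot a+t\cdot b$ is maximized at a different pair than $a\cdot b$ in general.

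The simpler fix you overlooked: take $N$ (the paper writes $W$) \emph{polynomially larger} than the entry bound, say $W=n^{5k}$ when entries have $k\log n$ bits. Two same-colour points share the large auxiliary coordinate $\approx\sqrt{W}$, so their squared distance is only $\|x_1-x_2\|^2+(\sqrt{W-\|x_1\|^2}-\sqrt{W-\|x_2\|^2})^2=O(d\cdot n^{2k})$; meanwhile every bichromatic squared distance is $2W+2\,a\cdot b\ge 2W-d\cdot n^{2k}$, which dominates once $W\gg d\cdot n^{2k}$. No translation, no bookkeeping, and the entries remain $O(\log n)$ bits.
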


\begin{proof}
	Let $A,B$ be the sets in the $\IntMaxIP_{n,d}$ instance, and $k$ be the smallest integer such that all vectors from $A$ and $B$ consist of $(k \cdot \log n)$-bit entries.
	
	Let $W$ be $n^{C \cdot k}$ where $C$ is a large enough constant. Given $x \in A$ and $y \in B$, we construct point 
	\[
	\WT{x} = \left(x, \sqrt{W - \|x\|^2} ,0\right) \quad\text{and}\quad \WT{y} = \left(-y,0,\sqrt{W - \|y\|^2} \right),
	\]
	that is, appending two corresponding values into the end of vectors $x$ and $-y$.
	
	Now, we can see that for $x_1,x_2 \in A$, the squared distance between their reduced points is
	\[
	\| \WT{x_1} - \WT{x_2}  \|^2 = \| x_1 - x_2 \|^2 \le 4 \cdot d \cdot n^{2k}.
	\]
	Similarly we have
	\[
	\| \WT{y_1} - \WT{y_2}  \|^2 \le 4 \cdot d \cdot n^{2k}
	\]
	for $y_1,y_2 \in B$.
	
	Next, for $x \in A$ and $y \in B$, we have
	\[
	\| \WT{x} - \WT{y} \|^2 = \|\WT{x}\|^2 + \|\WT{y}\|^2 - 2 \cdot \WT{x} \cdot \WT{y} = 2 \cdot W + 2 \cdot (x \cdot y) \ge 2 \cdot W - d \cdot n^{2k} \gg 4 \cdot d \cdot n^{2k},
	\]
	
	the last inequality holds when we set $C$ to be $5$.
	
	Putting everything together, we can see the $\ell_2$-furthest pair among all points $\WT{x}$'s and $\WT{y}$'s must be a pair of $\WT{x}$ and $\WT{y}$ with $x \in A$ and $y \in B$. And maximizing $\|\WT{x} - \WT{y}\|$ is equivalent to maximize $x \cdot y$, which proves the correctness of our reduction. Furthermore, when $k$ is a constant, the reduced instance clearly only needs vectors with $O(k) \cdot \log n = O(\log n)$-bit entries.
\end{proof}

\begin{lemma}\label{lm:Max-IP-to-bichromatic-closest-pair}
	Assuming $d = n^{o(1)}$, there is an $O(\poly(d) \cdot n)$-time algorithm which reduces a $\IntMaxIP_{n,d}$ instance into an instance of Bichromatic $\ell_2$-Closest Pair on $2n$ points in $\R^{d + 2}$. Moreover, if the $\IntMaxIP$ instance consists of vectors of $O(\log n)$-bit entries, so does the Bichromatic $\ell_2$-Closest Pair instance.
\end{lemma}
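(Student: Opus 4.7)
The plan is to adapt the construction from Lemma~\ref{lm:Max-IP-to-furtherest-pair} with two small modifications. First, drop the sign flip used there: for $y \in B$, pad $y$ (rather than $-y$) orthogonally, so that the resulting cross-color squared distance \emph{decreases} (rather than increases) as the inner product $x \cdot y$ grows. Second, exploit the fact that in Bichromatic $\ell_2$-Closest Pair only red-blue pairs are compared; this removes any need to artificially separate same-color distances from cross-color distances, so no ``$W \gg \cdots$'' blow-up argument is required and the reduction becomes simpler than its furthest-pair counterpart.

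Concretely, given an $\IntMaxIP_{n,d}$ instance $(A,B)$ whose entries have $O(\log n)$ bits, I would let $W$ be any integer exceeding $\max\bigl(\max_{x\in A}\|x\|^2,\ \max_{y\in B}\|y\|^2\bigr)$; since entries are $O(\log n)$-bit and $d = n^{o(1)}$, $W$ fits in $O(\log n)$ bits. For each $x \in A$ define a red point $\WT{x} := (x, \sqrt{W - \|x\|^2}, 0) \in \R^{d+2}$, and for each $y \in B$ a blue point $\WT{y} := (y, 0, \sqrt{W - \|y\|^2}) \in \R^{d+2}$, giving $2n$ points in $\R^{d+2}$ in $O(\poly(d)\cdot n)$ time. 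A one-line calculation shows $\|\WT{x}\|^2 = \|\WT{y}\|^2 = W$ and $\WT{x}\cdot \WT{y} = x\cdot y$, whence
\[
\|\WT{x} - \WT{y}\|^2 \;=\; 2W \;-\; 2\,(x\cdot y).
\]
Thus a cross-color pair of minimum $\ell_2$-distance corresponds exactly to a pair $(x,y) \in A \times B$ maximizing $x\cdot y$, and reading off $\OPT(A,B)$ from the Bichromatic $\ell_2$-Closest Pair output is immediate.

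There is essentially no hard step: unlike Lemma~\ref{lm:Max-IP-to-furtherest-pair}, no bound needs to be proven separating intra-color distances from inter-color distances, because the bichromatic model discards same-color pairs by definition. The only bookkeeping item is the bit length of the new coordinates: each is either an $O(\log n)$-bit integer inherited from $x$ or $y$, a zero, or a number of the form $\sqrt{m}$ with $m$ an $O(\log n)$-bit nonnegative integer, which matches the bit-complexity convention used in the analogous furthest-pair statement. Combined with Theorem~\ref{theo:hard-Int-Max-IP}, this reduction then yields Theorem~\ref{theo:bi-closest-pair}.
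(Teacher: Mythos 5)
Your construction is exactly the paper's: pad each $x \in A$ to $\WT{x} = (x,\sqrt{W-\|x\|^2},0)$ and each $y \in B$ to $\WT{y} = (y,0,\sqrt{W-\|y\|^2})$, so that every cross-color squared distance equals $2W - 2(x\cdot y)$ and minimizing the bichromatic distance is equivalent to maximizing $x \cdot y$, with the same convention for the bit-length of the $\sqrt{W-\|x\|^2}$ coordinates. Your only deviations are harmless simplifications of the same approach: you take $W$ to be just above the largest squared norm (rather than $n^{5k}$ with $C=5$) and correctly note that the intra-color versus cross-color separation bound carried over from the furthest-pair proof is unnecessary here, since Bichromatic $\ell_2$-Closest Pair only compares red--blue pairs.
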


\begin{proof}
	Let $A,B$ be the sets in the $\IntMaxIP_{n,d}$ instance, and $k$ be the smallest integer such that all vectors from $A$ and $B$ consist of $(k \cdot \log n)$-bit entries.
	
	Let $W$ be $n^{C \cdot k}$ where $C$ is a large enough constant. Given $x \in A$ and $y \in B$, we construct point 
	\[
	\WT{x} = \left(x, \sqrt{W - \|x\|^2} ,0\right) \quad\text{and}\quad \WT{y} = \left(y,0,\sqrt{W - \|y\|^2} \right),
	\]
	that is, appending two corresponding values into the end of vectors $x$ and $-y$. And our reduced instance is to find the closest point between the set $\WT{A}$ (consisting of all these $\WT{x}$ where $x \in A$) and the set $\WT{B}$ (consisting of all these $\WT{y}$ where $y \in B$).
	
	Next, for $x \in A$ and $y \in B$, we have
	\[
	\| \WT{x} - \WT{y} \|^2 = \|\WT{x}\|^2 + \|\WT{y}\|^2 - 2 \cdot \WT{x} \cdot \WT{y} = 2 \cdot W - 2 \cdot (x \cdot y) \ge 2 \cdot W - d \cdot n^{2k} \gg 4 \cdot d \cdot n^{2k},
	\]
	
	the last inequality holds when we set $C$ to be $5$.
	
	Hence minimizing $\|\WT{x} - \WT{y}\|$ where $x \in A$ and $y \in B$ is equivalent to maximize $x \cdot y$, which proves the correctness of our reduction. Furthermore, when $k$ is a constant, the reduced instance clearly only needs vectors with $O(k) \cdot \log n = O(\log n)$-bit entries.
\end{proof}

Now Theorem~\ref{theo:l-2-furthest-pair} and Theorem~\ref{theo:bi-closest-pair} (restated below) are simple corollaries of Lemma~\ref{lm:Max-IP-to-furtherest-pair}, Lemma~\ref{lm:Max-IP-to-bichromatic-closest-pair} and Theorem~\ref{theo:hard-Int-Max-IP}.

\begin{reminder}{Theorem~\ref{theo:l-2-furthest-pair}}[Hardness of $\ell_2$-Furthest Pair in $c^{\log^* n}$ Dimension]
	Assuming SETH (or OVC), there is a constant $c$ such that $\ell_2$-Furthest Pair in $c^{\log^*n}$ dimensions requires $n^{2-o(1)}$ time, with vectors of $O(\log n)$-bit entries.
\end{reminder}

\begin{reminder}{Theorem~\ref{theo:bi-closest-pair}}[Hardness of Bichromatic $\ell_2$-closest Pair in $c^{\log^* n}$ Dimension] 
	Assuming SETH (or OVC), there is a constant $c$ such that Bichromatic $\ell_2$-Closest Pair in $c^{\log^*n}$ dimensions requires $n^{2-o(1)}$ time, with vectors of $O(\log n)$-bit entries.
\end{reminder}

	\section{$\NP \cdot \UPP$ communication protocol and Exact Hardness for $\IntMaxIP$}\label{sec:NP-UPP}

 We note that the inapproximability results for (Boolean) $\MaxIP$ is established via a connection to the $\MA$ communication complexity protocol of Set-Disjointness~\cite{ARW17-proceedings}. In the light of this, in this section we view our reduction from $\OV$ to $\IntMaxIP$ (Lemma~\ref{lm:dim-reduction-OV} and Theorem~\ref{theo:IntOV-to-Max-IP}) in the perspective of communication complexity. 
 
 We observe that in fact, our reduction can be understood as an $\NP \cdot \UPP$ communication protocol for Set Disjointness. Moreover, we show that if we can get a slightly better $\NP \cdot \UPP$ communication protocol for Set-Disjointness, then we would be able to prove $\IntMaxIP$ is hard even for $\omega(1)$ dimensions (and also $\ell_2$-Furthest Pair and Bichromatic $\ell_2$-Closest Pair). 

\subsection{$\NP \cdot \UPP$ Communication Protocol for Set-Disjointness}

\newcommand{\psiAlice}{\psi_{\textsf{Alice}}}
\newcommand{\psiBob}{\psi_{\textsf{Bob}}}

First, we rephrase the results of Lemma~\ref{lm:dim-reduction-OV} and Theorem~\ref{theo:IntOV-to-Max-IP} in a more convenience way for our use here.

\begin{lemma}[Rephrasing of Lemma~\ref{lm:dim-reduction-OV} and Theorem~\ref{theo:IntOV-to-Max-IP}]\label{lm:help1}
	
	Let $1 \le \ell \le d$, and $m = \ell^{O(6^{\log^*d} \cdot (d/\ell))}$. There exists a family of functions \[
	\psiAlice^i,\psiBob^i : \{0,1\}^{d} \to \mathbb{R}^{(\ell+1)^2}
	\]
	for $i \in [m]$ such that:
	
	\begin{itemize}
		\item when $x \cdot y = 0$, there is an $i$ such that $\psiAlice^i(x) \cdot \psiBob^i(y) \ge 0$;
		\item when $x \cdot y > 0$, for all $i$ $\psiAlice^i(x) \cdot \psiBob^i(y) < 0$;
		\item all $\psiAlice^i(x)$ and $\psiBob^i(y)$ can be computed in $\poly(d)$ time.
	\end{itemize}
\end{lemma}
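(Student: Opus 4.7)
The plan is to obtain $\psiAlice^i$ and $\psiBob^i$ by composing the two earlier constructions at the level of individual vectors rather than sets. Concretely, first apply the reduction underlying Lemma~\ref{lm:dim-reduction-OV} to each input vector separately. That lemma gives a fixed ``base mapping'' $\psi : \{0,1\}^d \to \mathbb{Z}^\ell$ (i.e.\ $\psi_{d/\ell,\ell}$ from Theorem~\ref{theo:main-reduction}) together with a set $V_{d/\ell,\ell} \subseteq \mathbb{Z}$ of size $m = \ell^{O(6^{\log^*d}\cdot(d/\ell))}$. Enumerating $V_{d/\ell,\ell} = \{t_1,\dots,t_m\}$, set
\[
\varphi_A^i(x) := [\psi(x),\,1] \in \mathbb{Z}^{\ell+1}, \qquad \varphi_B^i(y) := [\psi(y),\,-t_i] \in \mathbb{Z}^{\ell+1}.
\]
By the guarantee of Theorem~\ref{theo:main-reduction}, $x \cdot y = 0$ holds iff $\psi(x)\cdot \psi(y) \in V_{d/\ell,\ell}$, which in turn holds iff some $i \in [m]$ has $\varphi_A^i(x)\cdot \varphi_B^i(y) = 0$; otherwise every $i$ gives a nonzero (hence, by construction, nonzero integer) inner product.

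Next I would apply the squaring trick of Theorem~\ref{theo:IntOV-to-Max-IP} to each pair $(\varphi_A^i,\varphi_B^i)$ separately. Recall that reduction defines, for any $u,v \in \mathbb{Z}^{\ell+1}$, tensor-style lifts $\widetilde u, \widetilde v \in \mathbb{Z}^{(\ell+1)^2}$ with $\widetilde u_k = u_{\lfloor(k-1)/(\ell+1)\rfloor+1}$ and $\widetilde v_k = -v_{(k \bmod (\ell+1))+1}$, so that $\widetilde u \cdot \widetilde v = -(u\cdot v)^2$. Composing, I define
\[
\psiAlice^i(x) := \widetilde{\varphi_A^i(x)}, \qquad \psiBob^i(y) := \widetilde{\varphi_B^i(y)},
\]
which lie in $\mathbb{R}^{(\ell+1)^2}$ (in fact in $\mathbb{Z}^{(\ell+1)^2}$). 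Then $\psiAlice^i(x)\cdot \psiBob^i(y) = -\bigl(\varphi_A^i(x)\cdot \varphi_B^i(y)\bigr)^2$, which is $0$ when the inner product on the right is zero and strictly negative when it is not.

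Finally I would verify the three bullets. If $x\cdot y = 0$, the witness $i$ from the first step produces $\psiAlice^i(x)\cdot \psiBob^i(y) = 0 \ge 0$. If $x\cdot y > 0$, then no $i$ is a witness, so each $\varphi_A^i(x)\cdot \varphi_B^i(y)$ is a nonzero integer, and the squaring forces $\psiAlice^i(x)\cdot \psiBob^i(y) < 0$ for every $i$. Efficiency of evaluation is inherited: Theorem~\ref{theo:main-reduction} computes $\psi(x)$ in $\poly(d)$ time, the appending of $\pm 1$ or $-t_i$ is trivial, and the tensor lift is coordinate-wise and runs in $\poly(\ell) = \poly(d)$ time.

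There is essentially no obstacle here beyond careful bookkeeping; the only subtlety is that the ``witness $i$'' in Lemma~\ref{lm:dim-reduction-OV} is stated for set-to-set reductions, so I need to observe that the mapping $\psi$ and the set $V_{d/\ell,\ell}$ are data-independent, which lets me produce the $m$ mapping pairs $(\psiAlice^i,\psiBob^i)$ once and reuse them pointwise. The weakening from ``$=0$'' to ``$\ge 0$'' in the completeness clause is harmless for the intended downstream use as an $\NP\cdot\UPP$ protocol, since the key property retained is the strict sign separation in the soundness case.
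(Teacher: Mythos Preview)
Your proposal is correct and is exactly the intended argument: the paper states Lemma~\ref{lm:help1} as a straight rephrasing of Lemma~\ref{lm:dim-reduction-OV} composed with Theorem~\ref{theo:IntOV-to-Max-IP}, and you have unpacked precisely that composition at the level of individual vectors. The only point worth tidying is that the coordinate formula you (and the paper) quote for the squaring step should really be the tensor square $\widetilde u_{(a,b)} = u_a u_b$, $\widetilde v_{(a,b)} = -v_a v_b$ to actually yield $\widetilde u \cdot \widetilde v = -(u\cdot v)^2$; the stated index formula is an artifact, but the intended construction and your conclusion are unchanged.
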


From the above lemma, and the standard connection between $\UPP$ and sign-rank~\cite{paturi1986probabilistic} (see also Chapter 4.11 of~\cite{jukna2012boolean}), we immediately get the communication protocol we want and prove Theorem~\ref{theo:NPUPP-for-DISJ} (restated below for convenience).

\begin{reminder}{Theorem~\ref{theo:NPUPP-for-DISJ}}
	For all $1 \le \alpha \le n$, there is an
	\[
	\left( \alpha \cdot 6^{\logstar n} \cdot (n/2^\alpha), O(\alpha) \right)\text{-computational-efficient}
	\]
	$\NP \cdot \UPP$ communication protocol for $\DISJ_{n}$.
\end{reminder}
\begin{proofsketch}
	We set $\alpha = \log \ell$ here. Given the function families $\{\psiAlice^i\},\{\psiBob^i\}$ from Lemma~\ref{lm:help1}, Merlin just sends the index $i \in [m]$, the rest follows from the connection between $\UPP$ protocols and sign-rank of matrices.
\end{proofsketch}

\subsection{Slightly Better Protocols Imply Hardness in $\omega(1)$ Dimensions}

Finally, we show that if we have a slightly better $\NP \cdot \UPP$ protocol for Set-Disjointness, then we can show $\IntMaxIP$ requires $n^{2 - o(1)}$ time even for $\omega(1)$ dimensions (and so do $\ell_2$-Furthest Pair and Bichromatic $\ell_2$-Closest Pair). We restate Theorem~\ref{theo:better-imply-MaxIP} here for convenience.

\begin{reminder}{Theorem~\ref{theo:better-imply-MaxIP}}
	Assuming SETH (or OVC), if there is an increasing and unbounded function $f$ such that for all $1 \le \alpha \le n$, there is a
	\[
	\left( n / f(\alpha), \alpha \right)\text{-computational-efficient}
	\]
	$\NP \cdot \UPP$ communication protocol for $\DISJ_{n}$, then $\IntMaxIP_{n,\omega(1)}$ requires $n^{2 - o(1)}$ time with vectors of $\polylog(n)$-bit entries. The same holds for $\ell_2$-Furthest Pair and Bichromatic $\ell_2$-Closest Pair.
\end{reminder}
\begin{proof}
	Suppose otherwise, there is an algorithm $\alg$ for $\IntMaxIP_{n,d}$ running in $n^{2 - \eps_1}$ time for all constant $d$ and for a constant $\eps_1 > 0$ (note for the sake of Lemma~\ref{lm:Max-IP-to-furtherest-pair} and Lemma~\ref{lm:Max-IP-to-bichromatic-closest-pair}, we only need to consider $\IntMaxIP$ here).
	
	Now, let $c$ be an arbitrary constant, we are going to construct an algorithm for $\OV_{n,c \log n}$ in $n^{2 - \Omega(1)}$ time, which contradicts OVC.
	
	Let $\eps = \eps_1 / 2$, and $\alpha$ be the first number such that $c/f(\alpha) < \eps$, note that $\alpha$ is also a constant. Consider the $(c \log n / f(\alpha),\alpha)$-computational-efficient $\NP \cdot \UPP$ protocol $\Pi$ for $\DISJ_{c \log n}$, and let $A,B$ be the two sets in the $\OV_{n,c \log n}$ instance. Our algorithm via reduction works as follows:
	
	\begin{itemize}
		\item There are $2^{\alpha}$ possible messages in $\{0,1\}^{\alpha}$, let $m_{1},m_{2},\dotsc,m_{2^\alpha}$ be an enumeration of them.
		
		\item We first enumerate all possible advice strings from Merlin in $\Pi$, there are $2^{c \log n / f(\alpha)} \le 2^{\eps \cdot \log n} = n^{\eps}$ such strings, let $\phi \in \{0,1\}^{\eps \cdot \log n}$ be such an advice string.
			\begin{itemize}
				\item For each $x \in A$, let $\psiAlice(x) \in \mathbb{R}^{2^{\alpha}}$ be the probabilities that Alice accepts each message from Bob. That is, $\psiAlice(x)_{i}$ is the probability that Alice accepts the message $m_i$, given its input $x$ and the advice $\phi$.
				
				\item Similarly, for each $y \in B$, let $\psiBob(y) \in \mathbb{R}^{2^\alpha}$ be the probabilities that Bob sends each message. That is, $\psiBob(y)_i$ is the probability that Bob sends the message $m_i$, give its input $y$ and the advice $\phi$.
				
				\item Then, for each $x \in A$ and $y \in B$, $\psiAlice(x) \cdot \psiBob(y)$ is precisely the probability that Alice accepts at the end when Alice and Bob holds $x$ and $y$ correspondingly and the advice is $\phi$. Now we let $A_\phi$ be the set of all the $\psiAlice(x)$'s, and $B_\phi$ be the set of all the $\psiBob(y)$'s.
			\end{itemize}
		
		\item If there is a $\phi$ such that $\OPT(A_\phi,B_\phi) \ge 1/2$, then we output yes, and otherwise output no.
	\end{itemize}

	From the definition of $\Pi$, it is straightforward to see that the above algorithm solves $\OV_{n,c \cdot \log n}$. Moreover, notice that from the computational-efficient property of $\Pi$, the reduction itself works in $n^{1+\eps} \cdot \polylog(n)$ time, and all the vectors in $A_\phi$'s and $B_\phi$'s have at most $\polylog(n)$ bit precision, which means $\OPT(A_\phi,B_\phi)$ can be solved by a call to $\IntMaxIP_{n,2^\alpha}$ with vectors of $\polylog(n)$-bit entries.
	
	Hence, the final running time for the above algorithm is bounded by $n^{\eps} \cdot n^{2 - \eps_1} = n^{2 - \eps}$ ($2^{\alpha}$ is still a constant), which contradicts the OVC.
\end{proof}
	\newcommand{\IP}{\mathsf{IP}}

\section{Improved $\MA$ Protocols}
\label{sec:MA}

In this section we prove Theorem~\ref{theo:improved-MA} (restated below for convenience).

\begin{reminder}{Theorem~\ref{theo:improved-MA}}
	There is an $\MA$ protocol for $\DISJ_{n}$ and $\InProd_{n}$ with communication complexity
	$$
	O\left(\sqrt{n\log n\log\log n}\right).
	$$
\end{reminder}

To prove Theorem~\ref{theo:improved-MA}, we need the following intermediate problem.
\begin{defi}[The Inner Product Modulo $p$ Problem ($\IP^{p}_n$)]
	Let $p$ and $n$ be two positive integers, in $\IP^{p}_{n}$, Alice and Bob are given two vectors $X$ and $Y$ in $\{0,1\}^{n}$, and they want to compute $X \cdot Y \pmod p$. 
\end{defi}

Note that $\IP_{n}$ and $\IP^p_n$ are not Boolean functions, so we need to generalize the definition of an $\MA$ protocol. In an $\MA$ protocol for $\IP_{n}$, Merlin sends the answer directly to Alice together with a proof to convince Alice and Bob. The correctness condition becomes that for the right answer $X \cdot Y$, Merlin has a proof such that Alice and Bob will accept with high probability (like $2/3$). And the soundness condition becomes that for the wrong answers, every proof from Merlin will be rejected with high probability.

We are going to use the following $\MA$ protocol for $\IP^{p}_{n}$, which is a slight adaption from the protocol in~\cite{Rubinstein2017closest}.

\begin{lemma}[Implicit in Theorem 3.1 of~\cite{Rubinstein2017closest}]\label{lm:previous}
	For a sufficiently large prime $q$ and integers $T$ and $n$, there is an 
	\[
	\Big(O\left( n/T \cdot \log q \right),
	\log n + O(1),
	O\left( T \cdot \log q  \right),
	1/2
	\Big)\text{-efficient}
	\] 
	$\MA$ protocol for $\IP^{q}_{n}$.
\end{lemma}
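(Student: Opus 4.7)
I plan to prove the lemma by giving the standard Aaronson--Wigderson-style MA protocol for inner product based on Reed--Solomon/low-degree extensions, with one twist: the verifier's evaluation point is sampled from a small subset of $\mathbb{F}_q$ of size $\Theta(n/T)$ rather than from all of $\mathbb{F}_q$, which is exactly what makes the randomness only $\log n + O(1)$ independent of $q$.

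Concretely, embed $\{0,1\} \subseteq \mathbb{F}_q$ and partition the $n$ coordinates into $n/T$ blocks of size $T$, so Alice's input decomposes as $X = (X_1,\ldots,X_{n/T})$ with each $X_i \in \mathbb{F}_q^T$ (similarly for Bob's $Y$), and $X \cdot Y \equiv \sum_{i=1}^{n/T} X_i \cdot Y_i \pmod{q}$. For each coordinate $j \in [T]$, let $p_X^{(j)}(z) \in \mathbb{F}_q[z]$ be the unique univariate polynomial of degree $< n/T$ with $p_X^{(j)}(i) = (X_i)_j$ for $i=1,\ldots,n/T$, and set $p_X(z) = \bigl(p_X^{(j)}(z)\bigr)_{j=1}^T$; define $p_Y$ analogously. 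Then the scalar polynomial $R(z) := p_X(z) \cdot p_Y(z) = \sum_{j=1}^{T} p_X^{(j)}(z)\, p_Y^{(j)}(z)$ has degree at most $2(n/T-1)$ and satisfies $\sum_{i=1}^{n/T} R(i) \equiv X \cdot Y \pmod{q}$.

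The protocol is then: Merlin, knowing both $X$ and $Y$, sends Alice (a) the claimed answer $v \in \mathbb{F}_q$ and (b) the coefficients of $R(z)$, for a total of $O((n/T)\log q)$ bits. Alice and Bob jointly sample $z_0$ uniformly from a canonical set $S \subseteq \mathbb{F}_q \setminus \{1,\ldots,n/T\}$ whose size is a power of two in the range $[4n/T,\,8n/T]$; this costs $\log|S| = \log n + O(1)$ coins. Bob then sends $p_Y(z_0) \in \mathbb{F}_q^T$, using $O(T \log q)$ bits. Alice can compute $p_X(z_0)$ directly from her own input, and accepts iff $p_X(z_0) \cdot p_Y(z_0) \equiv R(z_0) \pmod q$ and $\sum_{i=1}^{n/T} R(i) \equiv v \pmod q$.

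Completeness is immediate when Merlin is honest. For soundness, suppose Merlin sends $(\widetilde{v}, \widetilde{R})$ with $\widetilde{v} \not\equiv X \cdot Y \pmod{q}$. If the summation check $\sum_i \widetilde R(i) \equiv \widetilde v$ fails, Alice rejects with probability $1$; otherwise $\sum_i \widetilde R(i) \equiv \widetilde v \not\equiv \sum_i R(i)$, so $\widetilde R \neq R$ as polynomials. Since both have degree $< 2n/T$, they can agree on at most $2n/T - 2$ points of $\mathbb{F}_q$, hence on at most $2n/T - 2$ points of $S$, so $\Pr_{z_0 \in S}[\widetilde R(z_0) = p_X(z_0) \cdot p_Y(z_0)] \le (2n/T)/|S| \le 1/2$. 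The only design subtlety---and the reason this differs from the textbook AW protocol---is the decoupling of the evaluation-point domain $S$ from the field $\mathbb{F}_q$: enlarging $q$ must not inflate the coin count, so $|S|$ is fixed by the Schwartz--Zippel soundness demand alone. I do not anticipate a real obstacle; checking that the claimed $(m,r,\ell,s)$ bounds hold is then a direct calculation, assuming only that $q$ is sufficiently large (larger than $n$, so the interpolation nodes are distinct in $\mathbb{F}_q$ and $S$ has room to live outside $\{1,\ldots,n/T\}$).
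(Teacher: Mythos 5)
Your protocol is the Reed--Solomon/low-degree-extension protocol of Aaronson--Wigderson with a generalized block trade-off, and it inherently needs the field $\mathbb{F}_q$ to contain $n/T$ distinct interpolation nodes plus an evaluation set $S$ of size roughly $4n/T$ disjoint from them; you make this explicit by assuming $q > n$. But ``sufficiently large prime $q$'' in the lemma cannot be read relative to $n$ and $T$: it means $q$ exceeding a universal constant. Indeed, the lemma is invoked in the proof of Theorem~\ref{theo:improved-MA} with $q = p_i \le O\left((\log n/\log\log n)^2\right)$ while $T = \Theta\left(\sqrt{n\log n/\log\log n}\right)$, so $q$ is far smaller than $n/T = n^{1/2 - o(1)}$. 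In that regime your construction is not even well defined ($\mathbb{F}_q$ has no $n/T$ distinct nodes and no room for $S$), and the soundness step collapses: two distinct polynomials of degree $2n/T \gg q$ can agree at every point of $\mathbb{F}_q$ as functions, so the bound $(2n/T)/|S|$ is vacuous. Passing to an extension field $\mathbb{F}_{q^k}$ with $q^k \ge n$ restores well-definedness but costs $k\log q = \Omega(\log n)$ bits per field element, so Bob's message becomes $\Omega(T\log n)$, violating the claimed $O(T\log q)$ bound; more generally, any Reed--Solomon-based choice forces $\log(\text{field size}) = \Omega(\log (n/T))$, and balancing advice against message then gives $\Omega(\sqrt{n}\log n)$ total cost, i.e.\ exactly the old Aaronson--Wigderson bound, so the $O\left(\sqrt{n\log n\log\log n}\right)$ protocol of Theorem~\ref{theo:improved-MA} could not be derived from your version of the lemma.

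The missing idea is precisely what the paper borrows from Theorem 3.1 of~\cite{Rubinstein2017closest}: replace Reed--Solomon by algebraic-geometry codes (Garcia--Stichtenoth-type towers) over $\mathbb{F}_{q^2}$. For every prime $q$ above an absolute constant, these codes have constant rate, constant relative distance, and are multiplication-friendly, so the code length (which plays the role of your degree parameter $n/T$) is completely decoupled from the alphabet size; Merlin sends the claimed sum of coordinatewise products of encoded blocks ($O(n/T\cdot\log q)$ bits), Alice and Bob sample one coordinate of the product codeword ($\log n + O(1)$ coins), and Bob reveals his encoded symbols at that coordinate ($O(T\log q)$ bits), with soundness $1/2$ from the constant relative distance of the product code. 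The only adaptation the paper needs is to run this over a field of characteristic $q$, namely $\mathbb{F}_{q^2}$, so that Alice verifies $X\cdot Y \bmod q$ --- that is what the parenthetical remark about dropping the requirement ``$q \ge T$'' refers to. Your trick of drawing the evaluation point from a small canonical subset is fine as far as it goes, but it addresses the coin count, not the actual difficulty, which is keeping the alphabet (and hence $\log q$) small while the number of blocks is polynomial in $n$.
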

\begin{proofsketch}
	The only adaption is that we just use the field $\mathbb{F}_{q^2}$ with respect to the given prime $q$. (In the original protocol it is required that $q \ge T$.)
\end{proofsketch}

Now we ready to prove Theorem~\ref{theo:improved-MA}.

\begin{proofof}{Theorem~\ref{theo:improved-MA}}
	Since a $\IP_{n}$ protocol trivially implies a $\DISJ_{n}$ protocol, we only need to consider $\IP_{n}$ in the following.
	
	Now, let $x$ be the number such that $x^x = n$, for convenience we are going to pretend that $x$ is an integer. It is easy to see that $x = \Theta(\log n/\log\log n)$.
	Then we pick $10 x$ distinct primes $p_1,p_2,\dotsc,p_{10x}$ in $[x+1,x^2]$ (we can assume that $n$ is large enough to make $x$ satisfy the requirement of Lemma~\ref{lm:many-primes}).	Let $T$ be a parameter, we use $\Pi_{p_i}$ to denote the $\Big(O\left( n/T \cdot \log p_i \right),	\log n + O(1),	O\left( T \cdot \log p_i  \right), 1/2\Big)$-efficient $\MA$ protocol for $\IP_{n}^{p_i}$.
	
	Our protocol for $\IP_{n}$ works as follows:	
	\begin{itemize}
		\item Merlin sends Alice all the advice strings from the protocols $\Pi_{p_1},\Pi_{p_2},\dotsc,\Pi_{p_{10x}}$, together with a presumed inner product $0 \le z \le n$.
		
		\item Note that $\Pi_{p_i}$ contains the presumed value of $X \cdot Y \pmod{p_i}$, Alice first checks whether $z$ is consistent with all these $\Pi_{p_i}$'s, and rejects immediately if it does not.
		
		\item Alice and Bob jointly toss $O(\log (10 x))$ coins, to pick a uniform random number $i^\star \in [10x]$, and then they simulate $\Pi_{p_{i^\star}}$. That is, they pretend they are the Alice and Bob in the protocol $\Pi_{p_{i^\star}}$ with the advice from Merlin in $\Pi_{p_{i^\star}}$ (which Alice does have).
	\end{itemize}

	\paragraph*{Correctness.} Let $X,Y \in \{0,1\}^{n}$ be the vectors of Alice and Bob. If $X \cdot Y = z$, then by the definition of these protocols $\Pi_{p_i}$'s, Alice always accepts with the correct advice from Merlin. 
	
	Otherwise, let $ d = X \cdot Y \ne z$, we are going to analyze the probability that we pick a ``good'' $p_{i^\star}$ such that $p_{i^\star}$ does not divide $|d - z|$. Since $p_i > x$ for all $p_i$'s and $x^x > n \ge |d - z|$, $|d - z|$ cannot be a multiplier of more than $x$ primes in $p_i$'s. 
	
	Therefore, with probability at least $0.9$, our pick of $p_{i^\star}$ is good. And in this case, from the definition of the protocols $\Pi_{p_i}$'s, Alice and Bob would reject afterward with probability at least $1/2$, as $d \pmod{p_{i^\star}}$ differs from $z \pmod{p_{i^\star}}$. In summary, when $X \cdot Y \ne z$, Alice rejects with probability at least $0.9 / 2 = 0.45$, which finishes the proof for the correctness.
	
	\paragraph*{Complexity.} Now, note that the total advice length is
	\[
	O\left(n/T \cdot \sum_{i=1}^{10x} \log p_i\right) = 
	O\left(n/T \cdot \log \prod_{i=1}^{10x} x^2\right) =
	O\left(n/T \cdot \log x^{20x}\right) = 
	O\left(n/T \cdot \log n\right).
	\]
	And the communication complexity between Alice and Bob is bounded by
	\[
	O\left(T \cdot \log x^2 \right) =
	O\left(T \cdot \log \log n \right).
	\]
	
	Setting $T = \sqrt{n \log n / \log\log n}$ balances the above two quantities, and we obtain the needed $\MA$-protocol for $\DISJ_{n}$.
\end{proofof}

\begin{comment}
\begin{lemma}\label{lm:MA-improved-detailed}
	
	For sufficiently large $T$ and $m$, there is an
	\[
	\left(O\left( m/T \cdot \log T \right),
	\log m + O(1),
	O\left( T \cdot \log\log T  \right),
	1/3
	\right)\text{-efficient}
	\] 
	$\MA$ protocol for $\DISJ_{m}$.
	
\end{lemma}
\end{comment}
	\section{Future Works}
We end our paper by discussing a few interesting research directions.

\begin{itemize}
	\item The most important open question from this paper is that can we further improve the dimensionality reduction for $\OV$? It is certainly weird to consider $2^{O(\logstar n)}$ to be the right answer for the limit of the dimensionality reduction. This term seems more like a product of the nature of our recursive construction and not the problem itself. We conjecture that there should be an $\omega(1)$ dimensional reduction with a more direct construction. 
	
	One possible direction is to combine the original polynomial-based construction from~\cite{Wil18} together with our new number theoretical one. These two approaches seem completely different, hence a clever combination of them may solve our problem.
	
	\item In order to prove $\omega(1)$ dimensional hardness for $\ell_2$-Furthest Pair and Bichromatic $\ell_2$-Closest Pair, we can also bypass the $\OV$ dimensionality reduction things by proving $\omega(1)$ dimensional hardness for $\IntMaxIP$ directly. One possible way to approach this question is to start from the $\NP \cdot \UPP$ communication protocol connection as in Section~\ref{sec:NP-UPP} (apply Theorem~\ref{theo:better-imply-MaxIP}), and (potentially) draw some connections from some known $\UPP$ communication protocols.
	
	\item We have seen an efficient reduction from $\Hopcroft$ to $\IntMaxIP$ which only blows up the dimension quadratically, is there a similar reduction from $\IntMaxIP$ back to $\Hopcroft$? Are $\IntMaxIP$ and $\Hopcroft$ equivalent?
	
	\item By making use of the new AG-code based $\MA$ protocols, we can shave a $\widetilde{O}(\sqrt{\log n})$ factor from the communication complexity, can we obtain an $O(\sqrt{n})$ $\MA$ communication protocol matching the lower bound for $\DISJ_n$? It seems new ideas are required. 
	
	Since our $\MA$ protocol works for both $\DISJ$ and $\InProd$, and $\InProd$ does seems to be a harder problem. It may be better to find an $\MA$ protocol only works for $\DISJ$. It is worth noting that an $O(\sqrt{n})$ $\textsf{AMA}$ communication protocol for $\DISJ$ is given by~\cite{Rubinstein2017closest}, which doesn't work for $\InProd$.
	
	\item Can the dependence on $\eps$ in the algorithms from Theorem~\ref{theo:Max-IP-M} be further improved? Is it possible to apply ideas in the $n^{2 - 1/\widetilde{\Omega}(\sqrt{c})}$ algorithm for $\MaxIP_{n, c\log n}$ from~\cite{alman2016polynomial}?
	
	\item For the complexity of $2$-multiplicative-approximation to $\MaxIP_{n,c \log n}$, Theorem~\ref{theo:Max-IP-M} implies that there is an algorithm running in $n^{2 - 1/O(\log c)}$ time, the same as the best algorithm for $\OV_{n,c \log n}$~\cite{abboud2015more}. Is this just a coincidence? Or are there some connections between these two problems?
	
	\item We obtain a connection between hardness of $\IntMaxIP$ and $\NP \cdot \UPP$ communication protocols for Set-Disjointness. Can we get similar connections from other $\NP \cdot \mathcal{C}$ type communication protocols for Set-Disjointness? Some candidates include $\NP \cdot \SBP$ and $\NP \cdot \textsf{promiseBQP}$ ($\textsf{QCMA}$).
\end{itemize}

\section*{Acknowledgment}
I would like to thank Ryan Williams for introducing the problem to me, countless encouragement and helpful discussions during this work, and also many comments on a draft of this paper. In particular, the idea of improving $\OV$ dimensionality self-reduction using CRT (the direct CRT based approach) is introduced to me by Ryan Williams. 

I am grateful to Virginia Vassilevska Williams, Kaifeng Lv, Peilin Zhong for helpful discussions and suggestions. I would like to thank Aviad Rubinstein for sharing a manuscript of his paper, and pointing out that the $O(\sqrt{n\log n \log\log n})$ $\MA$ protocol also works for Inner Product.
	
	\appendix
	\section{A Dimensionality Reduction for $\MaxIP$}
\label{app:dim-reduction-MaxIP}

In fact, tracing the proof of Theorem~\ref{theo:main-reduction}, we observe that it is possible to compute the inner product $x \cdot y$ itself from $\psi_{b,\ell}(x) \cdot \psi_{b,\ell}(y)$, that is:

\begin{cor}\label{cor:ExactIP-reduction}
	Let $b, \ell$ be two sufficiently large integers. There is a reduction $\psi_{b,\ell} : \{0,1\}^{b \cdot \ell} \to \mathbb{Z}^{\ell}$ and $b \cdot \ell + 1$ sets $V_{b,\ell}^0,V_{b,\ell}^1,\dotsc,V_{b,\ell}^{b \cdot \ell} \subseteq \mathbb{Z}$, such that for every $x,y \in \{0,1\}^{b \cdot \ell}$,
	
	$$
	x \cdot y = k \Leftrightarrow \psi_{b,\ell}(x) \cdot \psi_{b,\ell}(y) \in V_{b,\ell}^k \quad \text{for all $0 \le k \le b \cdot \ell$,}
	$$
	and 
	$$
	0 \le \psi_{b,\ell}(x)_i < {\ell}^{6^{\logstar(b)} \cdot b}
	$$
	for all possible $x$ and $i \in [\ell]$.
	Moreover, the computation of $\psi_{b,\ell}(x)$ takes $\operatorname*{poly}(b \cdot \ell)$ time, and the sets $V_{b,\ell}^k$'s can be constructed in $O\left(\ell^{O(6^{\log^*(b)} \cdot b)} \cdot \operatorname{poly}(b \cdot \ell) \right)$ time.
\end{cor}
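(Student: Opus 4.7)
The plan is to trace through the proof of Theorem~\ref{theo:main-reduction} and observe that at every level of the recursion, $\psi_{b,\ell}(x) \cdot \psi_{b,\ell}(y)$ in fact determines the exact value of $x \cdot y$, not merely whether it vanishes. I would keep the same construction of $\psi_{b,\ell}$ verbatim, and only refine which sets of integers get labeled by which inner product value $k$.

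First, in the base case $b < \ell$, the proof already shows $\psi_b(x) \cdot \psi_b(y) \equiv \sum_{i=1}^\ell x^i_j \cdot y^i_j \pmod{q_j}$, and since $q_j > \ell$ bounds this inner sum, the residue equals it exactly. Summing over $j \in [b]$ then recovers $x \cdot y = \sum_{j=1}^b \sum_{i=1}^\ell x^i_j \cdot y^i_j$. So I would define $V_{b,\ell}^k$ as the set of integers $t \in [0, \ell^{6^{\logstar(b)} \cdot 2b+1})$ whose residues $t \bmod q_j$ each lie in $[0,\ell]$ and sum to $k$ across $j \in [b]$.

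For the recursive step, the same computation in the proof gives $\psi_b(x) \cdot \psi_b(y) \equiv \psi_{\bm}(x^{[j]}) \cdot \psi_{\bm}(y^{[j]}) \pmod{p_j}$. By the inductive hypothesis applied at block size $\bm$, this residue determines $k_j := x^{[j]} \cdot y^{[j]}$ uniquely via the sets $V_{\bm,\ell}^{k_j}$, and then $x \cdot y = \sum_{j=1}^{b/\bm} k_j$. Accordingly, I would define $V_{b,\ell}^k$ to contain those $t$ in the appropriate range such that there exist non-negative integers $k_1,\ldots,k_{b/\bm}$ with $\sum_j k_j = k$ and $(t \bmod p_j) \in V_{\bm,\ell}^{k_j}$ for each $j$.

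The construction cost is unchanged from Theorem~\ref{theo:main-reduction}: we enumerate the at most $\ell^{O(6^{\logstar(b)} \cdot b)}$ possible values of $\psi_{b,\ell}(x) \cdot \psi_{b,\ell}(y)$, recursively decode each into its inner product value, and bucket into the $V_{b,\ell}^k$. Since there are only $b\ell + 1$ possible inner product values, this adds at most a polynomial overhead. There is no real obstacle here beyond carefully threading the inductive invariant ``the decode map is well-defined and single-valued'' through the recursion, which is immediate from the modular identities above since the $q_j$'s (respectively $p_j$'s) are chosen large enough that no collision can occur among the in-range residues.
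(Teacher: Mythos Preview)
Your proposal is correct and matches the paper's approach exactly: the paper states Corollary~\ref{cor:ExactIP-reduction} as a direct observation obtained by ``tracing the proof of Theorem~\ref{theo:main-reduction},'' noting that $\psi_{b,\ell}(x)\cdot\psi_{b,\ell}(y)$ in fact determines $x\cdot y$, and you have correctly supplied the details of that trace at both the base and recursive levels.
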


Together with Theorem~\ref{theo:IntOV-to-Max-IP}, it proves Corollary~\ref{cor:MaxIP-reduction} (restated below).

\begin{reminder}{Corollary~\ref{cor:MaxIP-reduction}}
	Let $1 \le \ell \le d$. There is an 
	$$
	O\left(n \cdot \ell^{O(6^{\log^*d} \cdot (d/\ell))} \cdot \operatorname*{poly}(d) \right)\text{-time}
	$$
	reduction from $\MaxIP_{n,d}$ to $ d \cdot \ell^{O(6^{\log^*d} \cdot (d/\ell))}$ instances of $\IntMaxIP_{n,(\ell + 1)^2}$, with vectors of entries with bit-length $O\left(d/\ell \cdot \log \ell \cdot 6^{\log^* d}\right)$.
\end{reminder}
\begin{proofsketch}
	Let $b = d / \ell$ (assume $\ell$ divides $d$ here for simplicity), $A$ and $B$ be the sets in the given $\MaxIP_{n,d}$ instance, we proceed similarly as the case for $\OV$. 
	
	We first enumerate a number $k$ from $0$ to $d$, for each $k$ we construct the set $V_{b,\ell}^k$ as specified in  Corollary~\ref{cor:ExactIP-reduction}. Then there is $(x,y) \in A \times B$ such that $x \cdot y = k$ if and only if there is $(x,y) \in A \times B$ such that $\psi_{b,\ell}(x) \cdot \psi_{b,\ell}(y) \in V_{b,\ell}^k$. Using exactly the same reduction as in Lemma~\ref{lm:dim-reduction-OV}, we can in turn reduce this into ${\ell}^{O(6^{\logstar(b)} \cdot b)}$ instances of $\Hopcroft_{n,\ell + 1}$. 
	
	Applying Theorem~\ref{theo:IntOV-to-Max-IP}, with evaluation of $(d+1) \cdot {\ell}^{O(6^{\logstar(b)} \cdot b)}$ $\IntMaxIP_{n,(\ell + 1)^2}$ instances, we can determine whether there is $(x,y) \in A \times B$ such that $x \cdot y = k$ for every $k$, from which we can compute the answer to the $\MaxIP_{n, d}$ instance.
\end{proofsketch}
	\section{Nonuniform to Uniform Transformation for Dimensionality Reduction for $\OV$}
\label{app:nonuniform-obs}

In this section we discuss the transformation from nonuniform construction to uniform one for dimensionality reduction for $\OV$. In order to state our result formally, we need to introduce some definitions.

%Let us first recall Knuth's up arrow notation.
%
%\begin{defi}
%	Let $a,b \in \mathbb{N}$, we define $a \uparrow\uparrow b$ as
%	\[
%	a\uparrow\uparrow b := \underbrace{a^{a^{\iddots^{a}}}}_{\text{$b$ copies of $a$}}.
%	\]
%\end{defi}

\begin{defi}[Nonuniform Reduction]
	Let $b, \ell, \kappa \in \mathbb{N}$. We say a function $\varphi : \{0,1\}^{b \cdot \ell} \to \mathbb{Z}^{\ell}$ together with a set $V \subseteq \mathbb{Z}$ is a $(b,\ell,\kappa)$-reduction, if the following holds:
	
	\begin{itemize}
		\item For every $x,y \in \{0,1\}^{b \cdot \ell}$,
		\[
		x \cdot y = 0 \Leftrightarrow \varphi(x) \cdot \varphi(y) \in V.
		\]
		\item For every $x$ and $i \in [\ell]$,
		\[
		0 \le \varphi(x)_i < {\ell}^{\kappa \cdot b}.
		\]
	\end{itemize}

	Similarly, let $\tau$ be an increasing function, we say a function family $\{\varphi_{b,\ell}\}_{b,\ell}$ together with a set family $\{ V_{b,\ell} \}_{b,\ell}$ is a $\tau$-reduction family, if for every $b$ and $\ell$, $(\varphi_{b,\ell},V_{b,\ell})$ is a $(b,\ell,\tau(b))$-reduction.

	Moreover, if for all $b$ and all $\ell \le \log\log\log b$, there is an algorithm $\alg$ which computes $\varphi_{b,\ell}(x)$ in $\operatorname*{poly}(b)$ time given $b,\ell$ and $x \in \{0,1\}^{b \cdot \ell}$, and constructs the set $V_{b,\ell}$ in $O\left( \ell^{O(\tau(b) \cdot b)} \cdot \operatorname{poly}(b) \right)$ time given $b$ and $\ell$, then we call $(\varphi_{b,\ell},V_{b,\ell})$ a uniform-$\tau$-reduction family.
\end{defi}

\begin{rem}
	The reason we assume $\ell$ to be small is that in our applications we only care about very small $\ell$, and that greatly simplifies the notation. From Theorem~\ref{theo:main-reduction}, there is a uniform-$\left(6^{\logstar b}\right)$-reduction family, and a better uniform-reduction family implies better hardness for $\IntOV$ and other related problems as well (Lemma~\ref{lm:dim-reduction-OV}, Theorem~\ref{theo:IntOV-to-Max-IP}, Lemma~\ref{lm:Max-IP-to-bichromatic-closest-pair} and Lemma~\ref{lm:Max-IP-to-furtherest-pair}).
\end{rem}

Now we are ready to state our nonuniform to uniform transformation result formally.

\begin{theo}\label{theo:nonuniform-to-uniform}
	Letting $\tau$ be an increasing function such that $\tau(n) = O(\log\log\log n)$ and supposing there is a $\tau$-reduction family, then there is a uniform-$O(\tau)$-reduction family.
\end{theo}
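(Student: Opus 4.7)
The plan is to re-use the recursive CRR construction from Theorem~\ref{theo:main-reduction}, but instead of running it all the way down to a direct CRR base, stop after only three recursion levels and replace the base case with a \emph{brute-force search} through all candidate small reductions, appealing to the nonuniform hypothesis only for \emph{existence}. The reason three levels suffice is that the parameter $b$ shrinks roughly like $b\mapsto\log b\mapsto\log\log b\mapsto\log\log\log b$, and the hypothesis $\tau(n)=O(\log\log\log n)$ is precisely calibrated so that brute-force search over $(\log\log\log b)$-size inputs still runs in $\operatorname{poly}(b)$ time.

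Given $b$ and $\ell\le\log\log\log b$, I would pick $b_0=b>b_1>b_2>b_3$ via the relation $b_i = \ell^{\kappa_{i+1}\, b_{i+1}}$, where $\kappa_3:=\tau(b_3)$ and $\kappa_i := 6\kappa_{i+1}$ for $i<3$. This is exactly the regime in which one step of the Theorem~\ref{theo:main-reduction} recursion converts a $(b_{i+1},\ell,\kappa_{i+1})$-reduction into a $(b_i,\ell,6\kappa_{i+1})$-reduction: the inner coordinate bound $\ell^{\kappa_{i+1} b_{i+1}}=b_i$ fits strictly under the primes $>b_i^2\ell$ used by the outer CRR layer, and the outer coordinate bound computes to $(b_i^2\ell)^{2 b_i/b_{i+1}}=\ell^{O(\kappa_{i+1}\, b_i)}$. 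Unrolling the chain yields $b_3 = O(\log\log\log b)$, and the final quality parameter $\kappa_0 = 6^3\,\tau(b_3) \le 216\,\tau(b) = O(\tau(b))$.

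For the base, I would brute-force search for $(\varphi_{b_3,\ell}, V_{b_3,\ell})$: enumerate all maps $\varphi:\{0,1\}^{b_3\ell}\to\{0,\dots,\ell^{\tau(b_3) b_3}-1\}^{\ell}$, form the induced set $V_\varphi := \{\varphi(x)\cdot\varphi(y) : x\cdot y=0\}$, and accept the first $\varphi$ for which $\varphi(x)\cdot\varphi(y)\notin V_\varphi$ holds on every pair with $x\cdot y>0$. The nonuniform hypothesis guarantees that at least one candidate succeeds. Verifying a single candidate costs $4^{b_3\ell}\cdot\operatorname{poly}(b_3)$, and the candidate space has size $\bigl(\ell^{\tau(b_3) b_3}\bigr)^{\ell\cdot 2^{b_3\ell}}$. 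Writing $u:=\log\log\log b$ and using $b_3,\ell,\tau(b_3)=O(u)$, the logarithm of the total brute-force time is $O(2^{u^2}\cdot u^{O(1)})$, which is $o(\log b)$ because $\log b = 2^{2^u}$ grows much faster than $2^{u^2}$; so brute force terminates in $\operatorname{poly}(b)$ time. The set $V_{b,\ell}$ is then generated by the straightforward enumeration from Theorem~\ref{theo:main-reduction}, comfortably within the allotted $\ell^{O(\tau(b) b)}\cdot\operatorname{poly}(b)$ budget.

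The main technical obstacle I expect is keeping this parameter chain tight. The relation $b_i = \ell^{\kappa_{i+1} b_{i+1}}$ must hold (up to $O(1)$) so that the outer primes dominate inner inner-products; simultaneously $b_3\ell$ must be $o(\log\log b)$ so that $2^{b_3\ell}=o(\log b)$ and the brute force remains polynomial; and the quality must only multiply by $O(1)$ per level so the final $\kappa_0$ stays $O(\tau(b))$. The simultaneous satisfiability of all three constraints after three levels is exactly what $\tau(n)=O(\log\log\log n)$ and $\ell\le\log\log\log b$ are buying us. A minor subtlety is that since $\alg$ must recompute everything from scratch on each invocation rather than consult a precomputed table, the brute-force stage is repeated on every call; but since that stage is already $\operatorname{poly}(b)$, this repetition is asymptotically free.
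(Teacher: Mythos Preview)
Your proposal is correct and follows essentially the same route as the paper: extract the one-step amplification lemma from Theorem~\ref{theo:main-reduction} (a $(\bm,\ell,\kappa)$-reduction with $\ell^{\kappa\bm}=b$ lifts to a $(b,\ell,6\kappa)$-reduction), apply it three times so that $b_3=O(\log\log\log b)$, and brute-force the base in $b^{o(1)}$ time.

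One small point worth cleaning up: your parametrization sets $\kappa_3:=\tau(b_3)$ while simultaneously defining $b_3$ through $b_2=\ell^{\kappa_3 b_3}$, which is circular. The paper sidesteps this by using the looser target $\tau(b)$ in place of $\tau(b_3)$ at every level (i.e.\ $\ell^{\tau(b)\cdot 6^{2} b_1}=b$, $\ell^{\tau(b)\cdot 6\,b_2}=b_1$, $\ell^{\tau(b)\,b_3}=b_2$); since $\tau$ is increasing, the $(b_3,\ell,\tau(b_3))$-reduction guaranteed by the nonuniform hypothesis is automatically a $(b_3,\ell,\tau(b))$-reduction, and the chain unrolls to a $(b,\ell,6^3\tau(b))$-reduction with no fixed-point solving required.
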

\begin{proofsketch}
	The construction in Theorem~\ref{theo:main-reduction} is recursive, it constructs the reduction $\psi_{b,\ell}$ from a much smaller reduction $\psi_{\bm,\ell}$, where $\bm \le \log b$. In the original construction, it takes $\logstar b$ recursions to make the problem sufficiently small so that a direct construction can be used. Here we only apply the reduction thrice. First let us abstract the following lemma from the proof of Theorem~\ref{theo:main-reduction}.
	
	\begin{lemma}[Implicit in Theorem~\ref{theo:main-reduction}]\label{lm:reduction-once}
		Letting $b,\ell,\bm,\kappa \in \mathbb{N}$ and supposing $\ell^{\kappa \cdot \bm} = b$ and there is a $(\bm,\ell,\kappa)$-reduction $(\varphi,V')$, the following holds:
		
		\begin{itemize}
			\item There is a $(b,\ell,6 \cdot \kappa)$-reduction $(\psi,V)$.
			
			\item Given $(\varphi,V')$, for all $x \in \{0,1\}^{b \cdot \ell}$, $\psi(x)$ can be computed in $\operatorname*{poly}(b \cdot \ell)$, and $V$ can be constructed in $O\left( \ell^{O(\kappa \cdot b)} \cdot \operatorname{poly}(b \cdot \ell) \right)$ time.
		\end{itemize}		
	\end{lemma}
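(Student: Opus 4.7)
The plan is to extract exactly one level of the recursion hiding inside the proof of Theorem~\ref{theo:main-reduction} and present it as a black-box lift from a $(\bm,\ell,\kappa)$-reduction to a $(b,\ell,6\kappa)$-reduction. First I would pick $b/\bm$ distinct primes $p_1,\dots,p_{b/\bm}$ lying in the interval $[b^2\ell+1,(b^2\ell)^2]$; existence is guaranteed by Lemma~\ref{lm:many-primes} once $b$ is large enough. The lower bound $b^2\ell$ is exactly what is needed so that a mod-$p_j$ reduction is lossless on the quantity $\varphi(x^{[j]})\cdot\varphi(y^{[j]})$, which by the $(\bm,\ell,\kappa)$-reduction hypothesis and $\ell^{\kappa\bm}=b$ is at most $\ell\cdot(\ell^{\kappa\bm})^2=\ell b^2<p_j$.

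Next, I would define the lifted map exactly as in the recursive step of Theorem~\ref{theo:main-reduction}. Partition $x\in\{0,1\}^{b\cdot\ell}$ into $\ell$ blocks $x^1,\dots,x^\ell$ of length $b$, partition each block further into $b/\bm$ micro-blocks $x^{i,1},\dots,x^{i,b/\bm}$ of length $\bm$, and for each $j\in[b/\bm]$ let $x^{[j]}\in\{0,1\}^{\bm\cdot\ell}$ denote the vertical slice obtained by concatenating $x^{1,j},\dots,x^{\ell,j}$. Then set
\[
\psi(x)_i := \CRR\Big(\big\{\varphi(x^{[j]})_i\big\}_{j=1}^{b/\bm};\{p_j\}_{j=1}^{b/\bm}\Big)\quad\text{for each } i\in[\ell].
\]
Because CRR is multiplicative modulo each $p_j$, an identical computation to the one carried out in Theorem~\ref{theo:main-reduction} gives
\[
\psi(x)\cdot\psi(y)\equiv \varphi(x^{[j]})\cdot\varphi(y^{[j]})\pmod{p_j},
\]
and the choice of prime size ensures this congruence identifies the integer value exactly. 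Thus if I define
\[
V:=\Big\{\,t\in\mathbb{Z}_{\ge 0}\;:\;0\le t<\ell^{6\kappa b}\text{ and } (t\bmod p_j)\in V'\text{ for every }j\in[b/\bm]\,\Big\},
\]
the hypothesis on $(\varphi,V')$ immediately yields $x\cdot y=0 \iff \bigwedge_j x^{[j]}\cdot y^{[j]}=0 \iff \psi(x)\cdot\psi(y)\in V$.

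The remaining piece is the coordinate bound and the complexity accounting, which is where I expect the only real arithmetic care to be needed. Each coordinate satisfies
\[
\psi(x)_i<\prod_{j=1}^{b/\bm}p_j\le (b^2\ell)^{2b/\bm}\le b^{6b/\bm}=\bigl(\ell^{\kappa\bm}\bigr)^{6b/\bm}=\ell^{6\kappa b},
\]
where the middle inequality uses $\ell\le b$ (which follows from $\ell^{\kappa\bm}=b$ and $\kappa\bm\ge 1$). This is the promised $(b,\ell,6\kappa)$-bound. Given oracle access to $(\varphi,V')$, each $\psi(x)$ costs $b/\bm$ evaluations of $\varphi$ plus $\ell$ CRR computations, totalling $\poly(b\cdot\ell)$. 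To build $V$ explicitly I enumerate integers $t\in[0,\ell^{6\kappa b})$ and for each test the $b/\bm$ conditions $(t\bmod p_j)\in V'$; by standard CRR counting the surviving set has size at most $|V'|^{b/\bm}\le\ell^{O(\kappa b)}$, and the enumeration can be arranged (by iterating over CRR-tuples of elements of $V'$ rather than all residues) to run in $\ell^{O(\kappa b)}\cdot\poly(b\cdot\ell)$ time.

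The main subtle point will be getting the constant $6$ correctly without any hidden slack: I must verify $b^2\ell\le b^3$, which is where $\ell\le b$ is used, and then $b^{6b/\bm}=\ell^{6\kappa b}$, which is where the defining equation $\ell^{\kappa\bm}=b$ is used. Everything else is a transcription of the recursive step of Theorem~\ref{theo:main-reduction} with ``$\psi_{\bm}$'' replaced by the abstract hypothesis $(\varphi,V')$.
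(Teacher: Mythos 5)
Your construction is precisely the single recursive step of Theorem~\ref{theo:main-reduction} with $\psi_{\bm}$ replaced by the abstract $(\bm,\ell,\kappa)$-reduction: the choice of $b/\bm$ primes in $[b^2\ell,(b^2\ell)^2]$, the slicing into micro-blocks $x^{[j]}$, the coordinate-wise $\CRR$ encoding, the congruence $\psi(x)\cdot\psi(y)\equiv\varphi(x^{[j]})\cdot\varphi(y^{[j]})\pmod{p_j}$, and the bound $\prod_j p_j\le (b^2\ell)^{2b/\bm}\le b^{6b/\bm}=\ell^{6\kappa b}$ (using $\ell\le b$ and $\ell^{\kappa\bm}=b$) all match the paper's argument. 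The one genuine flaw is your definition of $V$: you cap its elements at $t<\ell^{6\kappa b}$, which is the \emph{coordinate} bound, not a bound on the inner product. Since $\psi(x)\cdot\psi(y)$ is a sum of $\ell$ products of two coordinates, each coordinate possibly as large as $\prod_j p_j-1$, the inner product can be as large as roughly $\ell\bigl(\prod_j p_j\bigr)^2$, which can exceed $\ell^{6\kappa b}$ (for instance when the primes sit toward the upper end of $[b^2\ell,(b^2\ell)^2]$). So a pair with $x\cdot y=0$ can have $\psi(x)\cdot\psi(y)$ satisfying every congruence condition yet lying above your cutoff, breaking the forward direction of $x\cdot y=0\Leftrightarrow\psi(x)\cdot\psi(y)\in V$. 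The definition of a $(b,\ell,6\kappa)$-reduction imposes no upper bound on elements of $V$, so the repair is immediate: take the cap to be at least $\ell\bigl(\prod_j p_j\bigr)^2\le\ell^{12\kappa b+1}$ — the paper uses the range $[0,\ell^{6^{\logstar(b)}\cdot 2b+1})$ for exactly this reason — and the $\ell^{O(\kappa b)}\cdot\poly(b\cdot\ell)$ construction time is unaffected (your CRR-tuple enumeration just needs to additionally range over shifts by multiples of $\prod_j p_j$ within that interval, or one can enumerate all candidate values directly as the paper does). With that correction, your proof coincides with the paper's.
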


	Now, let $b,\ell \in \mathbb{N}$, we are going to construct our reduction as follows.
	
	Let $b_1$ be the number such that 
	\[
	\ell^{\tau(b) \cdot 6^2 \cdot b_1} = b,
	\]
	and similarly we set $b_2$ and $b_3$ so that
	\[
	\ell^{\tau(b) \cdot 6 \cdot b_2} = b_1 \quad \text{ and } \quad \ell^{\tau(b) \cdot b_3} = b_2.
	\]
	
	We can calculate from above that $b_3 \le \log\log\log b$. 
	
	From the assumption that there is a $\tau$-reduction, there is a $(b_3,\ell,\tau(b_3))$-reduction $(\varphi_{b_3,\ell},V_{b_3,\ell})$, which is also a $(b_3,\ell,\tau(b))$-reduction, as $\tau$ is increasing. Note that we can assume $\ell \le \log\log\log b$ and $\tau(b) \le \log \log \log b$ from assumption. Now we simply use a brute force algorithm to find $(\varphi_{b_3,\ell},V_{b_3,\ell})$. There are
	\[
	\ell^{\tau(b) \cdot b_3 \cdot \ell \cdot 2^{b_3 \cdot \ell} } = b^{o(1)}
	\]
	possible functions from $\{0,1\}^{b_3 \cdot \ell} \to \{0,\dotsc \ell^{\tau(b_3) \cdot b_3}-1\}^{\ell}$. Given such a function $\varphi$, one can check in $\poly(2^{b_3 \cdot \ell}) = b^{o(1)}$ time that whether one can construct a corresponding set $V$ to obtain our $(b_3,\ell,\tau(b))$-reduction.
	
	Applying Lemma~\ref{lm:reduction-once} thrice, one obtain a $(b,\ell,O(\tau(b)))$-reduction $(\psi,V)$. And since $\varphi_{b_3,\ell}$ can be found in $b^{o(1)}$ time, together with Lemma~\ref{lm:reduction-once}, we obtain a uniform-$\tau$-reduction family.
	
\end{proofsketch}

Finally, we give a direct corollary of Theorem~\ref{theo:nonuniform-to-uniform} that the existence of an $O(1)$-reduction family implies hardness of $\Hopcroft$, $\IntMaxIP$, $\ell_2$-Furthest Pair and Bichromatic $\ell_2$-Closest Pair in $\omega(1)$ dimensions.

\begin{cor}
	If there is an $O(1)$-reduction family, then for every $\eps > 0$, there exists a $c \ge 1$ such that $\Hopcroft$, $\IntMaxIP$, $\ell_2$-Furthest Pair and Bichromatic $\ell_2$-Closest Pair in $c$ dimensions with $O(\log n)$-bit entries require $n^{2 - \eps}$ time.
\end{cor}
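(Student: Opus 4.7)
The plan is to combine Theorem~\ref{theo:nonuniform-to-uniform} with the chain of reductions already developed in Section~\ref{sec:exact}. First, from the assumed (possibly nonuniform) $O(1)$-reduction family, Theorem~\ref{theo:nonuniform-to-uniform} yields a uniform-$O(1)$-reduction family $\{(\varphi_{b,\ell},V_{b,\ell})\}$ with $\tau(b) = O(1)$. The point is that $\tau(b) = O(1)$ replaces the $6^{\logstar b}$ factor throughout, so when we substitute this reduction into the argument of Lemma~\ref{lm:dim-reduction-OV} (which merely passes such a reduction through, then lifts to $\Hopcroft$ by appending one coordinate), we obtain: $\OV_{n,d}$ reduces in $O(n \cdot \ell^{O(d/\ell)} \cdot \poly(d))$ time to $\ell^{O(d/\ell)}$ instances of $\Hopcroft_{n,\ell+1}$ with vectors of entries of bit-length $O((d/\ell) \log \ell)$.

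Next, given $\eps > 0$, I choose $\ell = \ell(\eps)$ large enough (a constant depending only on $\eps$) and set $c := \ell + 1$. For an arbitrary constant $c_1 \geq 1$, taking $d = c_1 \log n$, the number of reduced instances becomes $\ell^{O(c_1 \log n / \ell)} = n^{O((c_1/\ell)\log \ell)}$; by picking $\ell$ so that $(c_1/\ell)\log \ell \leq \eps/2$ (this is the step that fixes $\ell$ in terms of $\eps$ and $c_1$; in fact $\ell$ depends only on $\eps$ once we relate it to the constant $c_1$ that OVC promises for $\eps/2$), this count is at most $n^{\eps/2}$. The bit-length $O((c_1/\ell)\log \ell \cdot \log n) = O(\log n)$ is preserved. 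So, if $\Hopcroft_{n,c}$ with $O(\log n)$-bit entries admitted an $n^{2-\eps}$-time algorithm, we could solve $\OV_{n,c_1 \log n}$ in $n^{\eps/2} \cdot n^{2-\eps} = n^{2-\eps/2}$ time for every constant $c_1$, contradicting OVC.

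To transfer the hardness to the other three problems, I simply chain the reductions from Section~\ref{sec:hardness-Int-Max-IP} and Section~\ref{sec:hardness-Geometry}. Namely, Theorem~\ref{theo:IntOV-to-Max-IP} reduces $\Hopcroft_{n,c}$ to $\IntMaxIP_{n,c^2}$ in linear time, Lemma~\ref{lm:Max-IP-to-furtherest-pair} reduces $\IntMaxIP_{n,c^2}$ to an $\ell_2$-Furthest Pair instance on $2n$ points in $c^2 + 2$ dimensions, and Lemma~\ref{lm:Max-IP-to-bichromatic-closest-pair} does the same for Bichromatic $\ell_2$-Closest Pair. Each step preserves the $O(\log n)$ bit-length (since $c$ is a constant) and only enlarges the dimension by a constant factor, so the new $c' = c^2 + 2$ is still a constant depending only on $\eps$.

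The only substantive step is the first one: correctly plugging the uniform-$O(1)$-reduction into the argument of Lemma~\ref{lm:dim-reduction-OV} and verifying that the bit-length bound $O((d/\ell)\log \ell)$ (with no $6^{\logstar d}$ factor) really follows and that it remains $O(\log n)$ after all downstream reductions. Everything else is a direct invocation of theorems already proved in the paper, so no genuinely new argument is needed beyond noting that $\tau(b) = O(1)$ makes the parameter counting tight enough to yield $\omega(1)$-dimensional hardness.
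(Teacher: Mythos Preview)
Your proposal is correct and follows essentially the same route as the paper's proof sketch: apply Theorem~\ref{theo:nonuniform-to-uniform} to obtain a uniform-$O(1)$-reduction family, rerun the argument of Lemma~\ref{lm:dim-reduction-OV} with the $6^{\logstar d}$ factor replaced by $O(1)$, and then repeat the parameter-counting of Theorem~\ref{theo:Hopcroft} to conclude hardness for $\Hopcroft$ in constant dimension, chaining to the other three problems via Theorem~\ref{theo:IntOV-to-Max-IP}, Lemma~\ref{lm:Max-IP-to-furtherest-pair}, and Lemma~\ref{lm:Max-IP-to-bichromatic-closest-pair}. The only place your exposition is slightly muddled is the quantifier order in step~4 (you first fix $\ell$ and then speak of ``arbitrary $c_1$''), but your parenthetical correctly notes that $c_1$ is the constant OVC promises for $\eps/2$ and that $\ell$ is then chosen depending on $\eps$ and $c_1$, so the logic is sound.
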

\begin{proofsketch}
	Note that since its hardness implies the harnesses of other three, we only need to consider $\Hopcroft$ here. 
	
	From Theorem~\ref{theo:nonuniform-to-uniform} and the assumption, there exists a uniform-$O(1)$-reduction. Proceeding similar as in Lemma~\ref{lm:dim-reduction-OV} with the uniform-$O(1)$-reduction, we obtain a better dimensionality self reduction from $\OV$ to $\Hopcroft$. Then exactly the same argument as in Theorem~\ref{theo:Hopcroft} with different parameters gives us the lower bound required.
\end{proofsketch}

%\lnote{I need to revise the above a bit.}
	\newcommand{\sign}{\mathrm{sgn}}
\newcommand{\WA}{\widetilde{A}}
\newcommand{\WB}{\widetilde{B}}

\newcommand{\peps}{P_\eps}
\newcommand{\hpeps}{\widehat{P}_\eps}
\newcommand{\hcof}{\hat{c}}
\newcommand{\tcof}{\tilde{c}}

\section{Hardness of Approximate $\pnMaxIP$ via Approximate Polynomial for $\OR$}
\label{app:quantum-obs}

We first show that making use of the $O(\sqrt{n})$-degree approximate polynomial for $\OR$~\cite{buhrman1999bounds,de2008note}, $\OV$ can be reduced to approximating $\pnMaxIP$.

\begin{theo}\label{theo:reduction}
	Letting $\eps \in (0,1)$, an $\OV_{n,d}$ instance with sets $A,B$ reduces to a $\pnMaxIP_{n,d_1}$ instance with sets $\WA$ and $\WB$, such that:
    
    \begin{itemize}
	\item $d_1 = \binom{d}{ \le O\left(\sqrt{d \log 1/\eps}\right)}^{3} \cdot 2^{O\left(\sqrt{d\log 1/\eps}\right)} \cdot \eps^{-1}$, in which the notation $\binom{n}{\le m}$ denotes $\sum_{i=0}^{m} \binom{n}{i}$. 
    
    \item There is an integer $T > \eps^{-1}$ such that if there is an $(a,b) \in A \times B$ such that $a \cdot b = 0$, then $\OPT(\WA,\WB) \ge T$.
    
    \item Otherwise, $ |\OPT(\WA,\WB)| \le T \cdot \eps$.
    
    \item Moreover, the reduction takes $n \cdot \poly(d_1)$ time.
    \end{itemize}
\end{theo}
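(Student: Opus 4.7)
\noindent The plan is to approximate $\OR$ on the coordinate-wise product by a low-degree polynomial, expand it as a multilinear sum of Boolean monomials, and then realize that expansion as a $\pm 1$-valued inner product via coefficient rounding and coordinate replication. Let $\peps$ be the standard degree-$D$ approximating polynomial for $\OR$ on $d$ bits from~\cite{buhrman1999bounds,de2008note}, with $D = O(\sqrt{d \log(1/\eps)})$, $\peps(0) \in [-\eps',\eps']$, and $\peps(x) \in [1-\eps',\,1+\eps']$ for $x \in \{0,1\}^d \setminus \{0\}$. I will take $\eps' := \eps^2$, which only changes $D$ by a constant factor but ensures the separation between the two cases is $1 - O(\eps^2)$ while the slack inside each case is $O(\eps^2)$; this is what yields a multiplicative gap of $1/\eps$ after discretization. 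Since $a \cdot b = 0$ iff $\OR(a_1 b_1, \ldots, a_d b_d) = 0$, the polynomial $Q(a,b) := 1 - \peps(a_1 b_1, \ldots, a_d b_d)$ lies in $[1 - \eps^2, 1 + \eps^2]$ when $a \cdot b = 0$ and in $[-\eps^2, \eps^2]$ otherwise.

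\noindent Using $(a_i b_i)^k = a_i b_i$ for Boolean inputs, expand $Q$ as a multilinear polynomial
\[
Q(a,b) \;=\; \sum_{S \subseteq [d],\, |S| \le D} c_S\, a_S\, b_S, \qquad a_S := \prod_{i \in S} a_i \in \{0,1\},
\]
where the number of monomials is $m := \binom{d}{\le D}$. Writing $\peps = q(z_1 + \cdots + z_d)$ for a degree-$D$ univariate Chebyshev-type polynomial $q$, standard bounds on $q$ together with the Stirling-number count of surjections $[k] \twoheadrightarrow S$ in the Boolean reduction of $(z_1 + \cdots + z_d)^k$ give $|c_S| \le L$ for some $L \le m \cdot 2^{O(D)}$ (noting that by symmetry $c_S$ depends only on $|S|$, which lets one control $L$ cleanly).

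\noindent To convert $Q$ into a $\pm 1$-inner product, substitute $a_S = (1 + a'_S)/2$ and $b_S = (1 + b'_S)/2$ with $a'_S, b'_S \in \{-1, +1\}$, obtaining
\[
4\, Q(a,b) \;=\; \sum_S c_S \;+\; \sum_S c_S\, a'_S \;+\; \sum_S c_S\, b'_S \;+\; \sum_S c_S\, a'_S\, b'_S.
\]
For precision $K := \Theta(m/\eps)$, set $\hcof_S := \lfloor K c_S \rceil \in \{-LK, \ldots, LK\}$, so $|K c_S - \hcof_S| \le 1$. Each of the four sums is realized as a $\pm 1$-inner product by coordinate replication: the bilinear term contributes, for each $S$, a block of $|\hcof_S|$ coordinates where Alice writes $a'_S$ and Bob writes $\sign(\hcof_S)\, b'_S$, giving contribution exactly $\hcof_S\, a'_S\, b'_S$. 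The constant term is absorbed by dummy coordinates pinned to $+1$ on both sides (with signs chosen to match $\sum_S \hcof_S$), and each linear term is realized by coordinates where one party writes $\sign(\hcof_S)\, a'_S$ (resp.\ $b'_S$), replicated $|\hcof_S|$ times, while the other writes $+1$. Concatenating these blocks produces $\WA_a, \WB_b \in \{-1,+1\}^{d_1}$ with $\WA_a \cdot \WB_b = 4K \cdot Q(a,b) \pm O(m)$, and the total dimension is $d_1 = O(m \cdot L \cdot K) = O\!\left(\binom{d}{\le D}^3 \cdot 2^{O(D)} / \eps\right)$, matching the claim. The reduction runs in $n \cdot \poly(d_1)$ time because the coefficients $c_S$ and the replicated blocks for each vector are computable in $\poly(d_1)$ time.

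\noindent Finally, pick $T$ to be any integer in the interval $\bigl[4K\eps + \Theta(m/\eps),\; 4K(1 - \eps^2) - \Theta(m)\bigr]$, which is nonempty and contained in $(\eps^{-1}, \infty)$ once $\eps < 1/2$ and $K$ is a sufficiently large constant multiple of $m/\eps$. In the orthogonal case the orthogonal pair gives $\WA_a \cdot \WB_b \ge 4K(1 - \eps^2) - O(m) \ge T$; in the non-orthogonal case every pair satisfies $|\WA_a \cdot \WB_b| \le 4K\eps^2 + O(m) \le T\eps$. The main obstacle is the coordinated choice of $\eps' = \eps^2$, $K$, and $T$: one must ensure that the polynomial-approximation error ($O(\eps^2)$), the integer-quantization error ($O(m)$ absolute, $O(1/K)$ relative), and the target multiplicative gap of $1/\eps$ are simultaneously compatible, and the factor $m^3$ in the dimension arises precisely from the product of the number of monomials ($m$), the per-coefficient bound ($L \le m \cdot 2^{O(D)}$), and the precision factor ($K = \Theta(m/\eps)$).
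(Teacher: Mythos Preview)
Your proof is correct and follows the same high-level strategy as the paper --- approximate $\OR$ by a low-degree polynomial, discretize, and encode as a $\pm 1$ inner product --- but the two implementations diverge in two places worth noting.

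\textbf{Coefficient bound.} The paper first expands $P_\eps$ in the \emph{Fourier} (parity) basis, where boundedness of $P_\eps$ on the cube immediately gives $|\langle\chi_S,P_\eps\rangle|\le 1$; only after discretizing by $2M/\eps$ does it convert to the monomial basis via $\chi_S(z)=\sum_{T\subseteq S}(-2)^{|T|}z_T$, yielding the clean bound $|\tcof_T|\le M^2\cdot 2^D\cdot 2/\eps$. You instead bound the monomial coefficients directly via Chebyshev coefficients and Stirling-number surjection counts. The target bound $L\le m\cdot 2^{O(D)}$ is true, but your stated justification is not complete: the univariate Chebyshev coefficients $q_k$ can be large after the affine rescaling to $[0,d]$, and the Stirling sums do not obviously collapse to $m\cdot 2^{O(D)}$ without further argument. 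The Fourier detour gives this bound for free and is the cleaner way to fill this step.

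\textbf{Encoding into $\{-1,1\}$.} The paper keeps $x_T,y_T\in\{0,1\}$, packs the signed integer coefficient $\tcof_T$ into Alice's side as a $\{-1,0,1\}^B$ ``thermometer'' vector $e_{\tcof_T}\cdot x_T$ against Bob's all-ones block $\mathbf{1}\cdot y_T$, and then lifts $\{-1,0,1\}\to\{-1,1\}^2$ via an \emph{asymmetric} pair of maps $\psi_x(0)=(-1,1)$, $\psi_y(0)=(1,-1)$ so that $\psi_x(u)\cdot\psi_y(v)=2uv$. This avoids your four-term expansion entirely. Your substitution $a_S=(1+a'_S)/2$ also works, but forces separate encodings of constant and two linear terms; in particular, your phrase ``dummy coordinates pinned to $+1$ on both sides'' cannot realize a negative constant --- one side must carry the sign, as you implicitly acknowledge for the other terms.

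Both routes give the same $d_1$ up to constants. The paper's slack choice $\eps\mapsto\eps/3$ is tighter than your $\eps'=\eps^2$, but both are absorbed in the $O(\cdot)$ of the degree bound.
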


We remark here that the above reduction fails to achieve a characterization: setting $\eps = 1/2$ and $d = c\log n$ for an arbitrary constant $c$, we have $d_1 = 2^{\widetilde{O}(\sqrt{\log n})}$, much larger than $\log n$. Another interesting difference between the above theorem and Lemma~\ref{lm:OV-to-MaxIP} (the reduction from $\OV$ to approximating $\MaxIP$) is that Lemma~\ref{lm:OV-to-MaxIP} reduces one $\OV$ instance to many $\MaxIP$ instances, while the above reduction only reduces it to one $\pnMaxIP$ instance.

\begin{proofof}{Theorem~\ref{theo:reduction}}$\newline$
	\medskip \noindent \textbf{Construction and Analysis of Polynomial $P_{\eps}(z)$.} By~\cite{buhrman1999bounds,de2008note}, there is a polynomial $P_{\eps} : \{0,1\}^d \to \R$ such that:
	
	\begin{itemize}
		\item $P_{\eps}$ is of degree $D = O\left( \sqrt{d \log 1/\eps} \right)$.
		\item For every $z \in \{0,1\}^{d}$, $P_{\eps}(z) \in [0,1]$.
		\item Given $z \in \{0,1\}^{d}$, if $\OR(z) = 0$, then $P_{\eps}(z) \ge 1 - \eps$, otherwise $P_\eps(z) \le \eps$.
		\item $P_{\eps}$ can be constructed in time polynomial in its description size.
	\end{itemize}

	Now, let us analyze $P_\eps$ further. For a set $S \subseteq [d]$, let $\chi_S : \{0,1\}^{d} \to \R$ be $\chi_S(z) := \prod_{i \in S} (-1)^{z_i}$. Then we can write $P_\eps$ as:
	\[
	P_{\eps} := \sum_{S \subseteq [d], |S| \le D} \chi_S \cdot \langle \chi_S,P_\eps \rangle,
	\]
	where $\langle \chi_S,P_\eps \rangle$ is the inner product of $\chi_S$ and $P_\eps$, defined as $\langle \chi_S,P_\eps \rangle := \Ex_{x \in \{0,1\}^{d}} \chi_S(x) \cdot P_\eps(x)$.
	
	Let $c_S = \langle \chi_S,P_\eps \rangle$, from the definition it is easy to see that $c_S \in [-1,1]$. 
	
	\medskip \noindent \textbf{Discretization of Polynomial $P_{\eps}$.} Note that $P_{\eps}(z)$ has real coefficients, we need to turn it into another polynomial with integer coefficients first. 
	
	Let $M := \binom{d}{\le D}$, consider the following polynomial $\hpeps$:
	\[
	\hpeps := \sum_{S \subseteq [d], |S| \le D} \lfloor c_S \cdot 2M/\eps  \rfloor \cdot \chi_S.
	\]
	
	We can see that $| \hpeps(z)/ (2M/\eps) - \peps(z) | \le \eps$ for every $z \in \{0,1\}^d$, and we let $\hcof_S := \lfloor c_S \cdot M \cdot 2/\eps  \rfloor$ for convenience.
	
	\medskip \noindent \textbf{Simplification of Polynomial $\hpeps$.} $\hpeps(z)$ is expressed over the basis $\chi_S$'s, we need to turn it into a polynomial over standard basis.
	
	For each $S \subseteq [d]$, consider $\chi_S$, it can also be written as:
	\[
	\chi_S(z) = \prod_{i \in S} (-1)^{z_i} := \prod_{i \in S} (1-2z_i) = \sum_{ T \subseteq S} (-2)^{|T|} z_T,
	\]
	where $z_T := \prod_{i \in T} z_i$. Plugging it into the expression of $\hpeps$, we have
	\[
	\hpeps(z) := \sum_{T \subseteq [d], |T| \le D} \left( \sum_{S \subseteq [d], |S| \le D, T \subseteq S} \hcof_S \right) \cdot (-2)^{|T|} z_T.
	\]
	
	Set
	\[
	\tcof_T := \left( \sum_{S \subseteq [d], |S| \le D, T \subseteq S} \hcof_S \right) \cdot (-2)^{|T|},
	\]
	the above simplifies to
	\[
	\hpeps(z) := \sum_{T \subseteq [d], |T| \le D} \tcof_T \cdot z_T.
	\]
	
	\medskip \noindent \textbf{Properties of Polynomial $\hpeps$.}
	Let us summarize some properties of $\hpeps$ for now. First we need a bound on $|\tcof_T|$, we can see $|\hcof_S| \le M \cdot 2/\eps$, and by a simple calculation we have
	\[
	|\tcof_T| \le M^2 \cdot 2^{D} \cdot 2/\eps.
	\]
	
	Let $B = M^2 \cdot 2^{D} \cdot 2/\eps$ for convenience. For $x,y \in \{0,1\}^d$, consider $\hpeps(x,y) := \hpeps(x_1y_1,x_2y_2,\dotsc,x_dy_d)$ (that is, plugging in $z_i = x_iy_i$), we have
	\[
	\hpeps(x,y) := \sum_{T \subseteq [d], |T| \le D} \tcof_T \cdot x_T \cdot y_T,
	\]
	where $x_T := \prod_{i \in T} x_i$ and $y_T$ is defined similarly. Moreover, we have
	
	\begin{itemize}
		\item If $x \cdot y = 0$, then $\hpeps(x,y) \ge (2M /\eps) \cdot (1-2\eps) $.
        \item If $x \cdot y \ne 0$, then $|\hpeps(x,y)| \le (2M /\eps) \cdot 2\eps$.
	\end{itemize}
	
	\medskip \noindent \textbf{The Reduction.}	
	Now, let us construct the reduction, we begin with some notations. For two vectors $a,b$, we use $a \circ b$ to denote their concatenation. For a vector $a$ and a real $x$, we use $a \cdot x$ to denote the vector resulting from multiplying each coordinate of $a$ by $x$.  Let $\sign(x)$ be the sign function that outputs $1$ when $x > 0$, $-1$ when $x < 0$, and $0$ when $x = 0$. 
	 For $x \in \{-B,-B+1,\dotsc,B\}$, we use $e_x \in \{-1,0,1\}^{B}$ to denote the vector whose first $|x|$ elements are $\sign(x)$ and the rest are zeros. We also use $\mathbf{1}$ to denote the all-$1$ vector with length $B$. 
	
	Let $T_1,T_2,\dotsc,T_{M}$ be an enumeration of all subsets $T \subseteq [d]$ such that $|T| \le D$, we define
	\[
	\varphi_x(x) := \circ_{i=1}^{M} (e_{\tcof_{T_i}} \cdot x_{T_i}) \text{ and } 
	\varphi_y(y) := \circ_{i=1}^{M} (\mathbf{1} \cdot y_{T_i}).
	\]
	
	And we have
	\[
	\varphi_x(x) \cdot \varphi_y(y) = \sum_{i=1}^{M} (e_{\tcof_{T_i}} \cdot \mathbf{1}) \cdot (x_{T_i} \cdot y_{T_i}) = \sum_{i=1}^{M} \tcof_{T_i} \cdot x_{T_i} \cdot y_{T_i} = \hpeps(x,y).
	\]
	
	To move from $\{-1,0,1\}$ to $\{-1,1\}$, we use the following carefully designed reductions $\psi_x,\psi_y : \{-1,0,1\} \to \{-1,1\}^{2}$, such that 
	\[
	\psi_x(-1) = \psi_y(-1) = (-1,-1),\quad \psi_x(0) = (-1,1), \quad \psi_y(0) := (1,-1),~~\text{ and }~~\psi_x(1) = \psi_y(1) = (1,1).
	\]
	
	It is easy to check that for $x,y \in \{-1,0,1\}$, we have $\psi_x(x) \cdot \psi_y(y) = 2 \cdot (x \cdot y)$.
	
	Hence, composing the above two reductions, we get our desired reductions $\phi_x = \psi_x^{\otimes (B \cdot M)} \circ \varphi_x$ and $\phi_y = \psi_y^{\otimes (B \cdot M)} \circ \varphi_y$ such that for $x,y \in \{0,1\}^d$, $\phi_x(x),\phi_y(y) \in \{-1,1\}^{2B \cdot M}$ and $\phi_x(x) \cdot \phi_y(y) = 2 \cdot \hpeps(x,y)$.
	
	Finally, given an $\OV_{n,d}$ instance with two sets $A$ and $B$, we construct two sets $\WA$ and $\WB$, such that $\WA$ consists of all $\phi_x(x)$'s for $x \in A$, and $\WB$ consists of all $\phi_y(y)$'s for $y \in B$.
	
	Then we can see $\WA$ and $\WB$ consist of $n$ vectors from $\{-1,1\}^{d_1}$, where 
	\[
	d_1 = 2B \cdot M = M^3 \cdot 2^{D} \cdot 2/\eps = \binom{d}{ \le O\left(\sqrt{d \log 1/\eps}\right)}^{3} \cdot 2^{O\left(\sqrt{d\log 1/\eps}\right)} \cdot \eps^{-1}
	\] as stated.
	
	It is not hard to see the above reduction takes $n \cdot \poly(d_1)$ time. Moreover, if there is a $(x,y) \in A \times B $ such that $x \cdot y = 0$, then $\OPT(\WA,\WB) \ge (4M /\eps) \cdot (1-2\eps)$, otherwise, $\OPT(\WA,\WB) \le (4M/\eps) \cdot 2\eps$. Setting $\eps$ above to be $1/3$ times the $\eps$ in the statement finishes the proof.
\end{proofof}

With Theorem~\ref{theo:reduction}, we are ready to prove our hardness results on $\pnMaxIP$.

\begin{theo}\label{theo:quantum-based-lowb}
	Assume SETH (or OVC). Letting $\alpha : \mathbb{N} \to \R$ be any function of $n$ such that $\alpha(n) = n^{o(1)}$, there is another function $\beta$ satisfying $\beta(n) = n^{o(1)}$ and an integer $T > \alpha$ ($\beta$ and $T$ depend on $\alpha$), such that there is no $n^{2-\Omega(1)}$-time algorithm for $\pnMaxIP_{n,\beta(n)}$ distinguishing the following two cases:
    \begin{itemize}
    	\item $\OPT(A,B) \ge T$ ($A$ and $B$ are the sets in the $\pnMaxIP$ instance).
        \item $|\OPT(A,B)| \le T / \alpha(n)$.
    \end{itemize}
\end{theo}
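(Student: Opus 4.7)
The plan is to reduce $\OV_{n,c\log n}$ to the promise version of $\pnMaxIP_{n,\beta(n)}$ in the statement by directly applying Theorem~\ref{theo:reduction}, with $\eps$ chosen as a suitable function of $\alpha$. Specifically, given $\alpha(n) = n^{o(1)}$, I set $\eps := 1/(2\alpha(n))$. Then the ``NO'' guarantee $|\OPT(\WA,\WB)| \le T\cdot \eps$ from Theorem~\ref{theo:reduction} implies $|\OPT(\WA,\WB)| \le T/(2\alpha(n)) \le T/\alpha(n)$, while the ``YES'' guarantee is $\OPT(\WA,\WB) \ge T$ with $T > \eps^{-1} = 2\alpha(n) > \alpha(n)$, matching the statement.

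Fix an arbitrary constant $c$ and consider an $\OV_{n,c\log n}$ instance. Applying Theorem~\ref{theo:reduction} with $d = c\log n$ and $\eps = 1/(2\alpha(n))$ produces a $\pnMaxIP_{n,d_1}$ instance with
\[
d_1 = \binom{c\log n}{\le O\!\left(\sqrt{c\log n \cdot \log \alpha(n)}\right)}^{3} \cdot 2^{O\left(\sqrt{c\log n \cdot \log \alpha(n)}\right)} \cdot O(\alpha(n)).
\]
Since $\alpha(n) = n^{o(1)}$, we have $\log \alpha(n) = o(\log n)$, and therefore $\sqrt{c\log n \cdot \log \alpha(n)} = o(\log n)$. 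Consequently the binomial factor, the $2^{O(\cdot)}$ factor, and the $\alpha(n)$ factor are all $n^{o(1)}$, so $d_1 = n^{o(1)}$. I then let $\beta(n)$ be this bound $d_1$ (it is a function only of $n$, determined by $\alpha$ and $c$; one may absorb the dependence on $c$ by taking the worst $c$ used for each $n$, but below we only need to argue $c$ by $c$).

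For the contradiction: suppose there exists $\delta > 0$ and an $n^{2-\delta}$-time algorithm $\alg$ distinguishing $\OPT \ge T$ from $|\OPT| \le T/\alpha(n)$ on $\pnMaxIP_{n,\beta(n)}$. The reduction of Theorem~\ref{theo:reduction} runs in $n\cdot \poly(d_1) = n^{1+o(1)}$ time, so composing it with $\alg$ yields an algorithm for $\OV_{n,c\log n}$ running in time $n^{2-\delta} + n^{1+o(1)} = n^{2-\delta+o(1)}$. Since $c$ was arbitrary, this refutes OVC (and hence SETH), giving the desired contradiction.

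The only technically nontrivial step is verifying $d_1 = n^{o(1)}$; the remaining bookkeeping (choice of $\eps$, alignment of the gap $T$ vs.\ $T/\alpha$, composing the reduction with $\alg$) is routine. No new ideas beyond Theorem~\ref{theo:reduction} and the OVC reduction template from Corollary~\ref{cor:lowb-Max-IP-M} are needed.
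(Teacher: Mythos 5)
Your overall route is the same as the paper's: apply Theorem~\ref{theo:reduction} with $\eps \approx 1/\alpha(n)$, observe that $\log \alpha(n) = o(\log n)$ forces the reduced dimension $d_1$ to be $n^{o(1)}$, and compose the reduction with the hypothetical distinguisher to refute OVC. The choice of $\eps$, the alignment of the gap ($T$ versus $T/\alpha$), and the running-time bookkeeping are all fine.

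The one genuine gap is the definition of $\beta$. The theorem demands a \emph{single} function $\beta$, not depending on the OV parameter $c$, such that one hypothetical algorithm for $\pnMaxIP_{n,\beta(n)}$ yields truly subquadratic algorithms for $\OV_{n,c\log n}$ for \emph{every} constant $c$ simultaneously; only then is OVC contradicted, since OVC merely asserts that for each savings $\eps$ \emph{some} $c$ is hard. Your ``let $\beta(n)$ be this bound $d_1$'' makes $\beta$ depend on $c$, and the parenthetical repair of ``taking the worst $c$ used for each $n$'' is not well defined: the supremum of $d_1(c,n)$ over all constants $c$ is not $n^{o(1)}$, while arguing ``$c$ by $c$'' with a $c$-dependent $\beta$ proves a weaker per-$c$ statement that cannot be concluded from OVC at all (OVC gives no hardness for any single fixed $c$), and a distinguisher for one fixed $\beta_{c_0}$ cannot absorb the reduced instances coming from larger $c$. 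The paper resolves exactly this point: it sets $k = \log\alpha(n)/\log n = o(1)$, bounds $d_1 \le n^{O(\log(c/k)\sqrt{ck})}$, and defines $\beta(n) := n^{k^{1/3}}$, which depends only on $\alpha$; since $\sqrt{k}\log(1/k) = o(k^{1/3})$, for every fixed $c$ one has $d_1 \le \beta(n)$ for all sufficiently large $n$, so the single hypothesized algorithm (after trivial padding of the $\pm 1$ vectors by coordinate pairs contributing zero) handles all the reduced instances. A careful diagonalization over a slowly growing cutoff $c(n)$ would also work, but it has to be spelled out; as written, this step of your proof does not go through.
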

\begin{proof}
	
	Letting $\alpha = n^{o(1)}$ and $ k = \log \alpha / \log n$, we have $k = o(1)$. Setting $d = c \log n$ where $c$ is an arbitrary constant and $\eps = \alpha^{-1}$ in Theorem~\ref{theo:reduction}, we have that an $\OV_{c \log n}$ reduces to a certain $\alpha(n)$-approximation to a $\pnMaxIP_{n,d_1}$ instance with sets $A$ and $B$, where 
	$$
	d_1 = \binom{c\log n}{\le O(\sqrt{ck} \log n)}^3 \cdot 2^{O(\sqrt{ck} \log n)} \le \left( \frac{\sqrt{c}}{\sqrt{k}} \right)^{O(\sqrt{ck} \log n)} \cdot 2^{O(\sqrt{ck} \log n)} = n^{O( \log (c/k) \cdot \sqrt{ck})}.
	$$
	Now set $\beta = n^{k^{1/3}}$ and $T$ be the integer specified by Theorem~\ref{theo:reduction}, since $k = o(1)$, $\beta = n^{o(1)}$. Suppose otherwise there is an $n^{2-\Omega(1)}$-time algorithm for distinguishing whether $\OPT(A,B) \ge T$ or $|\OPT(A,B)| \le T / \alpha(n)$. Then for any constant $c$, $O(\log(c/k) \sqrt{ck}) \le k^{1/3}$ for sufficiently large $n$, which means $d_1 \le \beta(n)$ for a sufficiently large $n$, and there is an $n^{2-\Omega(1)}$-time algorithm for $\OV_{c \log n}$ by Theorem~\ref{theo:reduction}, contradiction to OVC.
\end{proof}

\begin{comment}
We begin by the construction of an embedding $\varphi_x,\varphi_y : \{-B,-B+1,\dotsc, B\} \to \{-1,0,1\}^{B^2}$ such that for any $x,y \in \{-B,-B+1,\dotsc, B\}$, $\varphi_x(x) \cdot \varphi_y(y) = x \cdot y$.

For convenience, in the following we use $z_{(i,j)}$ to denote the $(i-1) \cdot B + j$-th coordinate of $z$. Also, we use $\sign(x)$ to denote the function that outputs $1$ when $x > 0$, $-1$ when $x < 0$, and $0$ when $x = 0$. Then we define $\varphi_x(x)_{(i,j)}$ as $\sign(x)$ when $i \le |x|$, and $0$ otherwise, similarly, we define $\varphi_y(y)_{(i,j)}$ as $\sign(y)$ when $j \le |y|$, and $0$ otherwise. We have

\[
\varphi_x(x) \cdot \varphi_y(y) = \sum_{i=1}^B\sum_{j=1}^B \varphi_x(x)_{(i,j)} \cdot \varphi_y(y)_{(i,j)} = \sum_{i=1}^{|x|}\sum_{j=1}^{|y|} \sign(x) \cdot \sign(y) = x \cdot y.
\]

In order move from $\{-1,0,1\}$	to $\{-1,1\}$, we are to use the following embedding $\psi : \{-1,0,1\} \to \{-1,1\}^{2}$, such that 
\[
\psi(-1) := (-1,-1)\quad \psi(0) := (-1,1)\quad \psi(1) := (1,1).
\]

It is easy to see that for $x,y \in \{-1,0,1\}$, we have $\psi(x) \cdot \psi(y) = 2 \cdot (x \cdot y)$.

Hence, composing the above two embeddings, we get our desired embedding $\phi_x = \varphi_x \circ \psi$ and $\phi_y = \varphi_y \circ \psi$, such that $\phi_x $ and $\phi_y$ are embedding from $ \{-B,-B+1,\dotsc, B\} \to \{-1,1\}^{2 B^2}$, and $\phi_x(x) \cdot \phi_y(y) = 2 \cdot (x \cdot y)$.
\end{comment}	
	\section{A Proof of Lemma~\ref{lm:OV-to-MaxIP}}
\label{app:OV-MaxIP-reduction}

Finally, we present a proof of Lemma~\ref{lm:OV-to-MaxIP}, which is implicit in~\cite{Rubinstein2017closest}.

We need the following efficient $\MA$ protocol for Set-Disjointness from~\cite{Rubinstein2017closest}, which is also used in~\cite{karthik2017parameterized}.\footnote{The protocol in~\cite{karthik2017parameterized} also works for the $k$-party number-in-hand model.}

\begin{lemma}[Theorem~3.2 of~\cite{Rubinstein2017closest}]\label{lm:MA-protocol-prev}
	For every $\alpha$ and $m$, there is an $(m/\alpha,\log_2 m,\poly(\alpha),1/2)$-efficient $\MA$ protocol for $\DISJ_{m}$.
\end{lemma}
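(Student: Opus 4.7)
The plan is to implement a sumcheck-style $\MA$ protocol in the spirit of Aaronson--Wigderson, with Reed--Muller codes replaced by algebraic-geometry (AG) codes over a constant-size alphabet, in order to eliminate the usual $\log m$ slack.

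First, I would set up the base protocol. Interpret $X,Y \in \{0,1\}^m$ as functions on a grid $[k] \times [m/k]$ for a parameter $k$ to be chosen later as a function of $\alpha$, and let $\tilde X, \tilde Y : \mathbb{F}^2 \to \mathbb{F}$ be the bivariate low-degree extensions over a finite field $\mathbb{F}$ (degree $\leq k-1$ in the first coordinate, $\leq m/k - 1$ in the second). Define the univariate polynomial
\[
R(x) \;:=\; \sum_{j \in [m/k]} \tilde X(x,j) \cdot \tilde Y(x,j),
\]
which has degree at most $2(k-1)$ and satisfies $\sum_{i \in [k]} R(i) = X \cdot Y$. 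Merlin sends Alice a presumed polynomial $\hat R$ (coefficients only, $O(k \log|\mathbb{F}|)$ bits); Alice checks $\sum_{i \in [k]} \hat R(i) = 0$ locally. Alice and Bob then draw a shared uniform $r \in \mathbb{F}$ using $O(\log|\mathbb{F}|)$ coins, Bob sends Alice the $m/k$ values $\tilde Y(r,1),\ldots,\tilde Y(r,m/k)$, and Alice verifies $\hat R(r) = \sum_j \tilde X(r,j) \tilde Y(r,j)$. Schwartz--Zippel yields soundness $2(k-1)/|\mathbb{F}|$, so $|\mathbb{F}| \geq 4k$ suffices for error at most $1/2$.

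Next, I would examine the parameter trade-off. Merlin's advice costs $\Theta(k \log|\mathbb{F}|)$ bits, Bob's message costs $\Theta((m/k)\log|\mathbb{F}|)$ bits, and the random coins cost $\Theta(\log|\mathbb{F}|)$ bits. Picking $k = \Theta(m/\alpha)$ makes the advice $\approx (m/\alpha)\log|\mathbb{F}|$ and the message $\approx \alpha\log|\mathbb{F}|$. But because $|\mathbb{F}|$ has to be polynomial in $m$ (to accommodate $|\mathbb{F}|\geq 4k$ and to host the evaluation domains $[k]$ and $[m/k]$), one inevitably has $\log|\mathbb{F}| = \Theta(\log m)$, leaving an unwanted $\log m$ factor in both the advice and the message and yielding only the Aaronson--Wigderson-style bound $(\widetilde O(m/\alpha), O(\log m), \widetilde O(\alpha), 1/2)$.

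The main obstacle, and where AG codes enter, is stripping this $\log m$ overhead. Replace the Reed--Muller evaluation with the evaluation map of an AG code associated with a smooth algebraic curve over a \emph{constant-size} alphabet $\mathbb{F}_{q^2}$. Such codes simultaneously exhibit (i) rate and relative minimum distance both bounded away from $0$, and (ii) a Schur (coordinate-wise) product structure in which the product of two codewords in the degree-$D$ AG code lies in the degree-$2D$ AG code---the precise analogue of ``the product of two degree-$(k-1)$ polynomials has degree $\leq 2(k-1)$''. Property (ii) lets Merlin commit to $\hat R$ as an element of the product code described by $O(k)$ symbols over $\mathbb{F}_{q^2}$ (hence $O(k)$ bits), and lets Bob send his row evaluations at the random point as $m/k$ constant-size symbols (hence $O(m/k)$ bits); property (i), via Riemann--Roch, provides the $\Omega(n)$ distance for the product code that replaces Schwartz--Zippel and gives constant soundness error, boostable to $1/2$ by a constant number of repetitions folded back into the advice. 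The random-coin cost remains $O(\log n) = O(\log m)$ for sampling a point on the curve. With $k = \Theta(m/\alpha)$, this produces advice of length $O(m/\alpha)$, Bob-to-Alice message $O(\alpha) = \poly(\alpha)$, coin length $O(\log_2 m)$, and soundness $\leq 1/2$---exactly the claimed parameters. The hard part in making this rigorous is selecting a concrete explicit AG-code family (e.g., codes from Garcia--Stichtenoth towers) whose rate, distance, Schur-product structure, and efficient encoding all survive the doubling of the degree parameter required by the protocol.
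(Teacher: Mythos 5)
This lemma is not proved in the paper at all: it is imported verbatim as Theorem~3.2 of~\cite{Rubinstein2017closest}, and your sketch follows the same route as that cited proof --- the Aaronson--Wigderson sumcheck template with the Reed--Muller code replaced by an algebraic-geometry code of constant rate and constant relative distance whose Schur (coordinate-wise) product still lies in a code of good distance, so that Merlin's commitment and Bob's message cost $O(1)$ bits per symbol instead of $O(\log m)$. At that level your outline is faithful to the source, and the parameter bookkeeping (block count $\approx \alpha$ up to logarithmic factors, one random position costing at most $\log_2 m + O(1)$ coins, distance of the product code replacing Schwartz--Zippel) is the right one.

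There is, however, one step you gloss over that would fail exactly as written: how Alice decides \emph{disjointness} from Merlin's commitment once the alphabet is small. In your base protocol Alice's local test is $\sum_{i\in[k]}\hat R(i)=0$, which is meaningful only because there the field can be taken of characteristic larger than $m$, so no wraparound occurs. Over a constant-size field $\mathbb{F}_{q^2}$ the analogous global zero test is neither complete nor sound for $\DISJ_m$: a genuinely intersecting pair can have inner product divisible by the characteristic, and a nonzero sum can vanish. The protocol you are reconstructing instead has Merlin send the claimed blockwise sum of Schur products $\mu=\sum_{i} E(X_i)\star E(Y_i)$ and has Alice check that $\mu$ vanishes at \emph{every systematic coordinate}; since all the summands are in $\{0,1\}$, this is equivalent to disjointness \emph{provided the characteristic exceeds the number of blocks}, which forces $q$ to grow polynomially with $\alpha$ rather than stay constant (the paper itself alludes to this: the proof sketch of Lemma~\ref{lm:previous} notes that the original protocol requires $q \ge T$, and drops that requirement only for the modulo-$q$ variant $\mathsf{IP}^q_n$, where wraparound is harmless). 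This is precisely why the lemma's message length is stated as $\poly(\alpha)$ rather than $O(\alpha)$, and why the advice length $m/\alpha$ is obtained after balancing the $\log q$ factor into the choice of block count. With that correction (and noting that any soundness amplification is done by repeating the random position check, not by enlarging Merlin's advice), your proposal matches the cited argument.
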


We want to reduce the error probability while keeping the number of total random coins relatively low. To achieves this, we can use an expander graph (Theorem~\ref{theo:derand}) to prove the following theorem.

\begin{lemma}\label{lm:MA-protocol}
	For every $\alpha$, $m$ and $\eps < 1/2$, there is an $(m/\alpha,\log_2 m + O(\log \eps^{-1}),\poly(\alpha) \cdot \log \eps^{-1} ,\eps)$-efficient $\MA$ protocol for $\DISJ_{m}$.
\end{lemma}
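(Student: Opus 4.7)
The plan is to derandomize the standard parallel-repetition amplification of the base protocol from Lemma~\ref{lm:MA-protocol-prev} by using the pseudorandom generator of Theorem~\ref{theo:derand}. A naive $k$-fold independent repetition would drop the soundness error from $1/2$ to $2^{-k}$, giving error $\eps$ for $k = O(\log \eps^{-1})$ and communication $\poly(\alpha)\cdot\log\eps^{-1}$, but it would cost $\log_2 m \cdot O(\log \eps^{-1})$ random coins, which blows the $\log_2 m + O(\log \eps^{-1})$ budget. The expander-based sampler of Theorem~\ref{theo:derand} is designed exactly to collapse this random-coin cost to $\log_2 m + O(\log \eps^{-1})$ bits while still producing $O(\log\eps^{-1})$ samples from $[m]$ with the desired hitting guarantee.

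Concretely, I will keep the same Merlin advice of length $m/\alpha$ as in Lemma~\ref{lm:MA-protocol-prev}. Alice and Bob will jointly toss $\log_2 m + c_1 \log \eps^{-1}$ coins to produce $w$, compute the list $\mathcal{F}(w) = (a_1, \ldots, a_{c_1 \log \eps^{-1}}) \in [m]^{c_1 \log \eps^{-1}}$ from Theorem~\ref{theo:derand}, and then simulate the base protocol $c_1 \log \eps^{-1}$ times in parallel, using $a_i$ as the random string of the $i$-th simulated copy. Bob sends the $c_1 \log \eps^{-1}$ base-protocol messages (a total of $\poly(\alpha) \cdot \log \eps^{-1}$ bits), and Alice accepts iff she would accept in every simulated copy.

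For correctness, I will analyze the two cases separately. In the YES case with the correct advice, the base protocol accepts on every random string, so Alice accepts on every $a_i$ with probability $1$. In the NO case (or with any wrong advice), the set $R \subseteq [m]$ of random strings on which the base protocol \emph{rejects} has size at least $m/2$ by the soundness of Lemma~\ref{lm:MA-protocol-prev}; applying Theorem~\ref{theo:derand} with $B := R$ gives that with probability at least $1 - \eps$ some $a_i$ lies in $R$, in which case Alice rejects. Thus the soundness drops to $\eps$ as required, while the computational efficiency is inherited from the computational efficiency of $\mathcal{F}$ and of the base protocol.

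I do not expect any real obstacle here: the argument is a routine application of expander samplers to soundness amplification, and the parameters add up immediately (advice $m/\alpha$, coins $\log_2 m + O(\log\eps^{-1})$, communication $\poly(\alpha)\cdot\log\eps^{-1}$, error $\eps$). The only thing to double-check is that the base protocol truly has perfect completeness (so the YES side needs no amplification argument at all), which is exactly the ``Alice always accepts'' clause in the definition of an $(m,r,\ell,s)$-efficient $\MA$ protocol and is indeed the form given by Lemma~\ref{lm:MA-protocol-prev}.
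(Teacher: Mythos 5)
Your proposal is correct and follows essentially the same route as the paper's proof: keep Merlin's advice from Lemma~\ref{lm:MA-protocol-prev}, use the expander sampler of Theorem~\ref{theo:derand} to generate $O(\log\eps^{-1})$ random strings from $\log_2 m + O(\log\eps^{-1})$ coins, have Bob send all corresponding base-protocol messages, and have Alice accept iff every copy accepts, with soundness argued by applying the sampler's hitting guarantee to the rejecting set of density at least $1/2$. The completeness observation (perfect completeness of the base protocol under the correct advice) is also exactly how the paper handles the YES case.
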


\begin{proof}
	Let $c_1$ and $\mathcal{F} : \{0,1\}^{\log m + c_1 \cdot \log \eps^{-1}} \to [m]^{c_1 \cdot \log \eps^{-1}}$ be the corresponding constant and function as in Theorem~\ref{theo:derand}, and let $\Pi$ denote the $(m/\alpha,\log_2 m,\poly(\alpha),1/2)$-efficient $\MA$ protocol for $\DISJ_{m}$ in Lemma~\ref{lm:MA-protocol-prev}. Set $q = c_1 \cdot \log \eps^{-1}$ and our new protocol $\Pi_{\textsf{new}}$ works as follows:
	
	\begin{itemize}
		\item Merlin still sends the same advice to Alice as in $\Pi$.
		
		\item Alice and Bob jointly toss $r = \log m + q$ coins to get a string $w \in \{0,1\}^{r}$. Then we let $w_1,w_2,\dotsc,w_{q}$ be the sequence corresponding to $\mathcal{F}(w)$, each of them can be interpreted as $\log m$ bits.
		
		\item Bob sends Alice $q$ messages, the $i$-th message $m_i$ corresponds to Bob's message in $\Pi$ when the random bits is $w_i$.
		
		\item After that, Alice decides whether to accept or not as follows:
		
		\begin{itemize}
			\item If for every $i \in [q]$, Alice would accept Bob's message $m_i$ with random bits $w_i$ in $\Pi$, then Alice accepts.
			
			\item Otherwise, Alice rejects.
		\end{itemize}
	\end{itemize}
	
	It is easy to verify that the advice length, message length and number of random coins satisfy our requirements. 
	
	For the error probability, note that when these two sets are disjoint, the same advice in $\Pi$ leads to acceptance of Alice. Otherwise, suppose the advice from Merlin is either wrong or these two sets are intersecting, then half of the random bits in $\{0,1\}^{\log m}$ leads to the rejection of Alice in $\Pi$. Hence, from Theorem~\ref{theo:derand}, with probability at least $1 - \eps$, at least one of the random bits $w_i$'s would lead to the rejection of Alice, which completes the proof.
\end{proof}

Finally we are going to prove Lemma~\ref{lm:OV-to-MaxIP}, we recap it here for convenience.

\begin{reminder}{Lemma~\ref{lm:OV-to-MaxIP}} 
	There is a universal constant $c_1$ such that, for every integer $c$, reals $\varepsilon \in (0,1]$ and $\tau \ge 2$, $\OV_{n,c \log n}$ can be reduced to $n^{\varepsilon}$ $\MaxIP_{n,d}$ instances $(A_i,B_i)$ for $i \in [n^{\varepsilon}]$, such that:
	
	\begin{itemize}
		\item $d = \tau^{\poly(c /\varepsilon) } \cdot \log n$.
		\item Letting $T = c \log n \cdot \tau^{c_1}$, if there is $a \in A$ and $b \in B$ such that $a \cdot b = 0$, then there exists an $i$ such that $\OPT(A_i,B_i) \ge T$.
		\item Otherwise, for all $i$ we must have $\OPT(A_i,B_i) \le T/\tau$.
	\end{itemize}
\end{reminder}
\begin{proof}
	The reduction follows exactly the same as in~\cite{ARW17-proceedings}, we recap here for completeness.
	
	Set $\alpha = c / \eps$, $m = c \cdot \log n$ and $\eps = 1/\tau$, and let $\Pi$ be the $(m/\alpha,\log_2 m + O(\log \eps^{-1}),\poly(\alpha) \cdot \log \eps^{-1} ,\eps)$-efficient $\MA$ protocol for Set-Disjointness as in Lemma~\ref{lm:MA-protocol}.
	
	Now, we first enumerate all of $2^{m/\alpha} = 2^{\eps \cdot \log n} = n^{\varepsilon}$ possible advice strings, and create an $\MaxIP$ instance for each of the advice strings.
	
	For a fix advice $\psi \in \{0,1\}^{\eps \cdot \log n}$, we create an $\MaxIP$ instance with sets $A_\psi$ and $B_\psi$ as follows. We use $a \circ b$ to denote the concatenation of the strings $a$ and $b$.
	
	Let $r = \log_2 m + c_1 \cdot \log \eps^{-1}$, where $c_1$ is the constant hidden in the big $O$ notation in Lemma~\ref{lm:MA-protocol}, and $\ell = \poly(\alpha) \cdot \log \eps^{-1}$. Let $m_1,m_2,\dotsc,m_{2^\ell}$ be an enumeration of all strings in $\{0,1\}^{\ell}$.
	
	\begin{itemize}
		\item For each $a \in A$, and for each string $w \in \{0,1\}^{r}$, we create a vector $a^w \in \{0,1\}^{2^\ell}$, such that $a^{w}_i$ indicates that given advice $\psi$ and randomness $w$, whether Alice accepts message $m_i$ or not ($1$ for acceptance, $0$ for rejection). Let the concatenation of all these $a^w$'s be $a_\psi$. Then $A_\psi$ is the set of all these $a_\psi$'s for $a \in A$.
		
		\item For each $b \in B$, and for each string $w \in \{0,1\}^r$, we create a vector $b^{w} \in \{0,1\}^{2^\ell}$, such that $b^{w}_i = 1$ if Bob sends the message $m_i$ given advice $\psi$ and randomness $w$, and $=0$ otherwise. Let the concatenation of all these $b^w$'s be $b_\psi$. Then $B_\psi$ is the set of all these $b_\psi$'s for $b \in B$.
	\end{itemize}
	
	We can see that for $a \in A$ and $b \in B$, $a_\psi \cdot b_\psi$ is precisely the number of random coins leading Alice to accept the message from Bob given advice $\psi$ when Alice and Bob holds $a$ and $b$ correspondingly. Therefore, let $T = 2^{r} = c \log n \cdot \tau^{c_1}$, from the properties of the protocol $\Pi$, we can see that:
	
	\begin{itemize}
		\item If there is $a \in A$ and $b \in B$ such that $a \cdot b = 0$, then there is $\psi \in \{0,1\}^{\eps \cdot \log n}$ such that $a_\psi \cdot b_\psi \ge T$.
		
		\item Otherwise, for all $a \in A$, $b \in B$ and advice $\psi \{0,1\}^{\eps \cdot \log n}$, $a_\psi \cdot b_\psi \le T / \tau$.
	\end{itemize}
	
	And this completes the proof.
	
\end{proof}
\bibliographystyle{alpha}
\bibliography{team}
	
\end{document}